\def\showauthornotes{1}
\def\showdraftbox{0}
\definecolor{darkred}{rgb}{0.5,0,0}
\definecolor{darkgreen}{rgb}{0,0.35,0}
\definecolor{darkblue}{rgb}{0,0,0.55}
\newcommand{\Authornote}[2]{{\sf\small\color{red}{[#1: #2]}}}
\newcommand{\Authorcomment}[2]{{\sf \small\color{gray}{[#1: #2]}}}
\newcommand{\Authorfnote}[2]{\footnote{\color{red}{#1: #2}}}
\newcommand{\Authornote}[2]{}
\newcommand{\Authorcomment}[2]{}
\newcommand{\Authorfnote}[2]{}
\newtheorem{theorem}{Theorem}[section]
\newtheorem{observation}[theorem]{Observation}
\newtheorem{definition}[theorem]{Definition}
\newtheorem{lemma}[theorem]{Lemma}
\newtheorem{remark}[theorem]{Remark}
\newtheorem{corollary}[theorem]{Corollary}
\newtheorem{claim}[theorem]{Claim}
\newtheorem{fact}[theorem]{Fact}
\def\FullBox{\hbox{\vrule width 6pt height 6pt depth 0pt}}
\def\qed{\ifmmode\qquad\FullBox\else{\unskip\nobreak\hfil
\penalty50\hskip1em\null\nobreak\hfil\FullBox
\parfillskip=0pt\finalhyphendemerits=0\endgraf}\fi}
\def\qedsketch{\ifmmode\Box\else{\unskip\nobreak\hfil
\penalty50\hskip1em\null\nobreak\hfil$\Box$
\parfillskip=0pt\finalhyphendemerits=0\endgraf}\fi}
\def\to{\rightarrow}
\def\eps{\varepsilon}
\def\epsilon{\varepsilon}
\def\eps{\epsilon}
\def\phi{\varphi}
\def\cal{\mathcal}
\newcommand{\defeq}{:=}
\newcommand{\ie}{i.e.,\xspace}
\newcommand{\etal}{et al.\xspace}
\newcommand{\mper}{\,.}
\newcommand{\mcom}{\,,}
\newcommand{\R}{{\mathbb R}}
\newcommand{\C}{{\mathbb C}}
\newcommand{\N}{{\mathbb{N}}}
\newcommand{\pmone}{\{-1,1\}\xspace}
\newcommand{\gauss}[2]{{\cal N(#1, #2)}}
\newcommand{\gaussian}[2]{{\cal N(#1, #2)}}
\let\nfrac=\nicefrac
\newcommand{\abs}[1]{\ensuremath{\left\lvert #1 \right\rvert}}
\def\norm#1{
  \@ifnextchar\bgroup%
   {\normalpnorm{#1}}%
   {\defaultnorm{#1}}%
}
\def\defaultnorm#1{%
    \renderNorm{#1}{}
}
\def\normalpnorm#1#2{%
   \@ifnextchar\bgroup%
   {\banaxnorm{#1}{#2}}%
   {\renderNorm{#2}{#1}}
}
\def\banaxnorm#1#2#3{%
    \renderNorm{#3}{#1\to#2 }
}
\def\renderNorm#1#2{%
    \@ifnextchar^%
    {\fixExponent{#1}{#2}}%
    {\ensuremath{\mathchoice%
        {\lVert #1 \rVert_{#2}}%
        {\lVert #1 \rVert_{#2}}%
        {\lVert #1 \rVert_{#2}}%
        {\lVert #1 \rVert_{#2}}}%
    }%
}
\def\fixExponent#1#2^#3{%
    \ensuremath{{\lVert #1 \rVert^{#3}_{#2}}}%
}
\def\enorm#1#2{%
   \@ifnextchar\bgroup%
   {\norm{L_{#1}}{L_{#2}}}%
   {\norm{L_{#1}}{#2}}%
}
\def\cnorm#1#2{%
   \@ifnextchar\bgroup%
   {\norm{\ell_{#1}}{\ell_{#2}}}%
   {\norm{\ell_{#1}}{#2}}%
}
\newcommand{\mydot}[2]{\ensuremath{\left\langle #1\,, #2 \right\rangle}}
\newcommand{\mysmalldot}[2]{\ensuremath{\langle #1\,, #2 \rangle}}
\newcommand{\ip}[1]{\left\langle #1 \right\rangle}
\def\bfg{{\bf g}}
\newcommand{\Esymb}{\mathbb{E}}
\def\Ex#1{%
    \ProbabilityRender{\Esymb}{#1}%
}
\def\ProbabilityRender#1#2{
  \@ifnextchar\bgroup%
  {\renderwithdist{#1}{#2}}
   {\singlervrender{#1}{#2}}
}
\def\singlervrender#1#2{%
   \ensuremath{\mathchoice
       {{#1}\left[ #2 \right]}
       {{#1}[ #2 ]}
       {{#1}[ #2 ]}
       {{#1}[ #2 ]}
   }
}
\def\renderwithdist#1#2#3{%
   \@ifnextchar\bgroup
   {\superfancyrender{#1}{#2}{#3}}
   {\ensuremath{\mathchoice
      {\underset{#2}{#1}\left[ #3 \right]}
      {{#1}_{#2}[ #3 ]}
      {{#1}_{#2}[ #3 ]}
      {{#1}_{#2}[ #3 ]}
     }
   }
}
\def\superfancyrender#1#2#3#4#5{
   \ensuremath{\mathchoice
      {\underset{#1}{{#1}}\left#4 #3 \right#5}
      {{#1}_{#2}#4 #3 #5}
      {{#1}_{#2}#4 #3 #5}
      {{#1}_{#2}#4 #3 #5}
   }
}
\newfont{\inhead}{eufm10 scaled\magstep1}
\newcommand{\deffont}{\sf}
\newcommand{\defn}[1]{{\deffont #1}}
\newcommand{\calF}{{\cal F}}
\DeclareMathOperator{\sgn}{\operatorname{sgn}}
\DeclareMathOperator*{\argmax}{\arg\!\max}
\DeclareMathOperator{\diag}{\operatorname{diag}}
\newcommand{\ee}{\ensuremath{\mathrm e}}
\newcommand{\1}[1]{\mathds{1}\left[#1\right]}
\newcommand{\classfont}[1]{\textsf{#1}}
\newcommand{\NP}{\classfont{NP}\xspace}
\newcommand{\onorm}[2]{\ensuremath{#1{\to}#2} norm}
\newcommand{\inparen}[1]{\left(#1\right)}             
\newcommand{\inbraces}[1]{\left\{#1\right\}}           
\newcommand{\insquare}[1]{\left[#1\right]}             
\newcommand{\Holder}{H\"{o}lder\xspace}
\newcommand{\Odonnell}{O'Donnell\xspace}
\newcommand{\Briet}{Bri\"et\xspace}          
\newcommand{\Kwapien}{Kwapie\'n\xspace}
\newcommand{\fplain}[3]{\widetilde{f}_{#1,\, #2}(#3)}
\newcommand{\fin}[3]{\widetilde{f}^{-1}_{#1,\, #2}(#3)}
\newcommand{\alfin}[2]{\widetilde{f}^{-1}_{#1,\, #2}}
\newcommand{\alfinabs}[2]{\widetilde{h}_{#1,\, #2}}
\newcommand{\hin}[3]{\widetilde{h}^{-1}_{#1,\, #2}(#3)}
\newcommand{\nf}[3]{f_{#1,\, #2}(#3)}
\newcommand{\alnf}[2]{f_{#1,\, #2}}
\newcommand{\nfext}[3]{f^{+}_{#1,\, #2}(#3)}
\newcommand{\nfin}[3]{f^{-1}_{#1,\, #2}(#3)}
\newcommand{\alnfin}[2]{f^{-1}_{#1,\, #2}}
\newcommand{\nh}[3]{h_{#1,\, #2}(#3)}
\newcommand{\alnh}[2]{h_{#1,\, #2}}
\newcommand{\nhin}[3]{h^{-1}_{#1,\, #2}(#3)}
\newcommand{\absolute}[1]{\mathrm{abs}\inparen{#1}}
\newcommand{\kfinc}[1]{\widetilde{f}^{-1}_{#1}}
\newcommand{\ngc}{f^{\,-1}_{k}}
\newcommand{\fa}[1]{\widetilde{f}^{\,(a)}(#1)}
\newcommand{\fb}[1]{\widetilde{f}^{\,(b)}(#1)}
\newcommand{\fwild}[2]{\widetilde{f}^{\,(#1)}(#2)}
\newcommand{\alfa}{\widetilde{f}^{\,(a)}}
\newcommand{\alfb}{\widetilde{f}^{\,(b)}}
\newcommand{\alfwild}[1]{\widetilde{f}^{\,(#1)}}
\newcommand{\hfac}[1]{\widehat{f}^{\,\,(a)}_{#1}}
\newcommand{\hfbc}[1]{\widehat{f}^{\,\,(b)}_{#1}}
\newcommand{\hfwildc}[2]{\widehat{f}^{\,\,(#1)}_{#2}}
\newcommand{\intfull}[1]{\mathrm{I}(#1)}
\newcommand{\betaterm}{\mathrm{B}\inparen{\frac{1-b}{2},1+\frac{b}{2}}}
\newcommand{\confHG}{{}_{1}F_{1}}
\newcommand{\hypergeometric}{{}_{2}F_{1}}
\newcommand{\hgp}[1]{#1\cdot \hypergeometric\!\inparen{\frac{1-a}{2},\frac{1-b}{2}\,;\,\frac{3}{2}\,;\,{#1}^2}}
\newcommand{\RisingFactorial}[1]{(#1)_{k}}
\newcommand{\tildeU}{\widetilde{U}}
\newcommand{\tildeV}{\widetilde{V}}
\newcommand{\barx}{\overline{x}}
\newcommand{\bary}{\overline{y}}
\newcommand{\tildex}{\widetilde{x}}
\newcommand{\tildey}{\widetilde{y}}
\newcommand{\hatU}{\widehat{U}}
\newcommand{\hatV}{\widehat{V}}
\newcommand{\seriesLeft}[2]{S_L(#1,#2)}
\newcommand{\seriesRight}[2]{S_R(#1,#2)}
\newcommand{\gvec}{\bfg}
\newcommand{\bfX}{\mathbf{X}}
\newcommand{\holderdual}[2]{\Psi_{\!#1}(#2)}
\newcommand{\id}{\mathrm{I}}
\newcommand{\CP}[1]{\mathsf{CP}(#1)}
\newcommand{\CPD}[1]{\mathrm{DP}(#1)}
\newcommand{\HD}{\emph{\Holder Dual Rounding}~}
\newcommand{\Sym}{\mathbb{S}}
\newcommand{\noise}[1]{T_{\rho} \,{#1}}
\newcommand{\factorConst}[1]{\Phi(#1)}
\newcommand{\threeFactorConst}[1]{\Phi_{3}(#1)}
\newcommand{\factorSpConst}[2]{\Phi(#1,#2)}
\newcommand{\threeFactorSpConst}[2]{\Phi_{3}(#1,#2)}
\newcommand{\sqnormX}{\mathcal{F}_{X}}
\newcommand{\sqnormY}{\mathcal{F}_{Y}}
\newcommand{\sqrtnormX}{\sqrt{\mathcal{F}}_{X}}
\newcommand{\sqrtnormY}{\sqrt{\mathcal{F}}_{Y}}
\newcommand{\bbX}{\mathbb{X}}
\newcommand{\bbY}{\mathbb{Y}}
\newcommand{\bbW}{\mathbb{W}}
\newcommand{\bars}{\overline{s}}
\newcommand{\bart}{\overline{t}}
\newcommand{\Diag}[1]{D_{#1}}
\newcommand{\Ball}[1]{\mathrm{Ball}(#1)}
\newcommand{\qedforce}{\tag*{$\blacksquare$}}
\begin{document}

\title{Approximating Operator Norms via Generalized Krivine Rounding
\footnote{An extended abstract of this work (without proofs and details) has been published in the conference proceedings of 
the 2019~~Symposium on Discrete Algorithms}
}

\author[1]{
~~~~Vijay Bhattiprolu\thanks{Supported by NSF CCF-1422045 and CCF-1526092.  {\tt
vpb@cs.cmu.edu}. Part of the work was done while visiting UC Berkeley and CMSA, Harvard.} 
}
\author[2]{ 
~~Mrinalkanti Ghosh\thanks{Supported by NSF CCF-1254044 {\tt mkghosh@ttic.edu}} 
}
\author[1]{
~~Venkatesan Guruswami\thanks{Supported in part by NSF grant CCF-1526092. {\tt guruswami@cmu.edu}. 
Part of the work was done while visiting CMSA, Harvard. }
}

\author[3]{
$\qquad$ Euiwoong Lee\thanks{Supported by the Simons Institute for the Theory of Computing. 
{\tt euiwoong@cims.nyu.edu}} 
}
\author[2]{
~Madhur Tulsiani \thanks{Supported by NSF CCF-1254044 {\tt madhurt@ttic.edu}} 
}

\affil[1]{Carnegie Mellon University}
\affil[2]{Toyota Technological Institute Chicago}
\affil[3]{New York University}

\setcounter{page}{0}
\date{}

\maketitle

\begin{abstract}

We consider the $(\ell_p,\ell_r)$-Grothendieck problem, which seeks to
maximize the bilinear form $y^T A x$ for an input matrix $A \in
{\mathbb R}^{m \times n}$ over vectors $x,y$ with
$\|x\|_p=\|y\|_r=1$. The problem is equivalent to computing the $p \to
r^\ast$ operator norm of $A$, where $\ell_{r^*}$ is the dual norm to
$\ell_r$. The case $p=r=\infty$ corresponds to the classical
Grothendieck problem. Our main result is an algorithm
for arbitrary $p,r \ge 2$ with approximation ratio 
$(1+\epsilon_0)/(\sinh^{-1}(1)\cdot \gamma_{p^\ast} \,\gamma_{r^\ast})$ for
some fixed $\epsilon_0 \le 0.00863$. Here $\gamma_t$ denotes the $t$'th
norm of the standard Gaussian. Comparing this with Krivine's approximation ratio 
$(\pi/2)/\sinh^{-1}(1)$ for the original Grothendieck problem, our guarantee is off from the best known
hardness factor of $(\gamma_{p^\ast} \gamma_{r^\ast})^{-1}$ for the
problem by a factor similar to Krivine's defect (up to the constant $(1+\epsilon_0)$).
\smallskip

Our approximation follows by bounding the value of the
natural vector relaxation for the problem which is convex when $p,r
\ge 2$. We give a generalization of random hyperplane rounding using
H\"{o}lder-duals of Gaussian projections rather than taking the
sign. We relate the performance of this rounding to certain
hypergeometric functions, which prescribe necessary transformations to
the vector solution before the rounding is applied. Unlike
Krivine's Rounding where the relevant hypergeometric function was
$\arcsin$, we have to study a family of hypergeometric
functions. The bulk of our technical work then involves methods from
complex analysis to gain detailed information about the Taylor series
coefficients of the inverses of these hypergeometric functions, which
then dictate our approximation factor. 
\smallskip

Our result also implies improved bounds for ``factorization through $\ell_{2}^{n}$'' of operators 
from $\ell_{p}^{n}$ to $\ell_{q}^{m}$ (when $p\geq 2 \geq q$), and our work
provides modest supplementary evidence for an intriguing parallel between factorizability, and
constant-factor approximability.

\end{abstract}

\thispagestyle{empty}
\newpage

\tableofcontents
\thispagestyle{empty}
\newpage

\pagenumbering{arabic}

\setcounter{page}{1}

\section{Introduction}
We consider the problem of finding the \onorm{p}{q} of a given matrix $A \in \R^{m \times n}$,
which is defined as 
\[
\norm{p}{q}{A} ~\defeq~ \max_{x \in \R^n \setminus \{0\}} \frac{\norm{q}{Ax}}{\norm{p}{x}} \mper 
\]
The quantity $\norm{p}{q}{A}$ is a natural generalization of the well-studied spectral norm
($p=q=2$) and computes the maximum distortion (stretch) of the operator $A$ from the normed space
$\ell_p^n$ to $\ell_q^m$.
The case when $p = \infty$ and $q = 1$ is the well known Grothendieck problem \cite{KN12, Pisier12},
where the goal is to maximize $\ip{y, Ax}$ subject to $\norm{\infty}{y}, \norm{\infty}{x} \leq 1$. In
fact, via simple duality arguments, the general problem computing $\norm{p}{q}{A}$ can be seen to be
equivalent to the following variant of the Grothendieck problem
\[
    \norm{p}{q}{A}
    ~=~
    \max_{\substack{\norm{p}{x} \leq 1\\\norm{q^*}{y} \leq 1}}\ip{y, Ax}
    ~=~
    \norm{q^*}{p^*}{A^T} 
    \mcom
\]
where $p^*, q^*$ denote the dual norms of $p$ and $q$, satisfying $1/p + 1/p^* = 
1/q + 1/q^* = 1$. The above quantity is also known as the injective tensor norm of $A$ where $A$ is
interpreted as an element of  the space $\ell_{q}^{m}\otimes \ell_{p^*}^{n}$.

In this work, we consider the case of $p \geq q$, where the problem is known to admit good 
approximations when $2 \in [q,p]$, and is hard otherwise. Determining the right constants in these
approximations when $2 \in [q,p]$ has been of considerable interest in the analysis and optimization
community.

For the case of \onorm{\infty}{1}, Grothendieck's theorem \cite{Grothendieck56} shows that the
integrality gap of a semidefinite programming (SDP) relaxation is bounded by a constant, and the
(unknown) optimal value is now called the Grothendieck constant $K_G$. Krivine \cite{Krivine77} proved
an upper bound of $\pi/(2\ln(1+\sqrt{2})) = 1.782 \ldots$ on $K_G$, and it was later shown by
Braverman \etal \cite{BMMN13} that $K_G$ is strictly smaller than this bound. The best known lower bound 
on $K_G$ is about $1.676$, due to (an unpublished manuscript of) Reeds \cite{Reeds91} (see also
\cite{KO09} for a proof).

A very relevant work of Nestereov \cite{Nesterov98} proves an upper bound of $K_G$ on the
approximation factor for \onorm{p}{q} for any $p \geq 2 \geq q$ (although the bound stated there is
slightly weaker - see \cref{KG:worst:case} for a short proof).
A later work of Steinberg \cite{Steinberg05} also gave an upper bound of 
$\min\inbraces{\gamma_p/\gamma_q, \gamma_{q^*}/\gamma_{p^*}}$, where $\gamma_p$ denotes
$p^{th}$  norm of a standard normal random variable (\ie the $p$-th root of the $p$-th Gaussian moment). 

On the hardness side, \Briet, Regev and Saket \cite{BRS15} showed NP-hardness of $\pi/2$ for the
\onorm{\infty}{1} (in fact it even holds for the PSD-Grothendieck problem),  strengthening a hardness 
result of Khot and Naor based on the Unique Games Conjecture (UGC) \cite{KN09} (which also improves on 
the previously known NP-Hardness due to \cite{AN04} via Max-Cut). Assuming UGC, a hardness result matching Reeds' lower bound was proved by Khot and
\Odonnell \cite{KO09}, and hardness of approximating within $K_G$ was proved by Raghavendra and
Steurer \cite{RS09}. In preceding work \cite{BGGLT18a}, the authors proved NP-hardness of
approximating \onorm{p}{q} within any factor better than $1/(\gamma_{p^*} \cdot \gamma_q)$, for any
$p \geq 2 \geq q$. Stronger hardness results are known and in particular the problem admits no
constant approximation, for the cases not considered in this paper \ie when $p \leq q$ or $2 \notin
[q,p]$. We refer the interested reader to a detailed discussion in \cite{BGGLT18a}. 

\subsection{The Search For Optimal Constants and Optimal Algorithms}
The goal of determining the right approximation ratio for these problems is closely related to the
question of finding the optimal rounding algorithms (\ie algorithms that map the output of a convex 
programming relaxation to a feasible solution of the original optimization problem). 
The well known Hyperplane rounding procedure is widely applicable in combinatorial optimization 
because such problems can mostly be cast as optimization problems over the hypercube, and the 
output of hyperplane rounding is always a vertex of the hypercube. In a similar manner, many 
rounding algorithms tend to find usage in several optimization problems. It is thus a natural goal to 
develop such rounding algorithms (and the tools to analyze them) when the feasible domain 
is a more general convex set and in our case, the $\ell_p$ unit ball. \medskip 

For the Grothendieck problem, the goal is to
find $y \in \R^m$ and $x \in \R^n$ with $\norm{\infty}{y}, \norm{\infty}{x} \leq 1$, and one considers
the following semidefinite relaxation:
\begin{align*}
    \mbox{maximize} \quad&~~\sum_{i,j} A_{i,j}\cdot \mysmalldot{u^i}{v^j} \quad  \text{s.t.} \\
\mbox{subject to}    \quad&~~\norm{2}{u^i} \leq 1, \norm{2}{v^j} \leq 1 & \forall i\in [m], j\in [n]\\
    &~~ u^i, v^j\in \R^{m+n} & \forall i\in [m], j\in [n]
\end{align*}
By the bilinear nature of the problem above, it is clear that the optimal $x, y$ can be taken to
have entries in $\pmone$. A bound on the approximation ratio\footnote{Since we will be dealing with problems where the optimal solution may not be integral, we
  will use the term ``approximation ratio'' instead of ``integrality gap''.}
 of the above program is then obtained by
designing a good ``rounding'' algorithm which maps the vectors $u^i, v^j$ to values in
$\pmone$. Krivine's analysis \cite{Krivine77} corresponds to a rounding algorithm which considers a
random vector $\gvec \sim \gaussian{0}{I_{m+n}}$ and rounds to $x, y$ defined as
\[
y_i ~:=~ \sgn\inparen{\ip{\phi(u^i), \gvec}} \qquad \text{and} \qquad 
x_j 
~:=~
\sgn\inparen{\ip{\psi(v^j), \gvec}} \mcom
\]
for some appropriately chosen transformations $\phi$ and $\psi$. This gives the following 
upper bound on the approximation ratio of the above relaxation, and hence on the value of the
Grothendieck constant $K_G$:
\[
K_G 
~\leq~ 
\frac{1}{\sinh^{-1}(1)} \cdot \frac{\pi}{2} 
~=~
\frac{1}{\ln(1+\sqrt{2})} \cdot \frac{\pi}{2} \mper
\]
Braverman \etal \cite{BMMN13} show that the above bound can be strictly improved (by a very small
amount) using a two dimensional analogue of the above algorithm, where the value $y_i$ is taken to be
a function of the two dimensional projection $(\langle \phi(u^i), \gvec_1 \rangle, \langle \phi(u^i), \gvec_2
\rangle)$ for independent Gaussian vectors $\gvec_1, \gvec_2 \in \R^{m+n}$ (and similarly for
$x$). Naor and Regev \cite{NR14} show that such schemes are optimal in the sense that it is possible
to achieve an approximation ratio arbitrarily close to the true (but unknown) value of $K_G$ by
using $k$-dimensional projections for a large (constant) $k$. A similar existential result was also
proved by Raghavendra and Steurer \cite{RS09} who proved that the there exists a (slightly
different) rounding algorithm which can achieve the (unknown) approximation ratio $K_G$.

For the case of arbitrary $p \geq 2 \geq q$, Nesterov \cite{Nesterov98} considered the 
convex program in \cref{fig:convex}, denoted as $\CP{A}$, generalizing the one above.
\begin{figure}[ht]
\hrule
\vline
\begin{minipage}[t]{0.99\linewidth}
\vspace{-5 pt}
{\small
\begin{align*}
    \mbox{maximize} \quad&~~\sum_{i,j} A_{i,j}\cdot \mysmalldot{u^i}{v^j} ~~=~~  \ip{A, U V^T}\\
    \mbox{subject to}    \quad&~~\sum_{i\in [m]} \norm{2}{u^i}^{q^*} ~\leq~ 1 & \\
    &~~\sum_{j\in [n]} \norm{2}{v^j}^{p} ~\leq~ 1 & \\
    &~~ u^i, v^j\in \R^{m+n} & \forall i\in [m], j\in [n] \\
    u^i \text{ (resp. $v^j$) is the} & \text{ $i$-th (resp. $j$-th) row of $U$ (resp. $V$)} 
\end{align*}
}
\vspace{-14 pt}
\end{minipage}
\hfill\vline
\hrule
\caption{The relaxation $\CP{A}$ for approximating  \onorm{p}{q} of a matrix $A \in \R^{m \times n}$.} 
\label{fig:convex}
\end{figure}
Note that since $q^* \geq 2$ and $p \geq 2$, the above program is convex in the entries of the Gram
matrix of the vectors $\inbraces{u^i}_{i \in [m]}\cup  \inbraces{v^j}_{j \in [n]}$. Although the stated
bound in \cite{Nesterov98} is slightly weaker (as it is proved for a larger class of problems), the
approximation ratio of the above relaxation can be shown to be bounded by $K_G$. By using the
Krivine rounding scheme of considering the sign of a random Gaussian projection (aka random
hyperplane rounding) one can show that Krivine's upper bound on $K_G$ still applies to the above
problem. 

Motivated by applications to robust optimization, Steinberg \cite{Steinberg05} considered the dual
of (a variant of) the above relaxation, and obtained an upper bound of 
$\min\inbraces{\gamma_p/\gamma_q, \gamma_{q^*}/\gamma_{p^*}}$
on the approximation factor.
Note that while Steinberg's bound is better (approaches 1) 
as $p$ and $q$ approach 2, it is unbounded when $p, q^* \to \infty$ (as in the Grothendieck
problem).

Based on the inapproximability result of factor $1/(\gamma_{p^*} \cdot \gamma_q)$ obtained in 
preceding work by the authors \cite{BGGLT18a}, it is 
natural to ask if this is the ``right form'' of the approximation ratio. Indeed, this ratio is $\pi/2$
when $p^* = q = 1$, which is the ratio obtained by Krivine's rounding scheme, up to a factor of
$\ln(1+\sqrt{2})$. We extend Krivine's result to all $p \geq 2 \geq q$ as below.
\begin{theorem}
There exists a fixed constant $\eps_0 \leq 0.00863$ such that for all $p \geq 2 \geq q$, the
approximation ratio of the convex relaxation $\CP{A}$ is upper bounded by
\[
    \frac{1+\eps_0}{\sinh^{-1}(1)} \cdot \frac{1}{\gamma_{p^*} \cdot \gamma_q} 
~=~
    \frac{1+\eps_0}{\ln(1+\sqrt{2})} \cdot \frac{1}{\gamma_{p^*} \cdot \gamma_q}
\mper
\]
\end{theorem}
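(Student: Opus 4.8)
The plan is to generalize Krivine's rounding scheme by replacing the sign function with an appropriate H\"older-dual of a Gaussian projection, and to analyze the resulting scheme through the Taylor coefficients of an inverse hypergeometric function. Concretely, given an optimal (or near-optimal) solution $\{u^i\},\{v^j\}$ to $\CP{A}$, I would first apply pointwise nonlinear transformations $\phi, \psi$ to the vectors (to be determined), producing new vectors $\{\phi(u^i)\},\{\psi(v^j)\}$, and then draw $\gvec \sim \gaussian{0}{I}$ and round by setting $y_i$ proportional to $\sgn(\ip{\phi(u^i),\gvec})\cdot \abs{\ip{\phi(u^i),\gvec}}^{q-1}$ (the H\"older-dual exponent for $\ell_q$), and symmetrically $x_j$ proportional to $\sgn(\ip{\psi(v^j),\gvec})\cdot\abs{\ip{\psi(v^j),\gvec}}^{p^*-1}$, with normalization so that $\Ex{\norm{q^*}{y}^{q^*}}$ and $\Ex{\norm{p}{x}^{p}}$ meet the constraints. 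The key computation is that $\Ex{y_i x_j}$, as a function of $\rho_{ij} := \mysmalldot{\phi(u^i)}{\psi(v^j)}$ (assuming unit-norm transformed vectors), equals $c\cdot f(\rho_{ij})$ for a fixed odd real-analytic function $f$ expressible via a Gaussian/Hermite calculation; this is where the hypergeometric functions $\hypergeometric$ appear in place of Krivine's $\arcsin$.

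The next step is the ``Krivine trick'': choose the transformations $\phi, \psi$ so that $\mysmalldot{\phi(u^i)}{\psi(v^j)} = f^{-1}(c' \cdot \mysmalldot{u^i}{v^j})$ holds for a suitable constant $c'$, which is possible (for vectors of the appropriate normalization) precisely when $f^{-1}$ has a power series $\sum_k a_k t^k$ with $\sum_k \abs{a_k} \le$ (the relevant radius) — this is the standard argument that a function with absolutely summable Taylor coefficients composed on inner products is realizable by inner products of transformed vectors in a larger Hilbert space, after a tensoring/direct-sum construction. With this choice, $\Ex{\ip{y,Ax}} = c c' \cdot \ip{A, UV^T} = c c' \cdot \CP{A}$, so the approximation ratio is governed by $1/(cc')$. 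Tracking constants, $c = \gamma_q\gamma_{p^*}$ (up to the normalization factors coming from the H\"older-dual moments) and $c'$ is the reciprocal of $\sum_k \abs{a_k}$ (or of the radius of convergence-type quantity) for $f^{-1}$; in Krivine's case $f = \sinh$-type and this gives exactly $\sinh^{-1}(1)$. The claimed bound $\tfrac{1+\eps_0}{\sinh^{-1}(1)}\cdot\tfrac{1}{\gamma_{p^*}\gamma_q}$ then amounts to showing that for every $p \ge 2 \ge q$ the relevant constant $c'$ degrades from $\sinh^{-1}(1)$ by at most a multiplicative $(1+\eps_0)$ with $\eps_0 \le 0.00863$.

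The main obstacle — and the bulk of the technical work — is controlling the Taylor coefficients of $f^{-1}$ uniformly over the whole family of functions $f = \alnf{p^*}{q}$ parametrized by $p,r \ge 2$. Unlike Krivine's setting, where $f^{-1} = \sinh$ has explicit, manifestly positive coefficients, here $f$ is an inverse of a hypergeometric function and $f^{-1}$ must be understood indirectly. I would attack this with complex analysis: locate the dominant singularities of $f$ (and hence the radius of convergence of $f^{-1}$) via the analytic continuation of $\hypergeometric$, establish sign/positivity or near-positivity of the coefficients $a_k$ of $f^{-1}$ so that $\sum_k \abs{a_k}$ can be evaluated as $f^{-1}$ at the radius (rather than merely bounded), and then estimate the resulting value — most naturally by comparing $f$ to the Krivine function $\sinh$ and showing the discrepancy is small and bounded by a computable worst case over $(p,r)$, yielding the explicit $\eps_0$. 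Handling the boundary/limiting regimes (e.g. $p^*,q \to \infty$, recovering Grothendieck, versus $p,q\to 2$) and ensuring the estimates are uniform is where the delicacy lies; the rest (feasibility of the rounded solution, normalization constants, the Hilbert-space realizability of the transformations) is routine once $f$ and its inverse are understood.
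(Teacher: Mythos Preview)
Your overall framework---H\"older-dual rounding applied to Krivine-style transformed vectors, with the relevant correlation function $f$ identified as a hypergeometric $\rho\cdot\hypergeometric((1-a)/2,(1-b)/2;3/2;\rho^2)$, and the approximation ratio governed by $\absolute{f^{-1}}^{-1}(1)$---matches the paper exactly, and the routine parts (feasibility, normalization, tensor-realizability of the transformation) are indeed routine.

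The gap is in the last paragraph, where you propose to ``establish sign/positivity or near-positivity of the coefficients $a_k$ of $f^{-1}$ so that $\sum_k |a_k|$ can be evaluated as $f^{-1}$ at the radius.'' The paper explicitly points out that this does \emph{not} work: unlike Krivine's case ($f^{-1}=\sin$, alternating signs) or Haagerup's case (all coefficients after the first negative), here the sign pattern of the coefficients of $f^{-1}$ is genuinely irregular---empirically it alternates up to some index depending on $a,b$ and then becomes eventually one-signed, with no clean relation between $\absolute{f^{-1}}$ and $f^{-1}$. Neither are the coefficients monotone in $k$ or in the parameters $a,b$. So the direct sign-structure route you sketch would fail.

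What the paper actually does is more indirect and has three ingredients you are missing. First, it exploits monotonicity of $f_{a,b}$ \emph{in the parameters} $a,b$ (not just in $\rho$): this yields $f_{a,b}^{-1}(\rho)\ge f_{0,0}^{-1}(\rho)=\sin\rho$, which is the real mechanism for comparing to the Krivine case. Second, it identifies mild sufficient conditions on the coefficients $g_k$ of $f^{-1}$---namely $g_k\le 1/k!$ when $k\equiv 1\pmod 4$ and $g_k\le 0$ when $k\equiv 3\pmod 4$---that, together with the monotonicity, force $\absolute{f^{-1}}(\rho)\le -f^{-1}(\rho)+\sin\rho+\sinh\rho+\text{(tail)}\le 1+\text{(tail)}$ at $\rho=\sinh^{-1}(1-\text{tail})$. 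Third, since these conditions cannot be proved for all $k$, they are \emph{verified by computer} for $k\le 29$, and the tail $\sum_{k\ge 31}|g_k|\rho^k$ is bounded using a contour-integral argument (expanding the inversion-formula contour out to radius $\approx 6$ and applying the ML-inequality) that gives exponential decay $|g_k|\le C/(k(1+\eps)^k)$. The numerical constant $\eps_0=0.00863$ comes from plugging these explicit tail bounds back in. Your proposal does mention complex analysis and comparison to the Krivine function, but the specific triad---parameter monotonicity, the (C1)/(C2) conditions plus finite computer check, and the contour-integral tail decay---is the actual content, and without it the argument does not close.
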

\begin{figure}
\begin{center}
\includegraphics[width=3.8in]{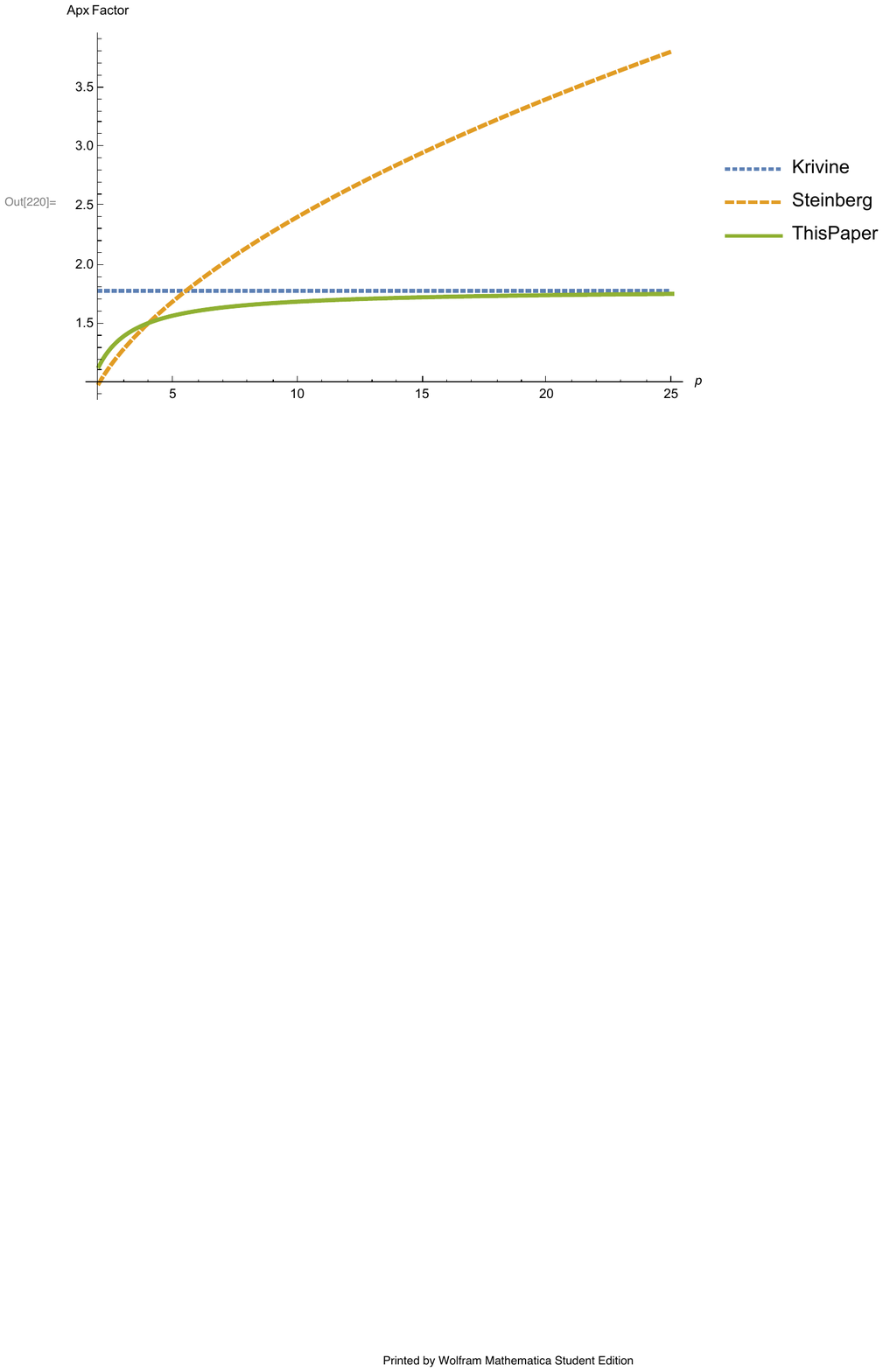}
\end{center}
\vspace{-15 pt}
\caption{{\footnotesize A comparison of the bounds for approximating \onorm{p}{p^*} obtained from Krivine's rounding
for $K_G$, Steinberg's analysis, and our bound. While our analysis yields an improved bound for $4\leq p\leq 66$, we believe that the rounding algorithm achieves an improved bound for all $p$.}}
\label{fig:bounds-comparison}
\end{figure}
Perhaps more interestingly, the above theorem is proved via a generalization of hyperplane rounding,
which we believe may be of independent interest. Indeed, for a given collection of vectors $w^{1},
\ldots, w^{m}$ considered as rows of a matrix $W$,  Gaussian hyperplane rounding corresponds to
taking the ``rounded'' solution $y$ to be the 
\[
    y 
    ~:=~ 
    \argmax_{\norm{\infty}{y'} \leq 1} \ip{y', W\gvec} 
    ~=~ 
    \inparen{\sgn\inparen{\ip{w^i,\gvec}}}_{i \in [m]}\mper
\]
We consider the natural generalization to (say) $\ell_r$ norms, given by
\[
    y 
    ~:=~ \argmax_{\norm{r}{y'} \leq 1} \ip{y', W\gvec} 
    ~=~ \inparen{\frac{\sgn\inparen{\ip{w^i, \gvec}} \cdot 
    \abs{\langle w^i, \gvec \rangle}^{r^*-1}}{\norm{r^*}{W\gvec}^{r^*-1}}}_{i \in [m]}\mper
\]
We refer to $y$ as the ``\Holder dual'' of $W\gvec$, since the above rounding can be obtained by
viewing $W\gvec$ as lying in the dual ($\ell_{r^*}$) ball, and finding the $y$ for which \Holder's
inequality is tight. Indeed, in the above language, Nesterov's rounding corresponds to considering
the $\ell_{\infty}$ ball (hyperplane rounding). While Steinberg used a somewhat different
relaxation, the rounding there can be obtained by viewing $W\gvec$ as lying in the primal $(\ell_r)$
ball instead of the dual one.
In case of hyperplane rounding, the analysis is motivated by the identity that for two unit vectors
$u$ and $v$, we have
\[
\Ex{\gvec}{\sgn(\ip{\gvec,u}) \cdot \sgn(\ip{\gvec,v})} ~=~ \frac{2}{\pi} \cdot \sin^{-1}(\ip{u,v}) \mper
\]
We prove the appropriate extension of this identity to $\ell_r$ balls (and analyze the functions
arising there) which may also be of interest for other optimization problems over $\ell_r$ balls.

\subsection{Proof overview}
As discussed above, we consider Nesterov's convex relaxation and generalize the hyperplane rounding
scheme using ``\Holder duals'' of the Gaussian projections, instead of taking the sign. 
As in the Krivine rounding scheme, this rounding is applied to transformations  of the SDP
solutions. The nature of these transformations depends on how the rounding procedure changes the
correlation between two vectors. Let $u,v \in \R^{N}$ be two unit vectors with $\ip{u,v} =
\rho$. Then, for $\gvec \sim \gauss{0}{I_{N}}$, $\ip{\gvec, u}$ and $\ip{\gvec, v}$ are
$\rho$-correlated Gaussian random variables. Hyperplane rounding then gives $\pm 1$ valued random
variables whose correlation is given by
\[
\Ex{\bfg_1 \sim_\rho \,\bfg_2}{\sgn(\bfg_1) \cdot \sgn(\bfg_2)}
    ~=~ 
\frac{2}{\pi} \cdot \sin^{-1}(\rho) \mper
\]
The transformations $\phi$ and $\psi$ (to be applied to the vectors $u$ and $v$) in Krivine's scheme
are then chosen depending on the Taylor series for the $\sin$ function, which is the inverse of
function computed on the correlation. For the case of \Holder-dual rounding, we prove the following
generalization of the above identity
\[
\Ex{\bfg_1 \sim_\rho \,\bfg_2}{\sgn(\bfg_1) \abs{\gvec_1}^{q-1} \cdot \sgn(\bfg_2)
  \abs{\gvec_2}^{p^*-1}}
~=~
\gamma_{q}^{q} \cdot  \gamma_{p^*}^{p^*} \cdot 
\rho \cdot \hypergeometric\!\inparen{1-\frac{q}{2},1-\frac{p^*}{2}\,;\,\frac{3}{2}\,;\,\rho^2} \mcom
\]
where $\hypergeometric$ denotes a hypergeometric function with the specified parameters. The proof
of the above identity combines simple tools from Hermite analysis with known integral representations from 
the theory of special functions, and may be useful in other applications of the rounding procedure.

Note that in the Grothendieck case, we have
$\gamma_{p^*}^{p^*} = \gamma_{q}^{q} = \sqrt{2/\pi}$, and the remaining part is simply the
$\sin^{-1}$ function.
In the Krivine rounding scheme, the transformations $\phi$ and $\psi$ are chosen
to satisfy $(2/\pi) \cdot \sin^{-1}\inparen{\ip{\phi(u), \psi(v)}} = c \cdot \ip{u,v}$, where the
constant $c$ then governs the approximation ratio. The transformations $\phi(u)$ and $\psi(v)$ taken
to be of the form $\phi(u) = \oplus_{i=1}^{\infty} a_i \cdot u^{\otimes i}$ such that
\[
\ip{\phi(u), \psi(v)} ~=~ c' \cdot \sin\inparen{\ip{u,v}} \qquad \text{and} \qquad \norm{2}{\phi(u)}
= \norm{\psi(v)} = 1 \mper
\]
If $f$ represents (a normalized version of) the function of $\rho$ occurring in the identity above
(which is $\sin^{-1}$ for hyperplane rounding), then the approximation ratio is governed by the
function $h$ obtained by replacing every Taylor coefficient of $f^{-1}$ by its absolute value. While
$f^{-1}$ is simply the $\sin$ function (and thus $h$ is the $\sinh$ function) in the  Grothendieck
problem, no closed-form expressions are available for general $p$ and $q$.

The task of understanding the approximation ratio thus reduces to the analytic task of understanding
the \emph{family} of the functions $h$ obtained for different values of $p$ and $q$. Concretely, the
approximation ratio is given by the value $1/(h^{-1}(1)\cdot \gamma_{q}\,\gamma_{p^*})$. 
At a high level, we prove bounds on $h^{-1}(1)$ by establishing properties of the Taylor coefficients of 
the family of functions $f^{-1}$, \ie the family given by 
\[
    \inbraces{
    f^{-1}
    ~~|~~
    f(\rho) = \rho\cdot \hypergeometric\!\inparen{a_1,b_1\,;\,3/2\,;\,\rho^2}
    ~,~
    a_1,b_1\in [0,1/2]
    } \mper
\] 
While in the cases considered earlier, the functions $h$ are easy to determine in terms of $f^{-1}$ 
via succinct formulae~\cite{Krivine77, Haagerup81, AN04} or can be truncated after the cubic term
\cite{NR14}, neither of these are true for the family of functions we consider. 
Hypergeometric functions are a rich and expressive class of functions,  capturing many of the 
special functions appearing in Mathematical Physics and various ensembles of orthogonal polynomials. 
Due to this expressive power, the set of inverses is not well understood. 
In particular, while the coefficients of $f$ are monotone in $p$ and $q$, this is not true for
$f^{-1}$. Moreover, the rates of decay of the coefficients may range from inverse polynomial to
super-exponential.
We analyze the coefficients of $f^{-1}$ using complex-analytic methods inspired by (but quite
different from) the work of Haagerup \cite{Haagerup81} on bounding the  complex Grothendieck
constant. 
The key technical challenge in our work is in \emph{arguing systematically about a family 
of inverse hypergeometric functions} which we address by developing methods to estimate the 
values of a family of contour integrals. 

While our method only gives a bound of the form $h^{-1}(1) \geq \sinh^{-1}(1)/(1+\eps_0)$, 
we believe this is an artifact of the analysis and the true bound should
indeed be $h^{-1}(1) \geq \sinh^{-1}(1)$.

\subsection{Relation to Factorization Theory}
Let $X, Y$ be  Banach spaces, and let $A: X \to Y$ be a continuous linear operator. As before,
the norm $\norm{X}{Y}{A}$ is defined as
\[
\norm{X}{Y}{A} ~:=~ \sup_{x \in X \setminus \{0\}} \frac{\norm{Y}{Ax}}{\norm{X}{x}} \mper
\]
The operator $A$ is said to be factorize through Hilbert space if the \defn{factorization constant}
of $A$ defined as
\[
    \factorConst{A} ~:=~ \inf_{H} \inf_{BC=A} \frac{\norm{X}{H}{C}\cdot \norm{H}{Y}{B}}{\norm{X}{Y}{A}}
\]
is bounded, where the infimum is taken over all Hilbert spaces $H$ and all operators $B: H \to Y$
and $C: X \to H$. The factorization gap for spaces $X$ and $Y$ is then defined as
$\factorSpConst{X}{Y} := \sup_A \factorConst{A}$ where the supremum runs over all continuous operators 
$A:X\to Y$. 

The theory of factorization of linear operators is a cornerstone of modern functional analysis and has also 
found many applications outside the field (see \cite{Pisier86, AK06} for more information). 
An application to theoretical computer science was found by Tropp~\cite{Tropp09} who used 
the Grothendieck factorization~\cite{Grothendieck56} to give an algorithmic version of a celebrated 
column subset selection result of Bourgain and Tzafriri~\cite{BT87}. 

As an almost immediate consequence of convex programming duality, our new algorithmic results also 
imply some improved factorization results for $\ell_p^{n},\ell_{q}^{m}$. We first state some classical 
factorization results, for which we will use $T_2(X)$ and $C_2(X)$ 
to respectively denote the Type-2 and Cotype-2 constants of $X$. We refer the interested reader 
to \cref{factorization} for a more detailed description of factorization theory as well as the relevant 
functional analysis preliminaries. 

The \Kwapien-Maurey~\cite{Kwapien72a,Maurey74} theorem states that for any pair of Banach spaces 
$X$ and $Y$
\[
\factorSpConst{X}{Y} ~\leq~ T_2(X) \cdot C_2(Y) \mper
\]
However, Grothendieck's result \cite{Grothendieck56} shows that a much better bound is possible 
in a case where $T_2(X)$ is unbounded. In particular,
\[
\factorSpConst{\ell_{\infty}^n}{\ell_{1}^{m}} ~\leq~ K_G \mcom
\]
for all $m, n \in \N$. Pisier~\cite{Pisier80} showed that if $X$ or $Y$ satisfies the approximation property 
(which is always satisfied by finite-dimensional spaces), then 
\[
\factorSpConst{X}{Y} ~\leq~  \inparen{2\cdot C_2(X^*) \cdot C_2(Y)}^{3/2} \mper
\]
We show that the approximation ratio of Nesterov's relaxation is in fact an upper bound on the
factorization gap for the spaces $\ell_p^n$ and $\ell_q^m$. Combined with our upper bound on the
integrality gap, we show an improved bound on the factorization constant, \ie
for any $p \geq 2 \geq q$ and $m,n \in \N$, we have that for $X = \ell_{p}^n$, $Y=\ell_{q}^m$ 
\[
\factorSpConst{X}{Y} ~\leq~ \frac{1+\eps_0}{\sinh^{-1}(1)} \cdot \inparen{C_2(X^*) \cdot C_2(Y)} \mcom
\]
where $\eps_0 \leq 0.00863$ as before. This improves on Pisier's bound for all $p\geq 2\geq q$, and 
for certain ranges of $(p,q)$ it also improves upon $K_G$ and the bound of \Kwapien-Maurey. 

\subsection{Approximability and Factorizability}
Let $(X_n)$ and $(Y_m)$ be sequences of Banach spaces such that $X_n$ is over the vector space $\R^n$ 
and $Y_m$ is over the vector space $\R^m$. We shall say a pair of sequences $((X_n),(Y_m))$ factorize 
if $\factorSpConst{X_n}{Y_m}$ is bounded by a constant independent of $m$ and $n$. Similarly, 
we shall say a pair of families $((X_n),(Y_m))$ are computationally approximable if there exists a polynomial 
$R(m,n)$, such that for every $m,n \in \N$, there is an algorithm with runtime $R(m,n)$ approximating 
$\norm{X_n}{Y_m}{A}$ within a constant independent of $m$ and $n$ (given an oracle for computing the norms of vectors and a separation oracle for the unit balls of the norms). 
We consider the natural question of characterizing the families of norms that 
are approximable and their connection to factorizability and Cotype. 

The pairs $(p,q)$ (assuming $p^*,q\neq \infty$) for which $(\ell_p^{n},\ell_q^{m})$ is known (resp.\ not known) to factorize, are precisely 
those pairs $(p,q)$ which are known to be computationally approximable (resp.\ inapproximable assuming hardness conjectures like $\textrm{P}\neq\NP$ and \textrm{ETH}). Moreover the Hilbertian case which 
trivially satisfies factorizability, is also known to be computationally approximable (with approximation 
factor 1). 

It is tempting to ask whether the set of computationally approximable pairs is closely related to the set 
of factorizable pairs or the pairs for which $X^*_n,Y_m$ have bounded (independent of $m,n$) Cotype-2 
constant. 
Further yet, is there a connection between the approximation factor and the factorization constant, 
or approximation factor and Cotype-2 constants (of $X_n^*$ and $Y_m$)? 
Our work gives some modest additional evidence towards such conjectures. 
Such a result would give credibility to the appealing intuitive idea of the 
approximation factor being dependent on the ``distance'' to a Hilbert space. 
%


\subsection{Notation}
For a non-negative real number $r$, we define the $r$-th Gaussian norm of a standard gaussian 
$g$ as $\gamma_r\defeq(\Ex{g\sim\gaussian{0}{1}}{\abs{g}^r})^{\nfrac{1}{r}}\mper$

Given a vector $x$, we define the $r$-norm as $\norm{r}{x}^r=\sum_{i}{\abs{x_i}^r}$ for all $r\ge 1$. For any
$r\ge 0$, we denote the dual norm by $r^*$, which satisfies the equality: $\frac{1}{r}+\frac{1}{r^*}=1$.

For $p\ge 2\ge q\ge 1$, we will use the following notation: $a\defeq p^*-1$ and $b\defeq q-1$. We note that
$a,b\in [0,1]$.

For a $m\times n$ matrix $M$ (or vector, when $n=1$). For an unitary function $f$, we define $f[M]$ to 
be the matrix $M$ with entries defined as $(f[M])_{i,j}=f(M_{i,j})$ for $i\in[m],j\in[n]$. For vectors 
$u,v\in\R^{\ell}$, we denote by $u\circ v\in\R^{\ell}$ the entry-wise/Hadamard product of $u$ and $v$. 
We denote the concatenation of two vectors $u$ and $v$ by $u\oplus v$. For a vector $u$, we 
use $\Diag{u}$ to denote the diagonal matrix with the entries of $u$ forming the diagonal, and for 
a matrix $M$ we use $\diag(M)$ to denote the vector of diagonal entries. 

For a function $f(\tau)=\sum_{k\ge 0} f_k\cdot \tau^k$ defined as a power series, we denote the function
$\absolute{f}(\tau)\defeq \sum_{k\ge 0}\abs{f_k}\cdot \tau^k$. 

\section{Analyzing the Approximation Ratio via Rounding}
We will show that $\CP{A}$ is a good approximation to $\norm{p}{q}{A}$ by using an appropriate 
generalization of Krivine's rounding procedure. Before stating the generalized procedure, we shall 
give a more detailed summary of Krivine's procedure. 

\subsection{Krivine's Rounding Procedure}
Krivine's procedure centers around the classical random hyperplane rounding. In this context, we define the 
random hyperplane rounding procedure on an input pair of matrices $U\in \R^{m\times \ell},~ 
V\in\R^{n\times \ell}$ as outputting the vectors $\sgn [U\bfg]$ and $\sgn [V\bfg]$ where $\bfg\in \R^{\ell}$ 
is a vector with i.i.d. standard Gaussian coordinates ($f[v]$ denotes entry-wise application of a scalar 
function $f$ to a vector $v$. We use the same convention for matrices.). 
The so-called Grothendieck identity states that for vectors $u,v\in \R^\ell$, 
\[
    \Ex{\sgn\!{\mysmalldot{\bfg}{u}}\cdot \sgn\!{\mysmalldot{\bfg}{v}}} 
    = 
    \frac{\sin^{-1}\!{\mysmalldot{\widehat{u}}{\widehat{v}}}}{\pi/2}
\]
where $\widehat{u}$ denotes $u/\norm{2}{u}$. This implies the following equality which we will call 
the hyperplane rounding identity: 
\begin{equation}
    \Ex{\sgn [U\bfg](\sgn [V\bfg])^T}
    = 
    \frac{\sin^{-1}[\hatU \hatV^T]}{\pi/2} \mper
\end{equation}
where for a matrix $U$, we use $\hatU$ to denote the matrix obtained by replacing the rows of $U$ 
by the corresponding unit (in $\ell_2$ norm) vectors. 
Krivine's main observation is that for any matrices $U,V$, there exist matrices 
$\phi(\hatU),\psi(\hatV)$ with unit vectors as rows, such that 
\[
    \phi(\hatU)\,\psi(\hatV)^T = \sin [(\pi/2) \cdot c\cdot \hatU\hatV^T]
\]
where $c = \sinh^{-1}(1)\cdot 2/\pi$. Taking $\hatU,\hatV$ to be the optimal solution to $\CP{A}$, 
it follows that 
\[
    \norm{\infty}{1}{A}
    \geq 
    \mydot{A}{\Ex{\sgn [\phi(\hatU)\,\bfg]~(\sgn [\psi(\hatV)\,\bfg])^T}} 
    = 
    \mysmalldot{A}{c\cdot \hatU\hatV^T}
    = 
    c\cdot \CP{A} \mper
\]
The proof of Krivine's observation follows from simulating the Taylor series of a scalar function using inner 
products. We will now describe this more concretely. 
\begin{observation}[Krivine]
    \label[observation]{simulating:taylor}
    Let $f:[-1,1]\to\R$ be a scalar function 
    satisfying $f(\rho) = \sum_{k\geq 1} f_k\, \rho^k$ for an absolutely convergent series $(f_k)$. 
    Let $\absolute{f}(\rho):= \sum_{k\geq 1} |f_k|\,\rho^{k}$ and further for vectors 
    $u,v\in\R^{\ell}$ of $\ell_2$-length at most $1$, let 
    \begin{align*}
        &\seriesLeft{f}{u} := (\sgn(f_1)\sqrt{f_1}\cdot u)\oplus 
        (\sgn(f_2)\sqrt{f_2}\cdot u^{\otimes 2})\oplus 
        (\sgn(f_3)\sqrt{f_3}\cdot u^{\otimes 3})\oplus \cdots \\
        &\seriesRight{f}{v} := (\sqrt{f_1}\cdot v)\oplus (\sqrt{f_2}\cdot v^{\otimes 2})\oplus 
        (\sqrt{f_3}\cdot v^{\otimes 3})\oplus \cdots 
    \end{align*}
    Then for any $U\in \R^{m\times \ell},~V\in \R^{n\times \ell}$, ~$\seriesLeft{f}{\sqrt{c_f}\cdot\hatU}$ 
    and $\seriesRight{f}{\sqrt{c_f}\cdot \hatV}$ have $\ell_2$-unit vectors as rows, and 
    \[
        \seriesLeft{f}{\sqrt{c_f}\cdot\hatU}~\seriesRight{f}{\sqrt{c_f}\cdot \hatV}^T 
        = 
        f\,[c_f\cdot \hatU\hatV^T]
    \]
    where $\seriesLeft{f}{W}$ for a matrix $W$, is applied to row-wise and $c_f := (\absolute{f}^{-1})(1)$. 
\end{observation}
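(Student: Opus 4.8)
The plan is to deduce the whole statement from the single elementary identity $\langle a^{\otimes k}, b^{\otimes k}\rangle = \langle a,b\rangle^{k}$, valid for any two vectors $a,b$ in an inner-product space and any $k\ge1$, where $a^{\otimes k}$ is the $k$-fold Kronecker power. First I would fix a row $u$ of $\hatU$ and a row $v$ of $\hatV$; both are $\ell_{2}$-unit vectors, so $|\langle u,v\rangle|\le1$ by Cauchy--Schwarz. By construction $\seriesLeft{f}{\sqrt{c_f}\,u}=\bigoplus_{k\ge1}\sgn(f_k)\sqrt{|f_k|}\,(\sqrt{c_f}\,u)^{\otimes k}$ and $\seriesRight{f}{\sqrt{c_f}\,v}=\bigoplus_{k\ge1}\sqrt{|f_k|}\,(\sqrt{c_f}\,v)^{\otimes k}$, so the contribution of the $k$-th blocks to $\langle\seriesLeft{f}{\sqrt{c_f}\,u},\seriesRight{f}{\sqrt{c_f}\,v}\rangle$ is $\sgn(f_k)\,|f_k|\,\langle(\sqrt{c_f}u)^{\otimes k},(\sqrt{c_f}v)^{\otimes k}\rangle=f_k\,(c_f\langle u,v\rangle)^{k}$, using the identity above together with $\sgn(f_k)|f_k|=f_k$ (terms with $f_k=0$ simply vanish, since $\sgn(0)=0$). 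Summing over $k$ gives $\langle\seriesLeft{f}{\sqrt{c_f}u},\seriesRight{f}{\sqrt{c_f}v}\rangle=\sum_{k\ge1}f_k(c_f\langle u,v\rangle)^{k}=f(c_f\langle u,v\rangle)$, which is exactly the corresponding entry of $f[c_f\hatU\hatV^{T}]$; assembling over all pairs of rows yields the claimed matrix identity.

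Next I would check the normalization. The same block-wise computation gives $\|\seriesLeft{f}{\sqrt{c_f}u}\|_{2}^{2}=\sum_{k\ge1}|f_k|\,\|(\sqrt{c_f}u)^{\otimes k}\|_{2}^{2}=\sum_{k\ge1}|f_k|\,(c_f\|u\|_{2}^{2})^{k}=\absolute{f}(c_f\|u\|_{2}^{2})$, and identically for $\seriesRight{f}{\sqrt{c_f}v}$. Since $\|u\|_2=1$ for the rows of $\hatU$ and $\hatV$, and $c_f=(\absolute{f}^{-1})(1)$, this equals $\absolute{f}(c_f)=1$, so the rows of $\seriesLeft{f}{\sqrt{c_f}\hatU}$ and $\seriesRight{f}{\sqrt{c_f}\hatV}$ are indeed $\ell_2$-unit vectors. (For the more general vectors $u,v$ of $\ell_2$-length at most $1$, monotonicity of $\absolute{f}$ gives squared length at most $\absolute{f}(c_f)=1$ instead, which is all that is needed.)

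The only point requiring care --- and I do not expect a genuine obstacle here --- is convergence, and the fact that these infinite direct sums really define elements of $\ell_2$. Note $c_f=(\absolute{f}^{-1})(1)$ does not exceed the radius of convergence $R$ of $\absolute{f}$: since $\absolute{f}$ is continuous and strictly increasing on $[0,R)$ with $\absolute{f}(0)=0$, the equation $\absolute{f}(c_f)=1$ has a solution $c_f<R$ whenever $\sup_{\rho<R}\absolute{f}(\rho)\ge1$, which holds in all cases we apply this to. Then, because $|c_f\langle u,v\rangle|\le c_f$, we get $\sum_{k\ge1}|f_k|\,|c_f\langle u,v\rangle|^{k}\le\sum_{k\ge1}|f_k|\,c_f^{k}=\absolute{f}(c_f)=1<\infty$. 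This one estimate simultaneously shows that $\seriesLeft{f}{\sqrt{c_f}u}$ and $\seriesRight{f}{\sqrt{c_f}v}$ lie in $\ell_2$, that the series defining their inner product converges absolutely (so the block-wise regrouping is legitimate), and that the scalar series $f(c_f\langle u,v\rangle)$ converges. With these routine verifications in place the observation follows; its actual content is just the choice of the blocks $\seriesLeft{f}{\cdot}$, $\seriesRight{f}{\cdot}$ and of the constant $c_f$, plus the Kronecker-power identity that makes the inner products telescope back into $f$.
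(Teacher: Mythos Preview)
Your proof is correct and follows essentially the same approach as the paper: both use the identities $\langle a^{\otimes k}, b^{\otimes k}\rangle = \langle a,b\rangle^{k}$ and the additivity of inner products over direct sums to compute $\langle\seriesLeft{f}{u},\seriesRight{f}{v}\rangle = f(\langle u,v\rangle)$ and $\|\seriesLeft{f}{u}\|_2^{2} = \absolute{f}(\|u\|_2^{2})$, then specialize to $\sqrt{c_f}\cdot\hatU$, $\sqrt{c_f}\cdot\hatV$. You supply additional justification for convergence and well-definedness of the infinite direct sums that the paper omits, but the core argument is the same.
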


\begin{proof}
    Using the facts $\mysmalldot{y^1\otimes y^2}{y^3\otimes y^4} = 
    \mysmalldot{y^1}{y^3}\cdot \mysmalldot{y^2}{y^4}$ and \\
    $\mysmalldot{y^1\oplus y^2}{y^3\oplus y^4} = 
    \mysmalldot{y^1}{y^3}+ \mysmalldot{y^2}{y^4}$,~we have  
    \begin{itemize}
        \item $\mysmalldot{\seriesLeft{f}{u}}{\seriesRight{f}{v}} = f(\mysmalldot{u}{v})$
        
        \item $\norm{2}{\seriesLeft{f}{u}} = \sqrt{\absolute{f}(\norm{2}{u}^{2})}$
        
        \item $\norm{2}{\seriesRight{f}{v}} = \sqrt{\absolute{f}(\norm{2}{v}^{2})}$ 
    \end{itemize}
    The claim follows. 
\end{proof}

Before stating our full rounding procedure, we first discuss a natural generalization of random hyperplane 
rounding, and much like in Krivine's case this will guide the final procedure. 

\subsection{Generalizing Random Hyperplane Rounding -- \Holder Dual Rounding}
Fix any convex bodies $B_1 \subset \R^m$ and $B_2\subset \R^k$. Suppose that we would like a strategy 
that for given vectors $y\in\R^m,~x\in\R^n$, outputs $\bary\in B_1,~\barx\in B_2$ so that 
$y^TA\,x = \mysmalldot{A}{y\,x^T}$ is close to $\mysmalldot{A}{\bary\, \barx^T}$ for all $A$. A natural 
strategy is to take 
\[
    (\bary,\barx) := \argmax_{(\tildey,\tildex)\in B_1\times B_2} 
    \mydot{\tildey\,\tildex^T}{y\,x^T}
    = 
    \inparen{
    \argmax_{\tildey\in B_1} 
    \mydot{\tildey}{y}
    ~,~
    \argmax_{\tildex\in B_2} 
    \mydot{\tildex}{x}
    }
\]
In the special case where $B$ is the unit $\ell_p$ ball, there is a closed form for an optimal 
solution to $\max_{\tildex\in B} \mysmalldot{\tildex}{x}$, given by $\holderdual{p^*}{x}/
\norm{p^*}{x}^{p^*-1}$, where $\holderdual{p^*}{x} := \sgn[x]\circ |[x]|^{p^*-1}$. Note that for 
$p= \infty$, this strategy recovers the random hyperplane rounding procedure. We shall call this 
procedure, \emph{Gaussian \Holder Dual Rounding} or \HD for short. 

Just like earlier, we will first understand the effect of \HD on a solution pair $U,V$.
For $\rho\in [-1, 1]$, let $\bfg_1 \!\sim_{\rho} \bfg_2$ denote $\rho$-correlated standard 
Gaussians, \ie $\bfg_1 = \rho\,\bfg_2 + \sqrt{1-\rho^2}\,\bfg_3$ ~where 
$(\bfg_2,\bfg_3) \sim \gaussian{0}{\id_2}$, and let 
\[
    \fplain{a}{b}{\rho} 
    :=  
    \Ex{\bfg_1 \sim_\rho \bfg_2}{\sgn(\bfg_1)|\bfg_1|^{b}\sgn(\bfg_2)|\bfg_1|^{a}}
\]
We will work towards a better understanding of $\fplain{a}{b}{\cdot}$ in later sections. 
For now note that we have for vectors $u,v\in\R^{\ell}$, 
\[
    \Ex{\sgn\!{\mysmalldot{\bfg}{u}}\,|\mysmalldot{\bfg}{u}|^{b} \cdot 
    \sgn\!{\mysmalldot{\bfg}{v}} \,|\mysmalldot{\bfg}{v}|^{a}} 
    = 
    \norm{2}{u}^{b}\cdot\norm{2}{v}^{a}\cdot \fplain{a}{b}{\mysmalldot{\widehat{u}}{\widehat{v}\,}} \mper
\]
Thus given matrices $U,V$, we obtain the following generalization of the hyperplane rounding identity 
for \HD: 
\begin{equation}
\label[equation]{cvgp:identity}
    \Ex{\holderdual{q}{[U\bfg]}\,\holderdual{p^*}{[V\bfg]}^T}
    = 
    \Diag{(\norm{2}{u^i}^{b})_{i\in [m]}} \cdot \fplain{a}{b}{[\hatU \hatV^T]}\cdot 
    \Diag{(\norm{2}{v^j}^{a})_{j\in [n]}} 
    \mper
\end{equation}

\subsection{Generalized Krivine Transformation and the Full Rounding Procedure}
\label[subsection]{rounding}

We are finally ready to state the generalized version of Krivine's algorithm. At a high level the algorithm 
simply applies \HD to a transformed version of the optimal convex program solution pair $U,V$. 
Analogous to Krivine's algorithm, the transformation is a type of ``inverse'' of \cref{cvgp:identity}. 

\begin{enumerate}[({Inversion }1)]
    \item Let $(U,V)$ be the optimal solution to $\CP{A}$, and let $(u^i)_{i\in [m]}$ and $(v^j)_{j\in[n]}$ 
    respectively denote the rows of $U$ and $V$. 

    \item Let $c_{a,b} := \inparen{\absolute{\alfin{a}{b}}}^{-1}\!\!(1)$ and let 
        \begin{align*}
            \phi(U) &:= \Diag{(\norm{2}{u^i}^{1/b})_{i\in [m]}}\,\,\seriesLeft{\alfin{a}{b}}{\sqrt{c_{a,b}}\cdot\hatU} 
            \mcom \\
            \psi(V) &:= \Diag{(\norm{2}{v^j}^{1/a})_{j\in [n]}}\,\,\seriesRight{\alfin{a}{b}}{\sqrt{c_{a,b}}\cdot\hatV}
            \mper
        \end{align*}
\end{enumerate}
\begin{enumerate}[({\Holder-Dual }1)]
    \item Let $\bfg \sim \gaussian{0}{\id}$ be an infinite dimensional i.i.d. Gaussian vector. 
    
    \item Return $y:=\holderdual{q}{\phi(U)\,\bfg}/\norm{q}{\phi(U)\,\bfg}^{b}$ and 
    $x:=\holderdual{p^*}{\psi(V)\,\bfg}/\norm{p^*}{\psi(V)\,\bfg}^{a}$. 
\end{enumerate}

\bigskip

\begin{remark}
\label[remark]{round:feasibility}
    Note that $\norm{r^*}{\holderdual{r}{\barx}} = \norm{r}{\barx}^{r-1}$ and so the returned solution pair 
    always lie on the unit $\ell_{q^*}$ and $\ell_{p}$ spheres respectively. 
\end{remark}

\begin{remark}
    Like in~\cite{AN04} the procedure above can be made algorithmic by observing that there always exist 
    $\phi'(U)\in \R^{m\times (m+n)}$ and $\psi'(V)\in \R^{m \times (m+n)}$, whose rows have the exact 
    same lengths and pairwise inner products as those of $\phi(U)$ and $\psi(V)$ above. 
    Moreover they can be computed without explicitly computing $\phi(U)$ and $\psi(V)$ by 
    obtaining the Gram decomposition of 
    \[
        M~:=~
        \left[
        \begin{array}{cc}
        \absolute{\alfin{a}{b}}[c_{a,b}\cdot \hatV\hatV^T] & \fin{a}{b}{[c_{a,b}\cdot \hatU\hatV^T]} \\
        \fin{a}{b}{[c_{a,b}\cdot \hatV\hatU^T]} & \absolute{\alfin{a}{b}}[c_{a,b}\cdot \hatV\hatV^T]
        \end{array}
        \right] \mcom
    \]
    and normalizing the rows of the decomposition according to the definition of $\phi(\cdot)$ and 
    $\psi(\cdot)$ above. The entries of $M$ can be computed in polynomial time with exponentially 
    (in $m$ and $n$) good accuracy by implementing the Taylor series of $\alfin{a}{b}$ 
    upto $\mathrm{poly}(m,n)$ terms (Taylor series inversion can be done upto $k$ terms in time 
    $\mathrm{poly}(k)$).
\end{remark}

\begin{remark}
\label[remark]{algo:rmk}
    Note that the $2$-norm of the $i$-th row (resp. $j$-th row) of $\phi(U)$ (resp. $\psi(V)$) is 
    $\norm{2}{u^i}^{1/b}$ (resp. $\norm{2}{v^j}^{1/a}$).
\end{remark}

We commence the analysis by defining some convenient normalized functions and we will 
also show that $c_{a,b}$ above is well-defined. 
\subsection{Auxiliary Functions}
Let ~$\nf{p}{q}{\rho} := \fplain{p}{q}{\rho}/(\gamma_{p^*}^{p^*}\,\gamma_{q}^{q})$, ~ 
$\alfinabs{a}{b} := \absolute{\alfin{a}{b}} $,~ and ~$\alnh{a}{b} := \absolute{\alnfin{a}{b}}$. 
Also note that $\nhin{a}{b}{\rho} = \hin{a}{b}{\rho}/(\gamma_{p^*}^{p^*}\,\gamma_{q}^{q})$. 

\paragraph{Well Definedness. }
By \cref{inv:coeff:bound}, $\nfin{a}{b}{\rho}$ and $\nh{a}{b}{\rho}$ are well defined for $\rho\in [-1,1]$.  
By (M1) in \cref{monotonicity:properties}, ~
$\kfinc{1} = 1$ and hence $\nh{a}{b}{1} \geq 1$ and $\nh{a}{b}{-1}\leq -1$. Combining this with the 
fact that $\nh{a}{b}{\rho}$ is continuous and strictly increasing on $[-1,1]$, implies that 
$\nhin{a}{b}{x}$ is well defined on $[-1,1]$. 

\medskip
We can now proceed with the analysis. 

\subsection[Bound on Approximation Factor]{$1/(\nhin{p}{q}{1}\cdot \gamma_{p^*}\,\gamma_q)$
Bound on Approximation Factor}

For any vector random variable $\bfX$ in a universe $\Omega$, and scalar valued functions 
$f_1:\Omega \to\R$ and $f_2:\Omega\to (0,\infty)$. Let $\lambda = \Ex{f_1(\bfX)}/\Ex{f_2(\bfX)}$. 
Now we have 
\begin{align*}
    &\max_{x\in\Omega} f_1(x)-\lambda\cdot f_2(x) \geq \Ex{f_1(\bfX)-\lambda\cdot f_2(\bfX)} = 0 \\
    \Rightarrow\quad  
    &\max_{x\in\Omega} f_1(x)/f_2(x) \geq \lambda = \Ex{f_1(\bfX)}/\Ex{f_2(\bfX)}\mper
\end{align*}
Thus we have 
\[
    \norm{p}{q}{A} 
    ~\geq ~
    \frac{\Ex{\mysmalldot{A}{\holderdual{q}{\phi(U)\,\bfg}~\holderdual{p^*}{\psi(V)\,\bfg}^T}}} 
    {\Ex{\norm{q^*}{\holderdual{q}{\phi(U)\,\bfg}}\cdot \norm{p}{\holderdual{p^*}{\psi(V)\,\bfg}}}} 
    ~= ~
    \frac{\mysmalldot{A}{\Ex{\holderdual{q}{\phi(U)\,\bfg}~\holderdual{p^*}{\psi(V)\,\bfg}^T}}} 
    {\Ex{\norm{q^*}{\holderdual{q}{\phi(U)\,\bfg}}\cdot \norm{p}{\holderdual{p^*}{\psi(V)\,\bfg}}}} 
    \mcom
\]
which allows us to consider the numerator and denominator separately. 
We begin by proving the equality that the above algorithm was designed to satisfy: 
\begin{lemma}
    \label[lemma]{numerator}
    $
        \Ex{\holderdual{q}{\phi(U)\,\bfg}~ \holderdual{p^*}{\psi(V)\,\bfg}^T} 
        ~=~ 
        c_{a,b}\cdot (\tildeU \tildeV^T)
    $
\end{lemma}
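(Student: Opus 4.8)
The identity is precisely what the generalized Krivine transformation $(\phi,\psi)$ defined in \cref{rounding} was engineered to produce, so the plan is a direct verification that combines the generalized hyperplane‑rounding identity \cref{cvgp:identity} with the Taylor‑simulation \cref{simulating:taylor}; there is no deep step here, only bookkeeping.

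\emph{Step 1 (shape of $\phi(U),\psi(V)$).} I would first apply \cref{simulating:taylor} with the scalar function $f=\alfin{a}{b}$, for which $c_f=\inparen{\absolute{\alfin{a}{b}}}^{-1}\!(1)=c_{a,b}$. The hypothesis that the input rows have $\ell_2$-length at most $1$ is met because $c_{a,b}\le 1$; this, together with the absolute convergence of the Taylor series of $\alfin{a}{b}$, follows from \cref{inv:coeff:bound} and \cref{monotonicity:properties}, exactly as used in the well‑definedness paragraph above. The observation then yields that $\seriesLeft{\alfin{a}{b}}{\sqrt{c_{a,b}}\cdot\hatU}$ and $\seriesRight{\alfin{a}{b}}{\sqrt{c_{a,b}}\cdot\hatV}$ have $\ell_2$-unit rows and that
\[
  \seriesLeft{\alfin{a}{b}}{\sqrt{c_{a,b}}\cdot\hatU}\;\seriesRight{\alfin{a}{b}}{\sqrt{c_{a,b}}\cdot\hatV}^T \;=\; \alfin{a}{b}\insquare{c_{a,b}\cdot \hatU\hatV^T}\mper
\]
Since $\phi(U)$ is obtained from $\seriesLeft{\alfin{a}{b}}{\sqrt{c_{a,b}}\cdot\hatU}$ by rescaling row $i$ by the positive scalar $\norm{2}{u^i}^{1/b}$ (and $\psi(V)$ similarly, with $\norm{2}{v^j}^{1/a}$), the row‑normalized matrices are unchanged, i.e.\ $\widehat{\phi(U)}=\seriesLeft{\alfin{a}{b}}{\sqrt{c_{a,b}}\cdot\hatU}$ and $\widehat{\psi(V)}=\seriesRight{\alfin{a}{b}}{\sqrt{c_{a,b}}\cdot\hatV}$, while by \cref{algo:rmk} the $i$-th (resp.\ $j$-th) row of $\phi(U)$ (resp.\ $\psi(V)$) has $\ell_2$-norm $\norm{2}{u^i}^{1/b}$ (resp.\ $\norm{2}{v^j}^{1/a}$).

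\emph{Step 2 (apply \cref{cvgp:identity}).} Next I would invoke \cref{cvgp:identity} with $(\phi(U),\psi(V))$ in place of $(U,V)$; since that identity depends on the input matrices only through their row lengths and pairwise inner products, the infinite dimensionality of $\bfg$ causes no difficulty (one may equally use the finite‑dimensional surrogates $\phi'(U),\psi'(V)$ from the remark following \cref{round:feasibility}). The two outer diagonal factors then become $\Diag{\inparen{(\norm{2}{u^i}^{1/b})^{b}}_{i}}=\Diag{(\norm{2}{u^i})_i}$ and $\Diag{\inparen{(\norm{2}{v^j}^{1/a})^{a}}_{j}}=\Diag{(\norm{2}{v^j})_j}$ — this is exactly why the exponents $1/b$ and $1/a$ appear in the definitions of $\phi,\psi$. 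For the middle factor, substituting the display from Step 1 and using that $\fplain{a}{b}{\cdot}$ and $\alfin{a}{b}$ are functional inverses on $[-1,1]$ (every entry of $c_{a,b}\hatU\hatV^T$ lies in $[-c_{a,b},c_{a,b}]\subseteq[-1,1]$) gives $\fplain{a}{b}{\insquare{\alfin{a}{b}\insquare{c_{a,b}\hatU\hatV^T}}}=c_{a,b}\hatU\hatV^T$. Hence
\[
  \Ex{\holderdual{q}{\phi(U)\,\bfg}\;\holderdual{p^*}{\psi(V)\,\bfg}^T} \;=\; \Diag{(\norm{2}{u^i})_i}\,\inparen{c_{a,b}\cdot\hatU\hatV^T}\,\Diag{(\norm{2}{v^j})_j} \;=\; c_{a,b}\cdot\inparen{\tildeU\tildeV^T}\mcom
\]
using $\Diag{(\norm{2}{u^i})_i}\hatU=\tildeU$ and $\hatV^T\Diag{(\norm{2}{v^j})_j}=\tildeV^T$.

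\emph{Anticipated obstacle.} There is no genuine obstacle — the content is the choice of exponents and of $c_{a,b}$, which is already baked into the algorithm's definition. The only points requiring care are (i) legitimately invoking \cref{simulating:taylor}, i.e.\ confirming $c_{a,b}\le 1$ and absolute convergence of the Taylor series of $\alfin{a}{b}$, handled via \cref{inv:coeff:bound} and \cref{monotonicity:properties} as in the well‑definedness discussion; and (ii) degenerate rows with $u^i=\zero$ (or $v^j=\zero$), where $\norm{2}{u^i}^{1/b}$ vanishes and the normalization $\hatU$ is ill‑defined — but then row $i$ of $\phi(U)$, of $\tildeU$, and of both sides of the claimed identity are all zero, so such rows may simply be discarded. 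Everything else is linear‑algebra rearrangement together with the one‑line fact that $\fplain{a}{b}{\cdot}$ inverts $\alfin{a}{b}$.
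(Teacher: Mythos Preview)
Your proposal is correct and follows essentially the same route as the paper's proof: apply the \Holder-dual rounding identity \cref{cvgp:identity} to $(\phi(U),\psi(V))$, use \cref{algo:rmk} for the row lengths, invoke \cref{simulating:taylor} to evaluate the inner-product matrix as $\alfin{a}{b}\insquare{c_{a,b}\hatU\hatV^T}$, and cancel $\alfplain{a}{b}$ against $\alfin{a}{b}$. Your additional remarks on well-definedness ($c_{a,b}\le 1$, absolute convergence) and on degenerate rows are more explicit than the paper's write-up but do not change the argument.
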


\begin{proof}
    \begin{align*}
        &\quad~~\Ex{\holderdual{q}{\phi(U)\,\bfg}~ \holderdual{p^*}{\psi(V)\,\bfg}^T}  \\
        &= 
        \Diag{(\norm{2}{u^i})_{i\in [m]}} \cdot 
        \fplain{a}{b}{[\seriesLeft{\alfin{a}{b}}{\sqrt{c_{a,b}}\cdot\hatU}\cdot
        \seriesRight{\alfin{a}{b}}{\sqrt{c_{a,b}}\cdot\hatV}^T]} \cdot 
        \Diag{(\norm{2}{v^j})_{j\in [n]}} \\
        &~~(\text{by \cref{cvgp:identity} and \cref{algo:rmk}}) \\
        &= 
        \Diag{(\norm{2}{u^i})_{i\in [m]}} \cdot 
        \fplain{a}{b}{[\fin{a}{b}{[c_{a,b}\cdot \hatU\hatV^T]}]} \cdot 
        \Diag{(\norm{2}{v^j})_{j\in [n]}} \\
        &~~(\text{by \cref{simulating:taylor}}) \\
        &= 
        \Diag{(\norm{2}{u^i})_{i\in [m]}} \cdot 
        c_{a,b}\cdot \hatU\hatV^T \cdot 
        \Diag{(\norm{2}{v^j})_{j\in [n]}} \\
        &= 
        c_{a,b}\cdot UV^T \qedforce
    \end{align*}
\let\qed\relax
\end{proof}

It remains to upper bound the denominator which we do using a straightforward convexity argument. 
\begin{lemma}
    \label[lemma]{denominator}
    $
        \Ex{\norm{q}{\phi(U)\,\bfg}^{b}\cdot \norm{p^*}{\psi(V)\,\bfg}^{a}}
        \leq  
        \gamma_{p^*}^{a}\, \gamma_{q}^{b} \mper
    $
\end{lemma}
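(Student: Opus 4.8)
The plan is a short two-step estimate: first compute the ``full-order'' moments $\Ex{\norm{q}{\phi(U)\bfg}^{q}}$ and $\Ex{\norm{p^*}{\psi(V)\bfg}^{p^*}}$ exactly and bound them using the feasibility constraints of $\CP{A}$, and then interpolate down to the exponents $b$ and $a$ by a single application of \Holder's inequality on the Gaussian probability space. Throughout I use $a = p^*-1$, $b = q-1$ and the duality identities $1/q+1/q^* = 1/p+1/p^* = 1$.

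For the first step, write $\norm{q}{\phi(U)\bfg}^{q} = \sum_{i\in[m]}\abs{\ip{\phi(u^i),\bfg}}^{q}$, where $\phi(u^i)$ is the $i$-th row of $\phi(U)$. Since $\bfg\sim\gaussian{0}{\id}$, each $\ip{\phi(u^i),\bfg}$ is a centered Gaussian with standard deviation $\norm{2}{\phi(u^i)} = \norm{2}{u^i}^{1/b}$ by \cref{algo:rmk}, so $\Ex{\abs{\ip{\phi(u^i),\bfg}}^{q}} = \gamma_q^{q}\cdot\norm{2}{u^i}^{q/b}$. Summing over $i$, using $q/b = q^*$ and the constraint $\sum_{i}\norm{2}{u^i}^{q^*}\le 1$ of $\CP{A}$, gives $\Ex{\norm{q}{\phi(U)\bfg}^{q}}\le\gamma_q^{q}$. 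The identical computation for $\psi(V)$ — whose $j$-th row has length $\norm{2}{v^j}^{1/a}$ by \cref{algo:rmk} — using $p^*/a = p$ and the constraint $\sum_{j}\norm{2}{v^j}^{p}\le 1$, gives $\Ex{\norm{p^*}{\psi(V)\bfg}^{p^*}}\le\gamma_{p^*}^{p^*}$.

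For the second step, set $X:=\norm{q}{\phi(U)\bfg}$ and $Y:=\norm{p^*}{\psi(V)\bfg}$, both nonnegative. Since $q\le 2\le p$, we have $1/q^* + 1/p = 1 - 1/q + 1/p\le 1$, so there is $r\ge 1$ with $1/q^* + 1/p + 1/r = 1$, and the three-term \Holder inequality over the Gaussian measure (with the third factor the constant $1$, of unit $L_r$-norm) gives
\[
    \Ex{X^{b}Y^{a}}~=~\Ex{X^{b}\cdot Y^{a}\cdot 1}~\le~\inparen{\Ex{X^{bq^*}}}^{1/q^*}\cdot\inparen{\Ex{Y^{ap}}}^{1/p}\mper
\]
Now $bq^* = q$ and $ap = p^*$, so by the first step the right-hand side is at most $\inparen{\gamma_q^{q}}^{1/q^*}\inparen{\gamma_{p^*}^{p^*}}^{1/p} = \gamma_q^{\,q/q^*}\,\gamma_{p^*}^{\,p^*/p} = \gamma_q^{b}\,\gamma_{p^*}^{a}$, using $q/q^* = b$ and $p^*/p = a$. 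This is exactly the claimed bound.

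There is no genuine obstacle here; as the surrounding text signals, this is a routine convexity computation. The only points requiring care are the exponent identities $bq^* = q$, $ap = p^*$, $q/q^* = b$, $p^*/p = a$ (all immediate from the definitions of $a,b$ and from $1/q+1/q^* = 1/p+1/p^* = 1$), and the small twist that the ``naturally matched'' exponents $q/b = q^*$ and $p^*/a = p$ are \emph{not} a \Holder-conjugate pair unless $p=q$; the slack is precisely $1 - 1/q^* - 1/p \ge 0$, which is nonnegative exactly because $q\le 2\le p$, and which we absorb into the trivial constant factor $1$ in the three-term \Holder step. (Equivalently, one applies ordinary \Holder with the conjugate pair $(q^*,q)$ and then raises the $q$ to $p$ via the power-mean inequality on the probability space.) The degenerate cases $a=0$ or $b=0$, i.e.\ $p=\infty$ or $q=1$, are already excluded by the definitions of $\phi(U)$ and $\psi(V)$ and are recovered by continuity.
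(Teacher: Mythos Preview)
Your proof is correct and is essentially identical to the paper's: both apply \Holder's inequality on the Gaussian probability space with exponents $q^*$ and $p$ (using $1/q^*+1/p\le 1$, which you phrase as a three-term \Holder with a trivial third factor), then compute $\Ex{\norm{q}{\phi(U)\bfg}^{q}}$ and $\Ex{\norm{p^*}{\psi(V)\bfg}^{p^*}}$ coordinatewise via \cref{algo:rmk} and the feasibility constraints, and finish with the same exponent identities $bq^*=q$, $ap=p^*$, $q/b=q^*$, $p^*/a=p$. The only difference is cosmetic: you compute the full-order moments first and then apply \Holder, while the paper applies \Holder first and then computes the moments.
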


\begin{proof}
    \begin{align*}
        &\quad~\Ex{\norm{q}{\phi(U)\,\bfg}^{b}\cdot \norm{p^*}{\psi(V)\,\bfg}^{a}} \\
        &\leq~ 
        \Ex{\norm{q}{\phi(U)\,\bfg}^{q^* b}}\!^{1/q^*} \cdot \Ex{\norm{p^*}{\psi(V)\,\bfg}^{p a}}\!^{1/p}
        &&\inparen{\frac{1}{p}+\frac{1}{q^*}\leq 1} \\
        &=~ 
        \Ex{\norm{q}{\phi(U)\,\bfg}^{q}}\!^{1/q^*} \cdot \Ex{\norm{p^*}{\psi(V)\,\bfg}^{p^*}}\!^{1/p} \\
        &=~ 
        \insquare{\sum_{i\in [m]}\Ex{|\gaussian{0}{\norm{2}{u^i}^{1/b}}|^{q}}}^{1/q^*} 
        \cdot 
        \insquare{\sum_{j\in [n]}\Ex{|\gaussian{0}{\norm{2}{v^j}^{1/a}}|^{p^*}}}^{1/p}  
        &&(\text{By \cref{algo:rmk}}) \\
        &=~ 
        \insquare{\sum_{i\in [m]} \norm{2}{u^i}^{q/b}}^{1/q^*} 
        \cdot 
        \insquare{\sum_{j\in [n]} \norm{2}{v^j}^{p^*/a}}^{1/p}  
        \cdot \gamma_{q}^{q/q^*}\,\gamma_{p^*}^{p^*/p} \\
        &=~ 
        \insquare{\sum_{i\in [m]} \norm{2}{u^i}^{q^*}}^{1/q^*} 
        \cdot 
        \insquare{\sum_{j\in [n]} \norm{2}{v^j}^{p}}^{1/p}  
        \cdot \gamma_{q}^{b}\,\gamma_{p^*}^{a} \\
        &=~
        \gamma_{q}^{b}\,\gamma_{p^*}^{a} 
        &&\hspace{-40 pt}(\text{feasibility of } U,V) \qedhere
    \end{align*}
\end{proof}

\medskip

We are now ready to prove our approximation guarantee. 
\begin{lemma}
    \label[lemma]{defect:bound:implies:apx}
    Consider any $1\leq q\leq 2\leq p \leq \infty$. Then, 
    \[
        \frac{\CP{A}}{\norm{p}{q}{A}} 
        ~\leq ~ 
        1/(\gamma_{p^*}\,\gamma_q \cdot \nhin{a}{b}{1}) 
    \]
\end{lemma}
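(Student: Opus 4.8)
The plan is to chain together the pieces already assembled in this subsection: the ``$\max$ dominates a ratio of expectations'' inequality derived just above, \cref{numerator}, \cref{denominator}, and the normalization conventions for the auxiliary functions. First I would start from the inequality
\[
    \norm{p}{q}{A}
    ~\geq~
    \frac{\mysmalldot{A}{\Ex{\holderdual{q}{\phi(U)\,\bfg}~\holderdual{p^*}{\psi(V)\,\bfg}^T}}}
         {\Ex{\norm{q^*}{\holderdual{q}{\phi(U)\,\bfg}}\cdot \norm{p}{\holderdual{p^*}{\psi(V)\,\bfg}}}}\mcom
\]
which is already in hand (it is legitimate after discarding the probability-zero event on which the denominator vanishes; if $\CP{A}=0$ the lemma is trivial since its right-hand side is positive). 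It then suffices to evaluate the numerator, bound the denominator, and re-express the resulting constant in terms of $\nhin{a}{b}{1}$.

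For the numerator, I would apply \cref{numerator} together with the optimality of $(U,V)$ for $\CP{A}$ (so that $\CP{A}=\mysmalldot{A}{UV^T}$); this gives that the numerator equals $c_{a,b}\cdot\CP{A}$. For the denominator, I would first rewrite the \Holder-dual norms using \cref{round:feasibility} applied with $r=q$ and with $r=p^*$, namely $\norm{q^*}{\holderdual{q}{\phi(U)\,\bfg}}=\norm{q}{\phi(U)\,\bfg}^{b}$ and $\norm{p}{\holderdual{p^*}{\psi(V)\,\bfg}}=\norm{p^*}{\psi(V)\,\bfg}^{a}$, and then invoke \cref{denominator} to bound the denominator by $\gamma_{p^*}^{a}\,\gamma_q^{b}$. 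Putting these together yields $\norm{p}{q}{A}\ge c_{a,b}\cdot\CP{A}/(\gamma_{p^*}^{a}\gamma_q^{b})$, \ie $\CP{A}/\norm{p}{q}{A}\le\gamma_{p^*}^{a}\gamma_q^{b}/c_{a,b}$.

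It remains to identify $\gamma_{p^*}^{a}\gamma_q^{b}/c_{a,b}$ with the stated bound. Here I would use that $c_{a,b}=\inparen{\absolute{\alfin{a}{b}}}^{-1}(1)=\hin{a}{b}{1}$ (since $\alfinabs{a}{b}=\absolute{\alfin{a}{b}}$), and that the normalized function $\alnf{a}{b}$ equals $\alfplain{a}{b}$ divided by $\gamma_{p^*}^{p^*}\gamma_q^q$; the same scalar therefore relates the two inverses, and hence their coefficient-wise absolute values, which is exactly the identity $\nhin{a}{b}{\rho}=\hin{a}{b}{\rho}/(\gamma_{p^*}^{p^*}\gamma_q^q)$ recorded just above in the \emph{Auxiliary Functions} paragraph. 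Evaluating at $\rho=1$ gives $c_{a,b}=\gamma_{p^*}^{p^*}\gamma_q^q\cdot\nhin{a}{b}{1}$, and since $p^*=a+1$ and $q=b+1$ the surviving Gaussian moments have exponent exactly $1$, so $\gamma_{p^*}^{a}\gamma_q^{b}/c_{a,b}=1/(\gamma_{p^*}\gamma_q\cdot\nhin{a}{b}{1})$, as claimed.

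I do not expect a real obstacle at the level of this lemma: all the substantive work — the generalized Grothendieck/hypergeometric identity producing $\alfplain{a}{b}$, the choice of the generalized Krivine transformations $\phi,\psi$, and the two norm computations \cref{numerator} and \cref{denominator} — is already done, so this is an assembly step. The one place to be careful is the final exponent bookkeeping: tracking which functions carry the normalization factor $\gamma_{p^*}^{p^*}\gamma_q^q$ and checking that the leftover Gaussian moments come out as $\gamma_{p^*}^{1}$ and $\gamma_q^{1}$ rather than some other power. (It is also worth sanity-checking the endpoint $p=\infty$, where $a=0$, $\holderdual{p^*}{\cdot}$ is the entrywise sign map, and $\gamma_{p^*}^{a}=1$, so nothing changes.)
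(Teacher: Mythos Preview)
Your proposal is correct and follows essentially the same route as the paper: start from the ratio-of-expectations lower bound, use \cref{round:feasibility} to rewrite the denominator, apply \cref{numerator} and \cref{denominator}, and then unwind the normalization $\hin{a}{b}{1}=\gamma_{p^*}^{p^*}\gamma_q^{q}\cdot\nhin{a}{b}{1}$ together with $p^*=a+1,\,q=b+1$ to get the stated bound. Your exponent bookkeeping is exactly what the paper does (it just suppresses the intermediate step), so there is nothing to add.
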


\begin{proof}
    \begin{align*}
        \norm{p}{q}{A}~
        &\geq ~
        \frac{\mysmalldot{A}{\Ex{\holderdual{q}{\phi(U)\,\bfg}~\holderdual{p^*}{\psi(V)\,\bfg}^T}}}
        {\Ex{\norm{q^*}{\holderdual{q}{\phi(U)\,\bfg}}\cdot \norm{p}{\holderdual{p^*}{\psi(V)\,\bfg}}}} \\
        &= ~
        \frac{\mysmalldot{A}{\Ex{\holderdual{q}{\phi(U)\,\bfg}~\holderdual{p^*}{\psi(V)\,\bfg}^T}}}
        {\Ex{\norm{q}{\phi(U)\,\bfg}^{b}\cdot \norm{p^*}{\psi(V)\,\bfg}^{a}}} 
        &&(\text{by \cref{round:feasibility}}) \\
        &= ~ 
        \frac{c_{a,b}\cdot \mysmalldot{A}{UV^T}}
        {\Ex{\norm{q}{\phi(U)\,\bfg}^{b}\cdot \norm{p^*}{\psi(V)\,\bfg}^{a}}} 
        &&(\text{by \cref{numerator}}) \\
        &= ~ 
        \frac{c_{a,b}\cdot \CP{A}}
        {\Ex{\norm{q}{\phi(U)\,\bfg}^{b}\cdot \norm{p^*}{\psi(V)\,\bfg}^{a}}} 
        &&(\text{by optimality of }U,V) \\
        &\geq ~ 
        \frac{c_{a,b}\cdot \CP{A}}{\gamma_{p^*}^{a}\,\gamma_q^{b}} 
        &&(\text{by \cref{denominator}}) \\
        &= ~ 
        \frac{\hin{a}{b}{1}\cdot \CP{A}}{\gamma_{p^*}^{a}\,\gamma_q^{b}} \\
        &= ~ 
        \nhin{a}{b}{1}\cdot \gamma_{p^*}\,\gamma_{q}\cdot \CP{A}  \qedforce
    \end{align*}
\let\qed\relax
\end{proof}

We next begin the primary technical undertaking of this paper, namely proving upper bounds on 
$\nhin{p}{q}{1}$. 


\section[Hypergeometric Representation]{Hypergeometric Representation of  $\nf{a}{b}{x}$}

In this section, we show that $\nf{a}{b}{\rho}$ can be represented using the Gaussian hypergeometric 
function $\hypergeometric$. The result of this section can be thought of as a generalization of the 
so-called Grothendieck identity for hyperplane rounding which simply states that 
\[
    \nf{0}{0}{\rho}~=~\frac{\pi}{2}\cdot \Ex{\bfg_1 \sim_\rho \,\bfg_2}{\sgn(\bfg_1)\sgn(\bfg_2)}~=~\sin^{-1}
    (\rho)
\]
We believe the result of this section and its proof technique to be of independent 
interest in analyzing generalizations of hyperplane rounding to convex bodies other than the hypercube. 

Recall that $\fplain{a}{b}{\rho}$ is defined as follows: 
\[
    \Ex{\bfg_1 \sim_\rho \,\bfg_2}{\sgn(\bfg_1)|\bfg_1|^{a}\sgn(\bfg_2)|\bfg_1|^{b}}
\]
where $a=p^*-1$ and $b=q-1$. 
Our starting point is the simple observation that the above expectation can be viewed as the noise 
correlation (under the Gaussian measure) of the functions  $\fa{\tau}:= \sgn{\tau}\cdot |\tau|^{a}$ and 
$\fb{\tau}:= \sgn{\tau}\cdot |\tau|^{b}$. Elementary Hermite analysis then implies that it suffices 
to understand the Hermite coefficients of $\alfa$ and $\alfb$ individually, in order to understand the 
Taylor coefficients of $\alnf{a}{b}$. To understand the Hermite coefficients of $\alfa$ and $\alfb$ individually, 
we use a generating function approach. More specifically, we derive an integral representation for 
the generating function of the (appropriately normalized) Hermite coefficients which fortunately turns out 
to be closely related to a well studied special function called the parabolic cylinder function. 

Before proceeding, we require some preliminaries. 

\subsection[Hermite Preliminaries]{Hermite Analysis Preliminaries}
Let $\gamma$ denote the standard Gaussian probability distribution. 
For this section (and only for this section), the (Gaussian) inner product for functions
$f,h\in (\R,\gamma) \rightarrow \R$ is defined as 
\[
    \mysmalldot{f}{h}:=\int_{\R} f(\tau)\cdot h(\tau)\,\,d\gamma(\tau)
    =\Ex{\tau\sim \gaussian{0}{1}}{f(\tau)\cdot h(\tau)}\mper
\]
Under this inner product there is a complete set of orthonormal polynomials $(H_k)_{k\in \N}$ defined below. 
\begin{definition}
    \label[def-hermite]{def:hermite}
    For a natural number $k$, then the $k$-th \emph{Hermite} polynomial $H_k: \R\to\R$
    \[
        H_k(\tau)=\frac{1}{\sqrt{k!}}\cdot (-1)^{\,k} \cdot \ee^{\tau^2/2}\cdot 
        \frac{d^k}{d\tau^k}\,\ee^{-\tau^2/2} \mper
    \]
\end{definition}
\noindent
Any function $f$ satisfying $\int_{\R}|f(\tau)|^{2}\,d\gamma(\tau)<\infty$ has a Hermite expansion given by 
$
    f = \sum_{k\geq 0} \widehat{f}_{k} \cdot H_{k}
$
where 
$
    \widehat{f}_k = \mysmalldot{f}{H_k} \mper
$
\medskip

\noindent
We have 
\begin{fact}
\label[fact]{even:odd:hermite}
    $H_k(\tau)$ is an even (resp. odd) function when $k$ is even (resp. odd). 
\end{fact}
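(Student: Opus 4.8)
The plan is to read off the claim directly from the Rodrigues-type formula in \cref{def:hermite}, using only the parity behaviour of the Gaussian density under differentiation. First I would record the elementary fact that differentiation flips parity: if a smooth function $g$ satisfies $g(-\tau)=g(\tau)$ then $g'(-\tau)=-g'(\tau)$, and if $g(-\tau)=-g(\tau)$ then $g'(-\tau)=g'(\tau)$ (both by the chain rule). Since $\tau\mapsto \ee^{-\tau^2/2}$ is even and smooth, a one-line induction on $k$ then shows that $\frac{d^k}{d\tau^k}\ee^{-\tau^2/2}$ is even when $k$ is even and odd when $k$ is odd; equivalently it has parity $(-1)^k$, with the base case $k=0$ being trivial and the induction step being exactly the parity-flipping fact above.

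Next I would combine this with the remaining factors in the definition $H_k(\tau)=\frac{(-1)^k}{\sqrt{k!}}\,\ee^{\tau^2/2}\,\frac{d^k}{d\tau^k}\ee^{-\tau^2/2}$. The scalar $\frac{(-1)^k}{\sqrt{k!}}$ is parity-neutral and $\ee^{\tau^2/2}$ is even, so $H_k$ is a product of an even function with a function of parity $(-1)^k$; hence $H_k$ itself has parity $(-1)^k$, which is precisely the assertion that $H_k$ is even for even $k$ and odd for odd $k$.

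There is essentially no obstacle here: the only point requiring a moment's care is making the ``parity of the $k$-th derivative'' induction explicit and noting that all functions in sight are $C^\infty$ so the derivatives are legitimate. As an alternative route one could instead invoke the standard three-term recurrence $\sqrt{k+1}\,H_{k+1}(\tau)=\tau\,H_k(\tau)-\sqrt{k}\,H_{k-1}(\tau)$ together with $H_0\equiv 1$, $H_1(\tau)=\tau$, and propagate parity by induction on $k$; but the Rodrigues-formula argument above is the most direct and self-contained given how \cref{def:hermite} is stated.
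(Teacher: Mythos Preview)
Your argument is correct and complete. The paper does not actually prove this statement: it is recorded as a standard fact without proof, so there is no ``paper's own proof'' to compare against. Your Rodrigues-formula route via the parity-flipping behaviour of differentiation is entirely appropriate and self-contained given how \cref{def:hermite} is stated; the recurrence-based alternative you mention would work equally well.
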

We also have the Plancherel Identity (as Hermite polynomials form an orthonormal basis):
\begin{fact}
    \label[fact]{plancherel-identity}
    For two real valued functions $f$ and $h$ with Hermite coefficients $\widehat{f}_k$ and
    $\widehat{h}_k$, respectively, we have:
    \[
        \mysmalldot{f}{h} = \sum_{k\geq 0} \widehat{f}_k\cdot \widehat{h}_k \mper
    \]
\end{fact}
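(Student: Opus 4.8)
The plan is to establish \cref{plancherel-identity} as the standard Parseval/Plancherel identity that accompanies the fact that $(H_k)_{k\ge 0}$ is a \emph{complete} orthonormal system in $L^2(\R,\gamma)$, after which the statement is pure Hilbert-space theory. Only two ingredients need to be isolated. First, orthonormality $\mysmalldot{H_j}{H_k}=\delta_{jk}$: this follows from the Rodrigues-type formula defining $H_k$ by integrating by parts $k$ times — all boundary terms vanish because every derivative of $\ee^{-\tau^2/2}$ is a polynomial times the Gaussian — which reduces $\mysmalldot{H_j}{H_k}$ to a fixed multiple of $\int_{\R}\left(\tfrac{d^k}{d\tau^k}H_j(\tau)\right)\ee^{-\tau^2/2}\,d\tau$; since $\deg H_j=j$ this vanishes unless $j=k$, and when $j=k$ it equals $1$ precisely because of the $1/\sqrt{k!}$ normalization. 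Second, completeness — equivalently, density of polynomials in $L^2(\R,\gamma)$ — which is the only input with genuine content: I would obtain it from the classical argument that if $f\in L^2(\R,\gamma)$ is orthogonal to every monomial then the entire function $z\mapsto\int_{\R}f(\tau)\ee^{-\tau^2/2}\ee^{iz\tau}\,d\tau$ has all derivatives vanishing at $0$, hence is identically zero, forcing $f\,\ee^{-\tau^2/2}\equiv 0$ by Fourier inversion; in the write-up this step can simply be replaced by a citation (it is the determinacy of the Gaussian moment problem).

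Granting these, the argument is routine. Writing $\|f\|^2:=\mysmalldot{f}{f}$, for $f\in L^2(\R,\gamma)$ the partial sums $S_nf:=\sum_{k\le n}\widehat{f}_k H_k$ are the orthogonal projections of $f$ onto $\Span\{H_0,\dots,H_n\}$, so completeness yields $\|f-S_nf\|\to 0$ together with the Parseval equality $\|f\|^2=\sum_{k\ge 0}\widehat{f}_k^2<\infty$, and likewise $\sum_{k\ge 0}\widehat{h}_k^2<\infty$; hence $\sum_{k\ge 0}\widehat{f}_k\widehat{h}_k$ converges absolutely by Cauchy–Schwarz in $\ell^2$. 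By orthonormality $\mysmalldot{S_nf}{S_nh}=\sum_{k\le n}\widehat{f}_k\widehat{h}_k$, and as $n\to\infty$ the right-hand side tends to $\sum_{k\ge 0}\widehat{f}_k\widehat{h}_k$ while the left-hand side tends to $\mysmalldot{f}{h}$, since $\smallabs{\mysmalldot{S_nf}{S_nh}-\mysmalldot{f}{h}}\le\|S_nf-f\|\,\|S_nh\|+\|f\|\,\|S_nh-h\|\to 0$ by Cauchy–Schwarz. This gives the claim. (Equivalently, once Parseval is in hand one can skip the limit and apply polarization, $\mysmalldot{f}{h}=\tfrac14\big(\|f+h\|^2-\|f-h\|^2\big)$, using $\widehat{(f\pm h)}_k=\widehat{f}_k\pm\widehat{h}_k$.)

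The only step carrying real mathematical weight is completeness of the Hermite system; orthonormality and the limiting argument are standard, so in the final paper it would be entirely reasonable to state \cref{plancherel-identity} with a one-line citation to a reference on Gaussian/Hermite analysis. Downstream, this fact is used purely as a bookkeeping device: it expresses the noise correlation $\fplain{a}{b}{\cdot}$ as a sum over products of the individual Hermite coefficients of $\alfa$ and $\alfb$, so no quantitative refinement of it is required.
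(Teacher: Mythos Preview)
Your proposal is correct and follows the standard proof of Plancherel's identity for an orthonormal basis in a Hilbert space, specialized to the Hermite system in $L^2(\R,\gamma)$. The paper does not prove this statement at all: it is stated as a Fact with the one-line justification ``as Hermite polynomials form an orthonormal basis,'' so your instinct in the final paragraph---that this can be handled by a one-line citation---matches exactly what the paper does.
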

The generating function of appropriately normalized Hermite polynomials satisfies the following identity: 
\begin{equation}
\label[equation]{hermite:generating:function}
    e^{\,\tau\lambda - \lambda^2/2} 
    = 
    \sum_{k\geq 0} H_k(\tau)\cdot \frac{\lambda^{k}}{\sqrt{k!}} \mper
\end{equation}

Similar to the noise operator in Fourier analysis, we define the corresponding noise operator $\noise{}$
for Hermite analysis: 
\begin{definition}
    For $\rho \in [-1,1]$ and a real valued function $f$, we define the function $\noise{f}$ as:
    \[
        (\noise{f})(\tau) = \int_{\R} f\inparen{\rho\cdot \tau + \sqrt{1-\rho^2}\cdot
        \theta}\,d\gamma(\theta) =\Ex{\tau'\sim_\rho\,\tau}{f(\tau')}\mper
    \]
    \label[definition]{noise-operator}
\end{definition}
Again similar to the case of Fourier analysis, the Hermite coefficients admit the following identity:
\begin{fact}
    \label[fact]{hermite-noise}
    $
        \widehat{(\noise{f})}_k =  \rho^{k} \cdot \widehat{f}_k\mper
    $
\end{fact}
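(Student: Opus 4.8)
The plan is to reduce the claim to the single statement that the noise operator acts diagonally on the Hermite basis, namely $\noise{H_k} = \rho^{k}\cdot H_k$ for every $k\in\N$, and then to extend to an arbitrary $f$ with $\int_\R|f|^2\,d\gamma<\infty$ by linearity together with continuity of $\noise{}$ on $L^2(\gamma)$. That $\noise{}$ is a contraction on $L^2(\gamma)$ follows immediately from Jensen's inequality, since $\noise{}$ is an averaging operator and the first marginal of a $\rho$-correlated Gaussian pair is $\gamma$; hence $\noise{}$ commutes with $L^2(\gamma)$-limits, in particular with the Hermite expansion. So once $\noise{H_k}=\rho^k H_k$ is known, writing $f=\sum_{k}\widehat{f}_k H_k$ gives $\noise{f}=\sum_k \widehat{f}_k\,\rho^k H_k$, and since $(H_k)$ is an orthonormal basis this is the Hermite expansion of $\noise{f}$, so $\widehat{(\noise{f})}_k=\rho^k\widehat{f}_k$, which is the claim.

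To prove $\noise{H_k}=\rho^k H_k$, I would apply $\noise{}$ to both sides of the generating function identity \cref{hermite:generating:function}. On the left, using the definition of $\noise{}$ and the Gaussian moment generating function $\Ex{\theta\sim\gaussian{0}{1}}{e^{s\theta}}=e^{s^2/2}$,
\[
    \noise{e^{\tau\lambda-\lambda^2/2}}
    = e^{\rho\tau\lambda-\lambda^2/2}\cdot\Ex{\theta\sim\gaussian{0}{1}}{e^{\sqrt{1-\rho^2}\,\lambda\,\theta}}
    = e^{\rho\tau\lambda-\lambda^2/2}\cdot e^{(1-\rho^2)\lambda^2/2}
    = e^{\rho\tau\lambda-\rho^2\lambda^2/2},
\]
and the last expression is exactly the right-hand side of \cref{hermite:generating:function} with $\rho\lambda$ substituted for $\lambda$, \ie $\sum_{k\geq 0}H_k(\tau)\,(\rho\lambda)^k/\sqrt{k!}$. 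On the right, passing $\noise{}$ through the sum (legitimate by $L^2(\gamma)$-continuity, since the series converges in $L^2(\gamma)$ for each fixed $\lambda$) gives $\sum_{k\geq 0}(\noise{H_k})(\tau)\,\lambda^k/\sqrt{k!}$. Equating these two $L^2(\gamma)$-valued power series in $\lambda$ and matching the coefficient of $\lambda^k$ yields $\noise{H_k}=\rho^k H_k$, as desired.

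I do not expect any real obstacle here; this is a standard identity, and the only non-formal points are the interchange of $\noise{}$ with the infinite sum and the coefficient comparison in $\lambda$, both routine consequences of $\noise{}$ being a contraction on $L^2(\gamma)$. If one prefers to avoid the generating function, an alternative is to argue directly: each $H_k$ is a degree-$k$ polynomial, so $\noise{H_k}$ is again a polynomial of degree at most $k$ whose $\tau^k$-coefficient equals $\rho^k$ times that of $H_k$; and for $j<k$, self-adjointness of $\noise{}$ (a consequence of the symmetry of the $\rho$-correlated pair $(\tau,\tau')$) together with induction on $k$ gives $\mysmalldot{\noise{H_k}}{H_j}=\mysmalldot{H_k}{\noise{H_j}}=\rho^j\mysmalldot{H_k}{H_j}=0$, so $\noise{H_k}$ has no components below degree $k$ and hence equals $\rho^k H_k$. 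Either way the proof is short; the generating-function computation is the most economical.
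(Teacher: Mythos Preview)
The paper does not prove this statement; it is simply stated as a standard fact about the Gaussian noise operator, without argument. Your proof is correct and entirely standard: applying $\noise{}$ to the Hermite generating function and reading off coefficients is the cleanest way to see $\noise{H_k}=\rho^k H_k$, and the extension to general $f\in L^2(\gamma)$ via contractivity is routine. Nothing to compare against here.
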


We recall that the $\fplain{a}{b}{\rho}=\Ex{\bfg_1\sim\rho\,\bfg_2}{\alfa(\bfg_1)\cdot \alfb(\bfg_2))}$,
where $\fwild{c}{\tau}:=\sgn(\tau)\cdot \abs{\tau}^c$ for $c\in \{a,b\}$. 
As mentioned at the start of the section, we now note that $\nf{a}{b}{\rho}$ is the noise correlation 
of $\alfa$ and $\alfb$. Thus we can relate the Taylor coefficients of $\nf{a}{b}{\rho}$, to the Hermite 
coefficients of $\alfa$ and $\alfb$. 
\begin{claim}[Coefficients of $\fplain{a}{b}{\rho}$]
    \label[claim]{noise:correlation} For $\rho\in[-1,1]$, we have:
    \[
        \fplain{a}{b}{\rho}=\sum_{k\ge 0}\rho^{2k+1}\cdot \hfac{2k+1}\cdot \hfbc{2k+1}\mcom
    \] where $\hfac{i}$ and $\hfbc{j}$ are the $i$-th and $j$-th Hermite coefficients of
    $\alfa$ and $\alfb$, respectively. Moreover, $\hfac{2k}=\hfbc{2k}=0$ for $k\ge 0$.
\end{claim}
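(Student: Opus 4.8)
The plan is to recognize $\fplain{a}{b}{\rho}$ as a Gaussian noise correlation of the two functions $\alfa$ and $\alfb$, and then simply read off its Taylor coefficients from the Hermite expansions of these two functions. First I would record that $\alfa$ and $\alfb$ lie in $L^2(\gamma)$: since $a,b\in[0,1]$ we have $\int_{\R}|\tau|^{2a}\,d\gamma(\tau)<\infty$ and $\int_{\R}|\tau|^{2b}\,d\gamma(\tau)<\infty$, so both functions admit convergent Hermite expansions; moreover $\noise{\alfa}\in L^2(\gamma)$ as well since $\noise{}$ is a contraction on $L^2(\gamma)$ for $\abs{\rho}\le 1$, so \cref{plancherel-identity} is applicable to $\noise{\alfa}$ and $\alfb$.

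Next I would rewrite the correlation via a conditioning argument. Writing $\bfg_1 = \rho\,\bfg_2 + \sqrt{1-\rho^2}\,\bfg_3$ with $(\bfg_2,\bfg_3)\sim\gaussian{0}{\id_2}$, conditioning on $\bfg_2 = \tau$ gives $\Ex{\alfa(\bfg_1)\mid \bfg_2 = \tau} = (\noise{\alfa})(\tau)$ directly from \cref{noise-operator}; taking the outer expectation over $\bfg_2$ then yields $\fplain{a}{b}{\rho} = \mysmalldot{\noise{\alfa}}{\alfb}$. Applying \cref{hermite-noise} to get $\widehat{(\noise{\alfa})}_k = \rho^{k}\,\hfac{k}$, followed by \cref{plancherel-identity}, gives
\[
    \fplain{a}{b}{\rho} \;=\; \mysmalldot{\noise{\alfa}}{\alfb} \;=\; \sum_{k\ge 0} \widehat{(\noise{\alfa})}_k\cdot \hfbc{k} \;=\; \sum_{k\ge 0} \rho^{k}\cdot \hfac{k}\cdot \hfbc{k}\mper
\]

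Finally I would eliminate the even-index terms by parity. The function $\fwild{c}{\tau} = \sgn(\tau)\abs{\tau}^{c}$ is odd in $\tau$, while by \cref{even:odd:hermite} the polynomial $H_{2k}$ is even; since $\gamma$ is symmetric about the origin, the integrand in $\hfac{2k} = \mysmalldot{\alfa}{H_{2k}}$ is odd and hence integrates to $0$, and the same holds for $\hfbc{2k}$. Dropping these vanishing terms and reindexing $k\mapsto 2k+1$ in the display above gives exactly $\fplain{a}{b}{\rho}=\sum_{k\ge 0}\rho^{2k+1}\,\hfac{2k+1}\,\hfbc{2k+1}$, together with the assertion $\hfac{2k}=\hfbc{2k}=0$. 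I do not expect a genuine obstacle here; the only points that need any care are verifying the $L^2(\gamma)$ membership of $\alfa,\alfb$ (so that Plancherel is legitimate) and cleanly justifying that the conditional expectation collapses the correlation into $\mysmalldot{\noise{\alfa}}{\alfb}$.
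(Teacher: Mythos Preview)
Your proposal is correct and follows essentially the same argument as the paper: rewrite $\fplain{a}{b}{\rho}$ as $\mysmalldot{\noise{\alfa}}{\alfb}$ (the paper applies $T_\rho$ to $\alfb$ rather than $\alfa$, which is immaterial by self-adjointness), expand via \cref{plancherel-identity} and \cref{hermite-noise}, and kill the even-index terms using oddness of $\alfwild{c}$ together with \cref{even:odd:hermite}. Your added remarks on $L^2(\gamma)$ membership and the contraction property of $T_\rho$ simply make explicit what the paper leaves implicit.
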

\begin{proof}
    We observe that both $\alfa$ and $\alfb$ are odd functions and hence \cref{even:odd:hermite} implies that
    $\hfac{2k}=\hfbc{2k}=0$ for all $k\ge 0$ -- as $\alfa(\tau)\cdot H_{2k}(\tau)$ is an odd function of
    $\tau$.
    \begin{align*}
        \fplain{a}{b}{\rho} &=  \Ex{\bfg_1\sim\rho\,\bfg_2}{\alfa(\bfg_1)\cdot \alfb(\bfg_2))}\\
        &= \Ex{\bfg_1}{\alfa(\bfg_1)\cdot \noise{\alfb}(\bfg_1)} & (\text{\cref{noise-operator}})\\
        &= \mysmalldot{\alfa}{\noise{\alfb}}\\
    &= \sum_{k\ge 0} \hfac{k}\cdot \widehat{(\noise{\alfb})}_{k} & (\text{\cref{plancherel-identity}})\\
        &= \sum_{k\ge 0} \hfac{2k+1}\cdot \widehat{(\noise{\alfb})}_{2k+1} \\
        &= \sum_{k\ge 0} \rho^{2k+1}\cdot \hfac{2k+1}\cdot \hfbc{2k+1} &
        (\text{\cref{hermite-noise}})\mper \qedforce
    \end{align*}
\let\qed\relax
\end{proof}

\subsection[Hermite Coefficients via Parabolic Cylinder Functions]
{Hermite Coefficients of $\alfa$ and $\alfb$ via Parabolic Cylinder Functions}
In this subsection, we use the generating function of Hermite polynomials to to obtain an integral
representation for the generating function of the ($\sqrt{k!}$ normalized) odd Hermite coefficients
of $\alfa$ (and similarly of $\alfb$) is closely related to 
a special function called the parabolic cylinder function. We then use known facts about the 
relation between parabolic cylinder functions and confluent hypergeometric functions, to show 
that the Hermite coefficients of $\alfwild{c}$ can be obtained from the Taylor coefficients of a 
confluent hypergeometric function.

Before we state and prove the main results of this subsection we need some preliminaries:

\subsubsection{Gamma, Hypergeometric and Parabolic Cylinder Function Preliminaries}
For a natural number $k$ and a real number $\tau$, we denote the rising factorial as
$(\tau)_k\defeq \tau \cdot (\tau+1)\cdot \cdots (\tau+k-1)$.
We now define the following fairly general classes of functions and we later use them we obtain a
Taylor series representation of $\fplain{a}{b}{\tau}$.
\begin{definition}
    \label[definition]{conf-hypergeometric}
    The confluent hypergeometric function with parameters $\alpha,\beta$, and $\lambda$ as:
    \[
        \confHG(\alpha\,;\beta\,;\lambda)\defeq \sum_{k} \frac{(\alpha)_k}{(\beta)_k}\cdot \frac{\lambda^k}{k!}\mper
    \]
\end{definition}
The (Gaussian) hypergeometric function is defined as follows:
\begin{definition}
    \label[definition]{hypergeometric}
    The hypergeometric function with parameters $w,\alpha,\beta$ and $\lambda$ as:
    \[
        \hypergeometric(w,\alpha\,;\beta\,;\lambda) \defeq \sum_k \frac{(w)_k\cdot (\alpha)_k}{(\beta)_k}\cdot
        \frac{\lambda^k}{k!}\mper
    \]
\end{definition}

    Next we define the $\Gamma$ function:
    \begin{definition}
	    For a real number $\tau$, we define:
	    \[
	        \Gamma(\tau)\defeq \int_{0}^{\infty}t^{\tau-1}\cdot \ee^{-t}\,dt\mper
	    \]
	    \label[definition]{Gamma-function}
	\end{definition} The $\Gamma$ function has the following property:
    \begin{fact}[Duplication Formula]
    \label[fact]{duplication:formula}
        \[
            \frac{\Gamma(2\tau)}{\Gamma(\tau)} = \frac{\Gamma(\tau+1/2)}{2^{1-2\tau}\sqrt{\pi}}
        \]
    \end{fact}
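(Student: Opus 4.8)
The statement is the Legendre duplication formula for $\Gamma$, and the plan is to derive it from the Beta integral. First I would establish the Beta--Gamma identity
\[
\int_0^1 t^{x-1}(1-t)^{y-1}\,dt ~=~ \frac{\Gamma(x)\,\Gamma(y)}{\Gamma(x+y)}
\]
for real $x,y>0$: writing the product $\Gamma(x)\Gamma(y)$ as the double integral $\int_0^\infty\!\int_0^\infty s^{x-1}t^{y-1}\ee^{-s-t}\,ds\,dt$, substituting $s=r(1-u)$, $t=ru$ (Jacobian $r$), and applying Tonelli's theorem (the integrand is nonnegative) to separate the $r$-integral, which evaluates to $\Gamma(x+y)$, from the $u$-integral, which evaluates to $\int_0^1 u^{y-1}(1-u)^{x-1}\,du$. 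Along the way I would record $\Gamma(1/2)=\sqrt{\pi}$, obtained from the defining integral via the substitution $t=\theta^2$ together with the Gaussian integral $\int_{\R}\ee^{-\theta^2}\,d\theta=\sqrt{\pi}$.

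Next I would compute the single integral $B(\tau,\tau):=\int_0^1\bigl(t(1-t)\bigr)^{\tau-1}\,dt$ in two different ways. By the identity just established, $B(\tau,\tau)=\Gamma(\tau)^2/\Gamma(2\tau)$. On the other hand, the substitution $t=(1+s)/2$ turns the integrand into $\bigl((1-s^2)/4\bigr)^{\tau-1}$ with $dt=ds/2$, and then using the symmetry of $(1-s^2)^{\tau-1}$ about $s=0$ followed by $u=s^2$,
\[
B(\tau,\tau) ~=~ 4^{\,1-\tau}\!\int_0^1 (1-s^2)^{\tau-1}\,ds ~=~ \frac{4^{\,1-\tau}}{2}\!\int_0^1 u^{-1/2}(1-u)^{\tau-1}\,du ~=~ 2^{\,1-2\tau}\,\frac{\Gamma(1/2)\,\Gamma(\tau)}{\Gamma(\tau+1/2)}.
\]
Equating the two expressions for $B(\tau,\tau)$, cancelling a factor of $\Gamma(\tau)$, inserting $\Gamma(1/2)=\sqrt{\pi}$, and taking reciprocals yields $\Gamma(2\tau)/\Gamma(\tau)=\Gamma(\tau+1/2)/(2^{1-2\tau}\sqrt{\pi})$, which is exactly the claimed formula. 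For $\tau\le 0$ (not needed anywhere in this paper, where every invocation has $\tau>0$) the identity extends by analytic continuation using $\Gamma(\tau+1)=\tau\,\Gamma(\tau)$.

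The only place requiring genuine care is the measure-theoretic bookkeeping in the Beta--Gamma identity --- the change of variables and the interchange of the order of integration --- but since all integrands here are nonnegative and the parameters lie in the convergent range $\tau>0$, Tonelli applies without fuss, so this is not a real obstacle; everything else is routine manipulation of constants.
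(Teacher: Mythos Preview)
Your proof is correct. The paper does not actually prove this statement: it is recorded as a \emph{Fact} (the Legendre duplication formula) and simply cited as a known identity, so there is no ``paper's own proof'' to compare against. Your derivation via the Beta--Gamma identity and the evaluation of $B(\tau,\tau)$ two ways is the standard classical argument and is carried out cleanly; the only quibble is cosmetic --- the paper only ever invokes the formula for $\tau>0$, so the analytic-continuation remark, while correct, is superfluous here.
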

	We also note the relationship between $\Gamma$ and $\gamma_r$:
	\begin{fact}
	\label[fact]{gamma:and:Gamma}
	    For $r\in [0,\infty)$, 
	    \[
	        \gamma_r^{r}~:=~\Ex{\bfg\sim \gaussian{0}{1}}{|\bfg|^r}~=~\frac{2^{r/2}}{\sqrt{\pi}}\cdot
	        \Gamma\inparen{\frac{1+r}{2}} \mper
	    \]
	\end{fact}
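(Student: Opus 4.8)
The plan is to directly evaluate the Gaussian absolute moment by reducing it to the integral defining $\Gamma$ via a change of variables. First I would write out the expectation against the standard Gaussian density,
\[
\Ex{\bfg\sim\gaussian{0}{1}}{|\bfg|^r} ~=~ \frac{1}{\sqrt{2\pi}}\int_{-\infty}^{\infty} |\tau|^r\, \ee^{-\tau^2/2}\,d\tau ~=~ \frac{2}{\sqrt{2\pi}}\int_{0}^{\infty} \tau^r\, \ee^{-\tau^2/2}\,d\tau\mcom
\]
where the second equality uses that $|\tau|^r \ee^{-\tau^2/2}$ is even in $\tau$.

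Next I would perform the substitution $t = \tau^2/2$, so that $\tau = \sqrt{2t}$ and $d\tau = dt/\sqrt{2t}$, giving $\tau^r = (2t)^{r/2}$ and
\[
\int_{0}^{\infty} \tau^r\, \ee^{-\tau^2/2}\,d\tau ~=~ \int_{0}^{\infty} (2t)^{r/2}\, \ee^{-t}\,\frac{dt}{\sqrt{2t}} ~=~ \frac{2^{r/2}}{\sqrt{2}}\int_{0}^{\infty} t^{(r-1)/2}\, \ee^{-t}\,dt ~=~ \frac{2^{r/2}}{\sqrt{2}}\cdot \Gamma\!\inparen{\frac{r+1}{2}}\mcom
\]
where the last step is exactly \cref{Gamma-function} applied with parameter $\tau = (r+1)/2$ (which is positive since $r\geq 0$, so the integral converges). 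Substituting back yields
\[
\Ex{\bfg\sim\gaussian{0}{1}}{|\bfg|^r} ~=~ \frac{2}{\sqrt{2\pi}}\cdot \frac{2^{r/2}}{\sqrt{2}}\cdot \Gamma\!\inparen{\frac{r+1}{2}} ~=~ \frac{2^{r/2}}{\sqrt{\pi}}\cdot \Gamma\!\inparen{\frac{1+r}{2}}\mcom
\]
after simplifying $\tfrac{2}{\sqrt{2\pi}\cdot\sqrt{2}} = \tfrac{2}{2\sqrt{\pi}} = \tfrac{1}{\sqrt{\pi}}$, which is the claimed identity.

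There is no real obstacle here — the only thing to be careful about is bookkeeping the constant factors through the substitution, and noting the convergence condition $r \geq 0$ so that the exponent $(r-1)/2$ in the $\Gamma$-integrand stays above $-1$. The statement does not rely on the Duplication Formula (\cref{duplication:formula}), though that fact will presumably be used elsewhere in combination with this one.
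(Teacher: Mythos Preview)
Your proof is correct and follows essentially the same route as the paper: write the expectation as an integral, use evenness to restrict to $[0,\infty)$, and substitute $t=\tau^2/2$ to reduce to the defining integral of $\Gamma$. The paper's computation is organized slightly differently (it first rewrites $\tau^r$ as $2^{(r-1)/2}(\tau^2/2)^{(r-1)/2}\cdot\tau$ before substituting), but the argument is the same.
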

	\begin{proof}
	    \begin{align*}
	        \Ex{\bfg\sim\gaussian{0}{1}}{\abs{\bfg}^r} &= \frac{\sqrt{2}}{\sqrt{\pi}}\cdot\int_{0}^{\infty}
	        \abs{\bfg}^r\cdot\ee^{-\nfrac{\bfg^2}{2}}\,d\bfg\\
	        &= \sqrt{\frac{2}{\pi}}\cdot 2^{(r-1)/2}\cdot\int_{0}^{\infty}\abs{\frac{\bfg^2}{2}
	        }^{(r-1)/2}\!\cdot\ee^{-\nfrac{\bfg^2}{2}}\!\cdot \bfg\,d\bfg\\
	        &= \frac{2^{r/2}}{\sqrt{\pi}}\cdot\Gamma\inparen{\frac{1+r}{2}} \qedforce
	    \end{align*} 
    \let\qed\relax
	\end{proof}

Next, we record some facts about parabolic cylinder functions:
\begin{fact}[12.5.1 of \cite{Lozier03}]
    \label[fact]{integral-parabolic}
    Let $U$ be the function defined as 
    \[
        U(\alpha,\lambda)\defeq\frac{\ee^{\lambda^2/4}}{\Gamma\inparen{\frac{1}{2}+\alpha}}
        \int_{0}^{\infty} t^{\alpha-1/2}\cdot \ee^{-(t+\lambda)^2/2} \,dt\mcom
    \]for all $\alpha$ such that $\Re(\alpha)>-\frac{1}{2}\mper$
    The function $U(\alpha,\pm \lambda)$ is a parabolic cylinder function and is a standard
    solution to the differential equation: $\frac{d^2 w}{d\lambda^2}-\inparen{\frac{\lambda^2}{4}+\alpha}w=0$. 
\end{fact}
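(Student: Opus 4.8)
The plan is to verify directly that $w(\lambda):=U(\alpha,\lambda)$ satisfies the parabolic cylinder equation $w'' - \inparen{\lambda^2/4+\alpha}\,w = 0$. Once this is done, the corresponding statement for $U(\alpha,-\lambda)$ is free: the coefficient $\lambda^2/4+\alpha$ is even in $\lambda$, so the substitution $\lambda\mapsto -\lambda$ carries solutions to solutions. (The word ``standard'' in the statement refers only to the standard labelling of the solutions of this ODE — the one distinguished by its decay as $\lambda\to+\infty$ — and is not part of the verifiable content, which is just the ODE.)

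First I would peel off the Gaussian prefactor. Completing the square, $(t+\lambda)^2/2 = t^2/2 + t\lambda + \lambda^2/2$, so $\ee^{\lambda^2/4}\ee^{-(t+\lambda)^2/2} = \ee^{-\lambda^2/4}\ee^{-t^2/2-t\lambda}$ and hence
\[
    U(\alpha,\lambda) ~=~ \frac{\ee^{-\lambda^2/4}}{\Gamma\inparen{\tfrac12+\alpha}}\,W(\lambda)\mcom
    \qquad
    W(\lambda) ~:=~ \int_0^\infty t^{\alpha-1/2}\,\ee^{-t^2/2-t\lambda}\,dt\mper
\]
The hypothesis $\Re(\alpha)>-\tfrac12$ makes $t^{\alpha-1/2}$ integrable near $t=0$, while the Gaussian factor $\ee^{-t^2/2}$ kills everything at $t=\infty$ (uniformly for $\lambda$ in compact sets, even after inserting extra powers of $t$); so $W$ is smooth in $\lambda$ and one may differentiate under the integral, obtaining $W'(\lambda) = -\int_0^\infty t^{\alpha+1/2}\ee^{-t^2/2-t\lambda}\,dt$ and $W''(\lambda) = \int_0^\infty t^{\alpha+3/2}\ee^{-t^2/2-t\lambda}\,dt$.

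The single substantive step is to extract an ODE for $W$. I would integrate the exact derivative
\[
    \frac{d}{dt}\inparen{t^{\alpha+1/2}\ee^{-t^2/2-t\lambda}}
    = \inparen{\alpha+\tfrac12}t^{\alpha-1/2}\ee^{-t^2/2-t\lambda}
      - t^{\alpha+3/2}\ee^{-t^2/2-t\lambda}
      - \lambda\,t^{\alpha+1/2}\ee^{-t^2/2-t\lambda}
\]
over $(0,\infty)$. The boundary contribution vanishes at $t=0$ because $\Re(\alpha+\tfrac12)>0$ and at $t=\infty$ by Gaussian decay, so the integral of the left side is $0$; recognizing the three integrals on the right as $(\alpha+\tfrac12)W$, $W''$, and $-\lambda W'$ gives $W'' - \lambda W' - \inparen{\alpha+\tfrac12}W = 0$. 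Finally, substituting $W = \Gamma\inparen{\tfrac12+\alpha}\,\ee^{\lambda^2/4}\,U$ and using $W' = \Gamma\inparen{\tfrac12+\alpha}\ee^{\lambda^2/4}\inparen{\tfrac\lambda2 U + U'}$ and $W'' = \Gamma\inparen{\tfrac12+\alpha}\ee^{\lambda^2/4}\inparen{U''+\lambda U'+\tfrac12 U+\tfrac{\lambda^2}4 U}$, the $\lambda U'$ and $\tfrac12 U$ terms cancel against those coming from $-\lambda W'$, and after dividing out the common nonzero factor one is left with exactly $U'' - \inparen{\tfrac{\lambda^2}4+\alpha}U = 0$.

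Everything here is bookkeeping except the vanishing of the boundary terms in the integration by parts — and that is precisely where $\Re(\alpha)>-\tfrac12$ is needed (it is also what makes $W$ converge in the first place). If one wanted to avoid even that, an alternative is to expand $\ee^{-t\lambda}$ in powers of $\lambda$, integrate term by term against $t^{\alpha-1/2}\ee^{-t^2/2}$ using the $\Gamma$-integral and the duplication formula, recognize the resulting series as $\ee^{-\lambda^2/4}$ times a $\confHG$ in $\lambda^2/2$, and then invoke Kummer's equation; but the direct computation above is shorter and self-contained.
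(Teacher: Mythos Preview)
Your argument is correct. The paper itself does not prove this statement: it is recorded as a \emph{Fact} cited from the reference \cite{Lozier03} (the DLMF entry 12.5.1) and used as a black box. What you have supplied is the standard self-contained verification---peel off the Gaussian prefactor so that $U=\ee^{-\lambda^2/4}W/\Gamma\inparen{\tfrac12+\alpha}$, differentiate $W$ under the integral, and obtain a first-order recursion among the moments by integrating the exact $t$-derivative of $t^{\alpha+1/2}\ee^{-t^2/2-t\lambda}$; the resulting ODE $W''-\lambda W'-(\alpha+\tfrac12)W=0$ transforms to the parabolic cylinder equation for $U$ under the substitution. The only analytic ingredient, as you note, is the convergence and boundary vanishing guaranteed by $\Re(\alpha)>-\tfrac12$, and your treatment of that is fine. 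Since the paper offers no proof to compare against, there is nothing further to contrast; your write-up simply fills in what the paper takes for granted.
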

Next we quote the confluent hypergeometric representation of the parabolic cylinder function $U$ 
defined above:
\begin{fact}[12.4.1, 12.2.6, 12.2.7, 12.7.12, and 12.7.13 of \cite{Lozier03}]
    \label[fact]{power-parabolic}
    \begin{align*}
        U(\alpha,\lambda)=\frac{\sqrt{\pi}}{2^{\alpha/2+1/4}\cdot
        \Gamma\inparen{\frac{3}{4}+\frac{\alpha}{2}}}\cdot
        \ee^{\lambda^2/4}\cdot\confHG\inparen{-\frac{1}{2}\alpha+\frac{1}{4}\,;\,
        \frac{1}{2}\,;\,-\frac{\lambda^2}{2}}\\
        -\frac{\sqrt{\pi}}{2^{\alpha/2-1/4}\cdot
        \Gamma\inparen{\frac{1}{4}+\frac{\alpha}{2}}}\cdot\lambda\cdot\ee^{\lambda^2/4}
        \cdot\confHG\inparen{-\frac{\alpha}{2}+\frac{3}{4}\,;\,\frac{3}{2}\,;\,-\frac{\lambda^2}{2}}
    \end{align*}
\end{fact}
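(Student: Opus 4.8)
The plan is to treat the right-hand side as a rewriting of the general solution of the parabolic cylinder (Weber) equation and to pin down its two integration constants using the defining integral of $U$. Concretely: (i) exhibit an explicit even/odd fundamental system of solutions of $w'' - (\lambda^2/4 + \alpha)w = 0$ built from confluent hypergeometric functions; (ii) since \cref{integral-parabolic} guarantees $U(\alpha,\cdot)$ solves this same second-order linear equation (this can also be checked directly by differentiating under the integral sign twice), expand $U$ in that basis; (iii) compute the two expansion coefficients, which are $U(\alpha,0)$ and $\partial_\lambda U(\alpha,\lambda)|_{\lambda=0}$, directly from the integral; (iv) simplify the resulting $\Gamma$-ratios to the prefactors in the statement using the duplication formula.

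For step (i), substituting $w(\lambda) = \ee^{-\lambda^2/4} f(\lambda)$ reduces the equation to $f'' - \lambda f' - (\alpha + \tfrac12) f = 0$, and then the change of variable $\zeta = \lambda^2/2$ brings it to Kummer's equation $\zeta f_{\zeta\zeta} + (\tfrac12 - \zeta) f_\zeta - (\tfrac{\alpha}{2} + \tfrac14) f = 0$, whose two standard solutions are $\confHG(\tfrac{\alpha}{2}+\tfrac14\,;\,\tfrac12\,;\,\zeta)$ and $\zeta^{1/2}\,\confHG(\tfrac{\alpha}{2}+\tfrac34\,;\,\tfrac32\,;\,\zeta)$. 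Undoing the two substitutions and then applying Kummer's transformation $\confHG(\mu\,;\,\nu\,;\,z) = \ee^{z}\,\confHG(\nu-\mu\,;\,\nu\,;\,-z)$ to flip the exponential and the argument gives the manifestly entire pair
\[
    w_1(\alpha,\lambda) = \ee^{\lambda^2/4}\,\confHG\!\inparen{-\tfrac{\alpha}{2}+\tfrac14\,;\,\tfrac12\,;\,-\tfrac{\lambda^2}{2}},
    \qquad
    w_2(\alpha,\lambda) = \lambda\,\ee^{\lambda^2/4}\,\confHG\!\inparen{-\tfrac{\alpha}{2}+\tfrac34\,;\,\tfrac32\,;\,-\tfrac{\lambda^2}{2}} .
\]
Reading off the power series, $w_1$ is even with $w_1(\alpha,0) = 1$ and $\partial_\lambda w_1(\alpha,0) = 0$, while $w_2$ is odd with $w_2(\alpha,0) = 0$ and $\partial_\lambda w_2(\alpha,0) = 1$; hence their Wronskian at $0$ equals $1$ and they form a basis of the solution space.

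For steps (ii)--(iii): writing $U(\alpha,\lambda) = c_1(\alpha)\,w_1(\alpha,\lambda) + c_2(\alpha)\,w_2(\alpha,\lambda)$ and matching value and first derivative at $\lambda = 0$ forces $c_1(\alpha) = U(\alpha,0)$ and $c_2(\alpha) = \partial_\lambda U(\alpha,\lambda)|_{\lambda=0}$. From the integral in \cref{integral-parabolic}, setting $\lambda = 0$ and substituting $s = t^2/2$ gives, via \cref{Gamma-function},
\[
    U(\alpha,0) = \frac{1}{\Gamma(\tfrac12+\alpha)}\int_0^\infty t^{\alpha-1/2}\,\ee^{-t^2/2}\,dt = \frac{2^{\alpha/2-3/4}\,\Gamma(\tfrac{\alpha}{2}+\tfrac14)}{\Gamma(\tfrac12+\alpha)} ;
\]
differentiating under the integral sign, the term carrying the factor $\tfrac{\lambda}{2}$ (from $\partial_\lambda \ee^{\lambda^2/4}$) vanishes at $\lambda = 0$, leaving
\[
    \partial_\lambda U(\alpha,\lambda)\big|_{\lambda=0} = \frac{-1}{\Gamma(\tfrac12+\alpha)}\int_0^\infty t^{\alpha+1/2}\,\ee^{-t^2/2}\,dt = \frac{-\,2^{\alpha/2-1/4}\,\Gamma(\tfrac{\alpha}{2}+\tfrac34)}{\Gamma(\tfrac12+\alpha)} .
\]
Finally, for step (iv), one checks that these agree with the stated prefactors: both equalities $2^{\alpha/2-3/4}\Gamma(\tfrac{\alpha}{2}+\tfrac14)/\Gamma(\tfrac12+\alpha) = \sqrt{\pi}/(2^{\alpha/2+1/4}\Gamma(\tfrac34+\tfrac{\alpha}{2}))$ and $2^{\alpha/2-1/4}\Gamma(\tfrac{\alpha}{2}+\tfrac34)/\Gamma(\tfrac12+\alpha) = \sqrt{\pi}/(2^{\alpha/2-1/4}\Gamma(\tfrac14+\tfrac{\alpha}{2}))$ collapse, after clearing powers of $2$, to the single identity $2^{\alpha-1/2}\,\Gamma(\tfrac{\alpha}{2}+\tfrac14)\,\Gamma(\tfrac{\alpha}{2}+\tfrac34) = \sqrt{\pi}\,\Gamma(\alpha+\tfrac12)$, which is exactly \cref{duplication:formula} with $\tau = \tfrac{\alpha}{2}+\tfrac14$. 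This is valid for real $\alpha > -\tfrac12$, the range in which $U$ is defined and which is all that is needed in the sequel.

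I expect no serious obstacle: the whole argument is a guided verification. The two places demanding care are the sign/transformation bookkeeping in step (i) — one must apply Kummer's transformation to convert the solutions produced by the $\ee^{-\lambda^2/4}$ substitution into the $\ee^{+\lambda^2/4}$, $-\lambda^2/2$ form of the statement, making sure that $-\tfrac{\alpha}{2}+\tfrac14$ and $-\tfrac{\alpha}{2}+\tfrac34$ (rather than $+\tfrac{\alpha}{2}+\cdots$) are the parameters that appear — and the $\Gamma$-function manipulation in step (iv), which is a one-line application of the duplication formula once the powers of $2$ are collected correctly.
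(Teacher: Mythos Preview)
Your derivation is correct. The paper does not prove this statement at all: it is recorded as a \emph{Fact} and simply cited from the DLMF entries 12.4.1, 12.2.6, 12.2.7, 12.7.12, and 12.7.13. So there is no ``paper's own proof'' to compare against; you have supplied a self-contained argument where the authors appeal to a standard reference.

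For what it is worth, your route is exactly the classical one: reduce Weber's equation to Kummer's equation by the substitution $w=\ee^{-\lambda^2/4}f$ and $\zeta=\lambda^2/2$, extract the even/odd fundamental pair, recast it via Kummer's transformation into the $\ee^{+\lambda^2/4}\,\confHG(\cdot;\cdot;-\lambda^2/2)$ form, and fix the two constants from $U(\alpha,0)$ and $U_\lambda(\alpha,0)$ using the integral in \cref{integral-parabolic}. The two Gamma identities you need in step~(iv) both collapse to the single instance of \cref{duplication:formula} with $\tau=\tfrac{\alpha}{2}+\tfrac14$, as you note. The only point to be slightly careful about is the justification of differentiation under the integral sign in the computation of $U_\lambda(\alpha,0)$; for $\Re(\alpha)>-\tfrac12$ the integrand and its $\lambda$-derivative are dominated uniformly on compact $\lambda$-sets by an integrable function, so this is routine.
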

Combining the previous two facts, we get the following:
\begin{corollary}
    \label[corollary]{parabolic-conf-hypergeometric}
    For all real $\alpha>-\frac{1}{2}$, we have:
    \begin{align*} 
        \int_{0}^{\infty} t^{\alpha-1/2}\cdot \ee^{-(t+\lambda)^2/2} \,dt ~~=~
        \frac{\sqrt{\pi}\cdot\Gamma\inparen{\frac{1}{2}+\alpha}}{2^{\alpha/2+1/4}
        \cdot\Gamma\inparen{\frac{3}{4}+\frac{\alpha}{2}}}
        \cdot\confHG\inparen{-\frac{\alpha}{2}+\frac{1}{4}\,;\,\frac{1}{2}\,;\,-\frac{\lambda^2}{2}}
        \\-\frac{\sqrt{\pi}\cdot\Gamma\inparen{\frac{1}{2}+\alpha}}{2^{\alpha/2-1/4}
        \cdot\Gamma\inparen{\frac{1}{4}+\frac{\alpha}{2}}}
        \cdot \lambda\cdot\confHG\inparen{-\frac{\alpha}{2}+\frac{3}{4}\,;\,\frac{3}{2}\,;\,-\frac{\lambda^2}{2}}
        \mper
    \end{align*}
\end{corollary}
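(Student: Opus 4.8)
The plan is to obtain \cref{parabolic-conf-hypergeometric} as a mechanical composition of the two quoted facts: solve for the integral in \cref{integral-parabolic}, then substitute the closed form of $U(\alpha,\lambda)$ given by \cref{power-parabolic}, and simplify the exponential prefactors. Note first that the admissibility range is consistent: \cref{integral-parabolic} is asserted exactly for $\Re(\alpha)>-\tfrac12$, which is also the range in the corollary (and in that range the integral converges — the only potential issue is integrability of $t^{\alpha-1/2}$ near $t=0$, which holds precisely when $\Re(\alpha)>-\tfrac12$, while the Gaussian factor handles $t\to\infty$).

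First I would rearrange the defining integral representation of \cref{integral-parabolic}: multiplying both sides by $\Gamma\inparen{\tfrac12+\alpha}\cdot e^{-\lambda^2/4}$ gives
\[
    \int_{0}^{\infty} t^{\alpha-1/2}\cdot \ee^{-(t+\lambda)^2/2} \,dt ~=~ \Gamma\inparen{\tfrac12+\alpha}\cdot \ee^{-\lambda^2/4}\cdot U(\alpha,\lambda) \mper
\]
Then I would substitute the two-term expression for $U(\alpha,\lambda)$ supplied by \cref{power-parabolic}. Each of those two terms carries a factor $\ee^{\lambda^2/4}$, which cancels against the $\ee^{-\lambda^2/4}$ above. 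What remains is $\Gamma\inparen{\tfrac12+\alpha}$ times the two confluent-hypergeometric terms with their constants $\sqrt{\pi}/\inparen{2^{\alpha/2+1/4}\Gamma\inparen{\tfrac34+\tfrac{\alpha}{2}}}$ and $\sqrt{\pi}/\inparen{2^{\alpha/2-1/4}\Gamma\inparen{\tfrac14+\tfrac{\alpha}{2}}}$; collecting these produces exactly the prefactors $\sqrt{\pi}\,\Gamma\inparen{\tfrac12+\alpha}/\inparen{2^{\alpha/2+1/4}\Gamma\inparen{\tfrac34+\tfrac{\alpha}{2}}}$ and $\sqrt{\pi}\,\Gamma\inparen{\tfrac12+\alpha}/\inparen{2^{\alpha/2-1/4}\Gamma\inparen{\tfrac14+\tfrac{\alpha}{2}}}$ in the statement, with the first hypergeometric parameter rewritten from $-\tfrac12\alpha+\tfrac14$ to the equivalent $-\tfrac{\alpha}{2}+\tfrac14$ and the second-term factor $\lambda$ carried through unchanged.

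Honestly, there is no real obstacle here: the mathematical content lies entirely in \cref{integral-parabolic} and \cref{power-parabolic}, which are cited from \cite{Lozier03}, and \cref{parabolic-conf-hypergeometric} is just their composition. The only things worth double-checking are bookkeeping items — that the $\ee^{\pm\lambda^2/4}$ factors really do cancel (they do), that the constant factors multiply correctly, and that the hypothesis $\Re(\alpha)>-\tfrac12$ is common to both inputs (it is). I would therefore present the proof as a two-line substitution.
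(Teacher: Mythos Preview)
Your proposal is correct and follows exactly the paper's approach: the paper simply says ``Combining the previous two facts, we get the following,'' and your two-line substitution (solve \cref{integral-parabolic} for the integral, plug in \cref{power-parabolic}, cancel the $\ee^{\pm\lambda^2/4}$ factors) is precisely that combination spelled out.
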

\subsubsection{Generating Function of Hermite Coefficients and its Confluent Hypergeometric Representation}
Using the generating function of (appropriately normalized) Hermite polynomials, we derive an integral 
representation for the generating function of the (appropriately normalized) Hermite coefficients of 
$\alfa$ (and similarly $\alfb$):
\begin{lemma}
    \label[lemma]{gen:function:integral:rep}
    For $c\in\{a,b\}$, let $\hfwildc{c}{k}$ denote the $k$-th Hermite coefficient of 
    $\fwild{c}{\tau} := \sgn{(\tau)}\cdot |\tau|^{c}$. 
    Then we have the following identity:
    \[
        \sum_{k\geq 0} \frac{\lambda^{2k+1}}{\sqrt{(2k+1)!}}\cdot \hfwildc{c}{2k+1}=
        \frac{1}{\sqrt{2\pi}}\int_{0}^{\infty} \tau^{c}\cdot 
        \inparen{\ee^{-(\tau-\lambda)^2/2} - \ee^{-(\tau+\lambda)^2/2}} \,d\tau \mper
    \] 
\end{lemma}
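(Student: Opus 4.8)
The plan is to read off the generating function of the Hermite coefficients directly from the definition of $\hfwildc{c}{k}$, using the Hermite generating‑function identity, and then to unfold the resulting Gaussian integral into the claimed difference of two integrals over $(0,\infty)$. First I would recall that $\hfwildc{c}{k} = \mysmalldot{\alfwild{c}}{H_k} = \int_{\R} \fwild{c}{\tau}\, H_k(\tau)\, d\gamma(\tau)$, and observe that $\alfwild{c} \in L^2(\R,\gamma)$ since $c\le 1$ gives $\int_{\R}|\tau|^{2c}\, d\gamma(\tau) = \gamma_{2c}^{2c} < \infty$.

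Next, consider the auxiliary function $g_\lambda(\tau) := \ee^{\tau\lambda - \lambda^2/2}$. By the Hermite generating‑function identity \cref{hermite:generating:function}, $g_\lambda = \sum_{k\ge 0}\frac{\lambda^k}{\sqrt{k!}}\,H_k$, so its $k$‑th Hermite coefficient is exactly $\lambda^k/\sqrt{k!}$, and $g_\lambda\in L^2(\R,\gamma)$ because $\sum_{k\ge 0}\lambda^{2k}/k! = \ee^{\lambda^2}<\infty$. Applying Plancherel (\cref{plancherel-identity}) to $\alfwild{c}$ and $g_\lambda$ then yields
\[
\sum_{k\ge 0}\frac{\lambda^{k}}{\sqrt{k!}}\,\hfwildc{c}{k}
~=~\mysmalldot{\alfwild{c}}{g_\lambda}
~=~\int_{\R}\fwild{c}{\tau}\,\ee^{\tau\lambda-\lambda^2/2}\,d\gamma(\tau)
~=~\frac{1}{\sqrt{2\pi}}\int_{\R}\fwild{c}{\tau}\,\ee^{-(\tau-\lambda)^2/2}\,d\tau,
\]
using $d\gamma(\tau)=\frac{1}{\sqrt{2\pi}}\ee^{-\tau^2/2}\,d\tau$ and completing the square in the last step.

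It then remains to simplify both sides. On the left, $\alfwild{c}$ is odd while $H_{2k}$ is even (\cref{even:odd:hermite}), so $\hfwildc{c}{2k}=0$ and the sum collapses to $\sum_{k\ge 0}\frac{\lambda^{2k+1}}{\sqrt{(2k+1)!}}\,\hfwildc{c}{2k+1}$, which is precisely the left‑hand side of the lemma. On the right, split the integral over $\R$ at $0$; on $(-\infty,0)$ substitute $\tau\mapsto-\tau$ and use $\fwild{c}{-\tau}=-\tau^c$ and $(-\tau-\lambda)^2=(\tau+\lambda)^2$ for $\tau>0$, turning that half of the integral into $-\frac{1}{\sqrt{2\pi}}\int_0^\infty \tau^c\, \ee^{-(\tau+\lambda)^2/2}\,d\tau$, while the half over $(0,\infty)$ is already $\frac{1}{\sqrt{2\pi}}\int_0^\infty \tau^c\,\ee^{-(\tau-\lambda)^2/2}\,d\tau$. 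Adding the two gives the claimed right‑hand side.

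\textbf{Main obstacle.} There is no serious difficulty; the only step needing care is the interchange of summation and integration, but this is exactly what \cref{plancherel-identity} delivers once both $\alfwild{c}$ and $g_\lambda$ are known to lie in $L^2(\R,\gamma)$, which we verify above. (If one wished to avoid invoking Plancherel, the interchange also follows from absolute convergence of $\sum_k \frac{|\lambda|^k}{\sqrt{k!}}\int_{\R}|\fwild{c}{\tau}\,H_k(\tau)|\,d\gamma(\tau)$ via Cauchy–Schwarz and $\sum_k \lambda^{2k}/k!<\infty$, but the Plancherel route is cleaner.)
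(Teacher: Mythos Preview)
Your proof is correct and follows essentially the same route as the paper: both arguments substitute the Hermite generating function identity \cref{hermite:generating:function} into the Gaussian integral defining the coefficients and exploit the oddness of $\alfwild{c}$. The only cosmetic difference is ordering and justification of the sum--integral interchange: the paper first restricts to the odd coefficients as integrals over $(0,\infty)$ and then invokes Fubini (checking absolute summability via Cauchy--Schwarz), whereas you keep the full sum over $k$ and appeal directly to Plancherel with $g_\lambda\in L^2(\R,\gamma)$, simplifying both sides afterward; these are equivalent ways to package the same computation.
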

\begin{proof}
    We observe that for, $\alfwild{c}$ is an odd function and hence \cref{even:odd:hermite} implies that 
    $\fwild{c}{\tau}\cdot H_{2k}(\tau)$ is an odd function and $\fwild{c}{\tau}\cdot H_{2k+1}(\tau)$ is an 
    even function. This implies for any $k\geq 0$, that $\hfwildc{c}{2k} = 0$ and 
    \[
        \hfwildc{c}{2k+1}
        = 
        \frac{1}{\sqrt{2\pi}} \int_{-\infty}^{\infty} 
        \sgn{(\tau)}\cdot \tau^{c} \cdot H_{2k+1}(\tau) \cdot \ee^{-\tau^2/2}  \,d\tau
        = 
        \sqrt{\frac{2}{\pi}} \int_{0}^{\infty} 
        \tau^{c} \cdot H_{2k+1}(\tau) \cdot \ee^{-\tau^2/2}  \,d\tau \mper
    \]
    Thus we have 
    \begin{align*}
        &\quad~\sum_{k\geq 0} \frac{\lambda^{2k+1}}{\sqrt{(2k+1)!}}\cdot \hfwildc{c}{2k+1} \\
        &= 
        \sqrt{\frac{2}{\pi}} \cdot \sum_{k\geq 0}~\int_{0}^{\infty} 
        \tau^{c} \cdot \ee^{-\tau^2/2} \cdot H_{2k+1}(\tau) \cdot \frac{\lambda^{2k+1}}{\sqrt{(2k+1)!}}\,\,d\tau  \\
        &= 
        \sqrt{\frac{2}{\pi}} \cdot \int_{0}^{\infty} \tau^{c} \cdot \ee^{-\tau^2/2} \sum_{k\geq 0} 
        H_{2k+1}(\tau) \cdot \frac{\lambda^{2k+1}}{\sqrt{(2k+1)!}}\,\,d\tau 
        &&(\text{see below}) \\
        &= 
        \frac{1}{\sqrt{2\pi}} \cdot \int_{0}^{\infty} \tau^{c} \cdot \ee^{-\tau^2/2} \cdot 
        \inparen{\ee^{\tau\lambda  -\lambda^2/2} - \ee^{-\tau\lambda  -\lambda^2/2}} \,\,d\tau
        &&(\text{~by \cref{hermite:generating:function}}) \\
        &= 
        \frac{1}{\sqrt{2\pi}} \cdot \int_{0}^{\infty} \tau^{c} \cdot 
        \inparen{\ee^{-(\tau-\lambda)^2/2} - \ee^{-(\tau+\lambda)^2/2}} \,d\tau 
    \end{align*}
    where the exchange of summation and integral in the second equality follows by Fubini's theorem. 
    We include this routine verification for the sake of completeness.
    As a consequence of Fubini's theorem, if $(f_k:\R\to\R)_k$ is a sequence 
    of functions such that $\sum_{k\geq 0}\int_{0}^{\infty} |f_k| <\infty$, then 
    $
        \sum_{k\geq 0}\int_{0}^{\infty} f_k = \int_{0}^{\infty} \sum_{k\geq 0} f_k \mper
    $
    Now for any fixed $k$, we have 
    \[
        \int_{0}^{\infty} \tau^{c}\cdot |H_{k}(x)|\,d\gamma(\tau) 
        ~\leq ~
        \inparen{\int_{0}^{\infty} \tau^{2c}\,d\gamma(\tau)}^{1/2} \cdot 
        \inparen{\int_{0}^{\infty} |H_{k}(x)|^2\,d\gamma(\tau)}^{1/2} 
        ~\leq ~ 
        \gamma_{2c}^{c}
        < 
        \infty \mper
    \]
    Setting $f_k(\tau) := \tau^{c} \cdot \ee^{-\tau^2/2} \cdot H_{2k+1}(\tau) \cdot 
    \lambda^{2k+1}/\sqrt{(2k+1)!}\,\mcom$ we get that $\sum_{k\geq 0}\int_{0}^{\infty} |f_k| <\infty$. This 
    completes the proof. 
\end{proof}

Finally using known results about parabolic cylinder functions, we are able to relate the aforementioned 
integral representation to a confluent hypergeometric function (whose Taylor coefficients are known). 
\begin{lemma}
\label[lemma]{integral:representation:confHG}
    For $\lambda \in [-1,1]$ and real valued $c>-1$, we have 
    \[
        \frac{1}{\sqrt{2\pi}}
        \int_{0}^{\infty} \tau^{c}\inparen{\ee^{-(\tau-\lambda)^2/2} - \ee^{-(\tau+\lambda)^2/2}}\,d\tau
        ~=~
        \gamma_{c+1}^{c+1}\cdot 
        \lambda\cdot \confHG\inparen{\frac{1-c}{2}\,;\,\frac{3}{2}\,;\,-\frac{\lambda^2}{2}} 
    \]
\end{lemma}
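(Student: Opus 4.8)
The plan is to apply \cref{parabolic-conf-hypergeometric} twice — once in the stated form and once with $\lambda$ replaced by $-\lambda$ — and then to simplify the resulting $\Gamma$-factors via the duplication formula.

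First I would specialize \cref{parabolic-conf-hypergeometric} to $\alpha = c + \tfrac12$, which is legitimate since $c > -1$ forces $\alpha > -\tfrac12$, and which makes $t^{\alpha-1/2} = \tau^{c}$. A short computation of the parameters gives $-\tfrac{\alpha}{2}+\tfrac14 = -\tfrac{c}{2}$ and $-\tfrac{\alpha}{2}+\tfrac34 = \tfrac{1-c}{2}$, together with $\Gamma\!\left(\tfrac12+\alpha\right) = \Gamma(1+c)$, $\Gamma\!\left(\tfrac34+\tfrac{\alpha}{2}\right) = \Gamma\!\left(1+\tfrac{c}{2}\right)$, and $\Gamma\!\left(\tfrac14+\tfrac{\alpha}{2}\right) = \Gamma\!\left(\tfrac{1+c}{2}\right)$, so that
\begin{align*}
\int_{0}^{\infty}\tau^{c}\,\ee^{-(\tau+\lambda)^2/2}\,d\tau
&= \frac{\sqrt{\pi}\,\Gamma(1+c)}{2^{c/2+1/2}\,\Gamma\!\left(1+\tfrac{c}{2}\right)}\,\confHG\!\left(-\tfrac{c}{2}\,;\,\tfrac12\,;\,-\tfrac{\lambda^2}{2}\right) \\
&\quad - \frac{\sqrt{\pi}\,\Gamma(1+c)}{2^{c/2}\,\Gamma\!\left(\tfrac{1+c}{2}\right)}\,\lambda\,\confHG\!\left(\tfrac{1-c}{2}\,;\,\tfrac32\,;\,-\tfrac{\lambda^2}{2}\right).
\end{align*}
Since both confluent hypergeometric factors depend on $\lambda$ only through $\lambda^{2}$, replacing $\lambda$ by $-\lambda$ leaves the first term unchanged and flips the sign of the second. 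Subtracting the two instances therefore cancels the even term and doubles the odd one, which after dividing by $\sqrt{2\pi}$ yields
\[
\frac{1}{\sqrt{2\pi}}\int_{0}^{\infty}\tau^{c}\inparen{\ee^{-(\tau-\lambda)^2/2} - \ee^{-(\tau+\lambda)^2/2}}\,d\tau
= \frac{\sqrt{2}\,\Gamma(1+c)}{2^{c/2}\,\Gamma\!\left(\tfrac{1+c}{2}\right)}\cdot\lambda\cdot\confHG\!\left(\tfrac{1-c}{2}\,;\,\tfrac32\,;\,-\tfrac{\lambda^2}{2}\right).
\]

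It then remains only to identify the leading constant with $\gamma_{c+1}^{c+1}$. For this I would invoke the duplication formula (\cref{duplication:formula}) with $\tau = \tfrac{1+c}{2}$, which gives $\Gamma(1+c) = 2^{c}\,\Gamma\!\left(\tfrac{1+c}{2}\right)\Gamma\!\left(1+\tfrac{c}{2}\right)/\sqrt{\pi}$; substituting this and cancelling the factor $\Gamma\!\left(\tfrac{1+c}{2}\right)$ collapses the constant to $2^{(c+1)/2}\,\Gamma\!\left(1+\tfrac{c}{2}\right)/\sqrt{\pi}$, which is exactly $\gamma_{c+1}^{c+1}$ by \cref{gamma:and:Gamma} applied with $r = c+1$. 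This completes the proof.

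The argument is essentially bookkeeping; the one place to be careful — and the step most likely to hide a slip — is tracking the half-integer shifts and powers of $2$ in the $\Gamma$-arguments through \cref{parabolic-conf-hypergeometric} and through the duplication formula, since an error of $\tfrac12$ anywhere would spoil the clean identification of the constant as a Gaussian moment. Convergence of the integrals is automatic for $c > -1$ and the $\confHG$ series converges for every argument, so the hypothesis $\lambda \in [-1,1]$ plays no role in this lemma and is presumably imposed only for later use.
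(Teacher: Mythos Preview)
Your proof is correct and follows essentially the same route as the paper: specialize \cref{parabolic-conf-hypergeometric} to $\alpha=c+\tfrac12$, exploit the evenness of $\confHG(\cdot\,;\cdot\,;-\lambda^2/2)$ in $\lambda$ so that subtraction kills the first term and doubles the second, and then reduce the surviving $\Gamma$-constant to $\gamma_{c+1}^{c+1}$ via \cref{duplication:formula} and \cref{gamma:and:Gamma}. Your write-up is in fact more explicit about the parameter bookkeeping than the paper's, and your closing remark that the hypothesis $\lambda\in[-1,1]$ is inessential here is also correct.
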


\begin{proof}
    We prove this by using the \cref{parabolic-conf-hypergeometric} with $a=c+\frac{1}{2}$. We note that
    $\alpha>-\frac{1}{2}$ and $\confHG\inparen{\cdot,\cdot,-\lambda^2/2}$ is an even
    function of $\lambda$. So combining the two, we get:
    \begin{align*}
        &\quad~\frac{1}{\sqrt{2\pi}}
        \int_{0}^{\infty} \tau^{c}\inparen{\ee^{-(\tau-\lambda)^2/2} - \ee^{-(\tau+\lambda)^2/2}}\,d\tau \\
        &=~\frac{2}{\sqrt{2\pi}}\cdot\frac{\sqrt{\pi}\cdot\Gamma\inparen{c+1}}{2^{c/2}
        \cdot\Gamma\inparen{\frac{c+1}{2}}}
        \cdot \lambda\cdot\confHG\inparen{-\frac{c}{2}+\frac{1}{2}\,;\frac{3}{2}\,;-\frac{1}{2}\lambda^2}\\
        &=~2^{(1-c)/2}\cdot
        \frac{\Gamma\inparen{\frac{c+1}{2}+\frac{1}{2}}}{2^{-c}\cdot\sqrt{\pi}}\cdot 
        \lambda\cdot \confHG\inparen{\frac{1-c}{2}\,;\,\frac{3}{2}\,;\,-\frac{\lambda^2}{2}} 
        &&(\text{by \cref{duplication:formula}}) \\
        &=~\gamma_{c+1}^{c+1}\cdot 
        \lambda\cdot \confHG\inparen{\frac{1-c}{2}\,;\,\frac{3}{2}\,;\,-\frac{\lambda^2}{2}} 
        &&(\text{by \cref{gamma:and:Gamma}}) \qedforce
    \end{align*}
\let\qed\relax
\end{proof}
\subsection[Taylor Coefficients of f]{Taylor Coefficients of $\fplain{a}{b}{x}$ and Hypergeometric
  Representation}
By \cref{noise:correlation}, we are left with understanding the function whose power series is given by 
a weighted coefficient-wise product of a certain pair of confluent hypergeometric functions. This 
turns out to be precisely the Gaussian hypergeometric function, as we will see below. 
\begin{observation}
\label[observation]{1F1:circ:1F1:gives:2F1}
    Let $f_k := [\tau^k]\, \confHG(a_1,3/2,\tau)$ and $h_k := [\tau^k] \,\confHG(b_1,3/2,\tau)$. Further let \\
    $\mu_k := f_k\cdot h_k\cdot (2k+1)!/4^k$. Then for $\rho\in [-1,1]$, 
    \[
        \sum_{k\geq 0} \mu_k\cdot \rho^n  ~=~  \hypergeometric(a_1,b_1\,;\,3/2\,;\,\rho) \mper
    \]
\end{observation}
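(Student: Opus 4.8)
The plan is to extract Taylor coefficients on both sides and verify they agree term by term; there is no genuine analytic difficulty here, only an elementary rising-factorial identity. First I would use \cref{conf-hypergeometric} to record the coefficients explicitly: $f_k = (a_1)_k/((3/2)_k\, k!)$ and $h_k = (b_1)_k/((3/2)_k\, k!)$, so that
\[
    \mu_k ~=~ f_k\, h_k\,\frac{(2k+1)!}{4^k} ~=~ \frac{(a_1)_k\,(b_1)_k}{(3/2)_k^2\,(k!)^2}\cdot\frac{(2k+1)!}{4^k}\mper
\]
On the other hand, \cref{hypergeometric} gives that the $k$-th Taylor coefficient of $\hypergeometric(a_1,b_1\,;\,3/2\,;\,\rho)$ is exactly $(a_1)_k\,(b_1)_k/((3/2)_k\, k!)$. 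Hence the whole statement reduces to the single scalar identity
\[
    \frac{(2k+1)!}{4^k\,(3/2)_k\,k!} ~=~ 1 \qquad\text{for all } k\geq 0\mper
\]

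Next I would prove this identity directly. Writing the rising factorial out, $(3/2)_k = \prod_{j=0}^{k-1}\frac{2j+3}{2} = 2^{-k}\,(3\cdot 5\cdots(2k+1))$, and using $3\cdot 5\cdots(2k+1) = (2k+1)!/(2\cdot 4\cdots 2k) = (2k+1)!/(2^k\, k!)$, one gets $(3/2)_k = (2k+1)!/(4^k\, k!)$, which is precisely the claimed identity. (Equivalently, this is the Legendre duplication formula of \cref{duplication:formula} applied to $(3/2)_k = \Gamma(k+3/2)/\Gamma(3/2)$.) Substituting back yields $\mu_k = (a_1)_k\,(b_1)_k/((3/2)_k\, k!)$, which matches the Gaussian hypergeometric coefficient; summing over $k$ then gives the claim. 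Absolute convergence for $\rho \in [-1,1]$ is inherited from the defining series of $\hypergeometric(a_1,b_1\,;\,3/2\,;\,\rho)$ (note $3/2 - a_1 - b_1 \geq 1/2 > 0$ since $a_1,b_1 \in [0,1/2]$), and it also follows from the coefficient bounds proved later in the paper.

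I do not expect any real obstacle here; this is why the statement is recorded as an Observation rather than a Lemma. The only point requiring a little care is the bookkeeping of normalizations: the factor $(2k+1)!/4^k$ in the definition of $\mu_k$ is exactly what is produced when one passes from the Hermite-coefficient generating functions of $\alfa$ and $\alfb$ (whose $(2k+1)$-st coefficients are divided by $\sqrt{(2k+1)!}$ in \cref{gen:function:integral:rep}, and whose confluent hypergeometric representation via \cref{integral:representation:confHG} carries the argument $-\lambda^2/2$) to the Taylor series of $\fplain{a}{b}{\cdot}$ in $\rho$ through \cref{noise:correlation}. Once the normalization is tracked correctly, the two copies of $(3/2)_k$ collapse to one and the identification with $\hypergeometric$ is immediate.
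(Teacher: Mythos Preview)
Your proposal is correct and essentially identical to the paper's proof: both reduce the claim to the identity $(2k+1)!/4^k = (3/2)_k\,k!$ and verify it by splitting $(2k+1)!$ into its even and odd factors. Your additional remarks on convergence and on the provenance of the $(2k+1)!/4^k$ factor are fine but not needed for the argument.
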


\begin{proof}
    The claim is equivalent to showing that
    $\mu_k = \RisingFactorial{a_1}\,\RisingFactorial{b_1}/(\RisingFactorial{3/2}\,k!)$. 
    Since we have $f_k = \RisingFactorial{a_1}/(\RisingFactorial{3/2}\,k!)$ and 
    $h_k = \RisingFactorial{b_1}/(\RisingFactorial{3/2}\,k!)$, it is sufficient to show that 
    $(2k+1)!/4^k = \RisingFactorial{3/2}\cdot k!$. Indeed we have, 
    \begin{align*}
        (2k+1)! 
        &= 
        2^k\cdot k!\cdot 1\cdot 3 \cdot 5 \cdots (2k+1) \\
        &= 
        4^k\cdot k!\cdot \frac{3}{2} \cdot \frac{5}{2} \cdots \inparen{\frac{3}{2}+k-1} \\
        &= 
        4^k\cdot k!\cdot \RisingFactorial{3/2} \mper \qedforce
    \end{align*}
\let\qed\relax
\end{proof}

We are finally equipped to put everything together. 
\begin{theorem}
    \label[theorem]{hypergeometric:representation}
    For any $a,b\in (-1,\infty)$ and $\rho\in [-1,1]$, we have
    \[
        \nf{a}{b}{\rho}
        ~:=~
        \frac{1}{\gamma_{a+1}^{a+1}\cdot \gamma_{b+1}^{b+1}}\cdot 
        \Ex{\bfg_1 \sim_\rho \,\bfg_2}{\sgn(\bfg_1)|\bfg_1|^{a}\sgn(\bfg_2)|\bfg_1|^{b}}
        ~=~
        \hgp{\rho} \mper
    \]
    It follows that the $(2k+1)$-th Taylor coefficient of $\nf{a}{b}{\rho}$ is 
    \[
        \frac{\RisingFactorial{(1-a)/2}\,\RisingFactorial{(1-b)/2}}{(\RisingFactorial{3/2}\,k!)} \mper
    \]
\end{theorem}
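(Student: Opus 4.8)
The plan is to obtain the Taylor coefficients of $\nf{a}{b}{\rho}$ by combining the preparatory results of this section in a single power-series computation. First I would apply \cref{noise:correlation} to write
\[
    \fplain{a}{b}{\rho} ~=~ \sum_{k\ge 0}\rho^{2k+1}\cdot\hfac{2k+1}\cdot\hfbc{2k+1},
\]
so that it suffices to pin down, for $c\in\{a,b\}$, the odd Hermite coefficients $\hfwildc{c}{2k+1}$ of the function $\alfwild{c}$ individually.

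For that step I would chain \cref{gen:function:integral:rep} and \cref{integral:representation:confHG}: for $\lambda$ near $0$,
\[
    \sum_{k\ge 0}\frac{\lambda^{2k+1}}{\sqrt{(2k+1)!}}\cdot\hfwildc{c}{2k+1} ~=~ \gamma_{c+1}^{c+1}\cdot\lambda\cdot\confHG\!\inparen{\frac{1-c}{2}\,;\,\frac{3}{2}\,;\,-\frac{\lambda^2}{2}} .
\]
Expanding the right-hand side with \cref{conf-hypergeometric} and comparing the coefficient of $\lambda^{2k+1}$ on both sides (valid since both sides are honest power series in $\lambda$ convergent in a neighbourhood of the origin) yields
\[
    \hfwildc{c}{2k+1} ~=~ \sqrt{(2k+1)!}\cdot\gamma_{c+1}^{c+1}\cdot\frac{\RisingFactorial{(1-c)/2}}{\RisingFactorial{3/2}\cdot k!}\cdot\frac{(-1)^k}{2^k} .
\]

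I would then substitute this back into the sum from the first step. In the notation of \cref{1F1:circ:1F1:gives:2F1}, set $f_k:=\RisingFactorial{(1-a)/2}/(\RisingFactorial{3/2}\,k!)$, $h_k:=\RisingFactorial{(1-b)/2}/(\RisingFactorial{3/2}\,k!)$ and $\mu_k:=f_k\,h_k\,(2k+1)!/4^k$. Multiplying the two Hermite coefficients, the two factors $(-1)^k/2^k$ square to $1/4^k$ and the two factors $\sqrt{(2k+1)!}$ combine to $(2k+1)!$, so that $\hfac{2k+1}\hfbc{2k+1}=\gamma_{a+1}^{a+1}\gamma_{b+1}^{b+1}\cdot\mu_k$. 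Dividing by $\gamma_{a+1}^{a+1}\gamma_{b+1}^{b+1}$ gives $\nf{a}{b}{\rho}=\sum_{k\ge 0}\mu_k\,\rho^{2k+1}=\rho\cdot\sum_{k\ge 0}\mu_k(\rho^2)^k$, and \cref{1F1:circ:1F1:gives:2F1} identifies $\sum_{k\ge0}\mu_k(\rho^2)^k$ with $\hypergeometric\!\inparen{\frac{1-a}{2},\frac{1-b}{2}\,;\,\frac{3}{2}\,;\,\rho^2}$, i.e. with $\hgp{\rho}$. The claim about the $(2k+1)$-th Taylor coefficient is then immediate, since the proof of \cref{1F1:circ:1F1:gives:2F1} already establishes $\mu_k=\RisingFactorial{(1-a)/2}\,\RisingFactorial{(1-b)/2}/(\RisingFactorial{3/2}\,k!)$. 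Although the coefficient comparison is carried out only for $\lambda$ (hence $\rho$) near $0$, the resulting identity of power series extends to all $\rho\in[-1,1]$ because the series defining $\nf{a}{b}{\cdot}$ converges on $[-1,1]$, as recorded earlier.

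The argument is mostly careful bookkeeping rather than new ideas; the one point that needs attention is the coefficient extraction in the middle step --- keeping straight the $\sqrt{(2k+1)!}$ normalization built into \cref{gen:function:integral:rep}, correctly passing between the argument $-\lambda^2/2$ appearing in the parabolic-cylinder computation and the ``plain'' variable $\tau$ of the confluent hypergeometric series in \cref{1F1:circ:1F1:gives:2F1}, and tracking the signs $(-1)^k$ (which, conveniently, cancel once the two coefficients are multiplied). No analytic input beyond the parabolic-cylinder-function facts already used to prove \cref{integral:representation:confHG} is required.
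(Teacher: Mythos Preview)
Your proposal is correct and is essentially the same argument as the paper's, which simply says the claim follows by combining \cref{noise:correlation}, \cref{gen:function:integral:rep}, \cref{integral:representation:confHG}, and \cref{1F1:circ:1F1:gives:2F1}. You have merely spelled out the coefficient extraction and the product computation explicitly, and that bookkeeping is carried out correctly.
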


\begin{proof}
    The claim follows by combining \cref{noise:correlation}, 
    \cref{gen:function:integral:rep,integral:representation:confHG}, and \cref{1F1:circ:1F1:gives:2F1}. 
\end{proof}

This hypergeometric representation immediately yields some non-trivial coefficient and 
monotonicity properties: 
\begin{corollary}
\label[corollary]{monotonicity:properties}
    For any $a,b\in [0,1]$, the function $\alnf{a}{b}:[-1,1]\to \R$ satisfies 
    \begin{enumerate}[(M1)]
        \item $[\rho]\,\nf{a}{b}{\rho} = 1$ ~and~ $[\rho^{3}]\,\nf{a}{b}{\rho} = (1-a)(1-b)/6$. 
        
        \item All Taylor coefficients are non-negative. Thus $\nf{a}{b}{\rho}$ 
        is increasing on $[-1,1]$. 
        
        \item All Taylor coefficients are decreasing in $a$ and in $b$. Thus for any fixed $\rho\in [-1,1]$,~ 
        $\nf{a}{b}{\rho}$ is decreasing in $a$ and in $b$. 
        
        \item Note that $\nf{a}{b}{0}=0$ and by (M1) and (M2), $\nf{a}{b}{1} \geq 1$. By continuity, 
        $\nf{a}{b}{[0,1]}$ contains $[0,1]$. Combining this with (M3) implies that for any fixed 
        $\rho\in [0,1]$, $\nfin{a}{b}{\rho}$ is increasing in $a$ and in $b$. 
    \end{enumerate}
\end{corollary}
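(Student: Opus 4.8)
The plan is to read everything off the explicit Taylor expansion supplied by \cref{hypergeometric:representation}, namely $\nf{a}{b}{\rho}=\sum_{k\geq 0} c_k\,\rho^{2k+1}$ with all even-degree coefficients vanishing and
\[
    c_k ~=~ \frac{\RisingFactorial{(1-a)/2}\,\RisingFactorial{(1-b)/2}}{\RisingFactorial{3/2}\cdot k!}\mper
\]
Item (M1) is then immediate: for $k=0$ each rising factorial is $1$ and $0!=1$, so $c_0=1$; for $k=1$ we get $c_1=\bigl(\tfrac{1-a}{2}\bigr)\bigl(\tfrac{1-b}{2}\bigr)\big/\tfrac{3}{2}=(1-a)(1-b)/6$.

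For (M2) and (M3) the single elementary fact I would isolate is that for $x\in[0,\infty)$ the rising factorial $\RisingFactorial{x}=\prod_{j=0}^{k-1}(x+j)$ is a product of nonnegative factors, each of which is nondecreasing in $x$; hence $\RisingFactorial{x}\geq 0$ and $x\mapsto\RisingFactorial{x}$ is nondecreasing on $[0,\infty)$ (strictly increasing when $k\geq 1$ and $x>0$). Since $a,b\in[0,1]$ forces $\tfrac{1-a}{2},\tfrac{1-b}{2}\in[0,\tfrac12]$, each $c_k\geq 0$, which gives (M2); and since $a\mapsto\tfrac{1-a}{2}$ is decreasing, each $c_k$ is nonincreasing in $a$, and symmetrically in $b$, which is the coefficient statement of (M3). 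The monotonicity of $\nf{a}{b}{\cdot}$ in $\rho$ then follows since its derivative $\sum_{k\geq 0}(2k+1)c_k\rho^{2k}$ is an even power series with nonnegative coefficients (bounded below by $c_0=1$), and the pointwise monotonicity in $a,b$ follows by summing the coefficientwise inequality against $\rho^{2k+1}\geq 0$ for $\rho\in[0,1]$, the sign for $\rho\in[-1,0]$ being dictated by the oddness $\nf{a}{b}{-\rho}=-\nf{a}{b}{\rho}$ (it is the range $[0,1]$ that is used downstream).

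For (M4) I would first record that Gauss's test applies at $\rho^2=1$ because $\tfrac32-\tfrac{1-a}{2}-\tfrac{1-b}{2}=\tfrac12+\tfrac{a+b}{2}>0$, so the series for $\nf{a}{b}{\cdot}$ converges on all of $[-1,1]$ and defines a continuous function there; combined with $c_0=1>0$ and $c_k\geq 0$ this also shows $\nf{a}{b}{\cdot}$ is strictly increasing on $[-1,1]$. Then $\nf{a}{b}{0}=0$ and $\nf{a}{b}{1}=\sum_{k\geq 0}c_k\geq c_0=1$, so by the intermediate value theorem the image of $[0,1]$ contains $[0,1]$, whence $\nfin{a}{b}{\cdot}$ is a well-defined continuous increasing function on $[0,1]$. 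For the final claim, fix $\rho\in[0,1]$ and put $x:=\nfin{a}{b}{\rho}$; for $a'\geq a$, (M3) gives $\nf{a'}{b}{x}\leq\nf{a}{b}{x}=\rho=\nf{a'}{b}{\nfin{a'}{b}{\rho}}$, and since $\nf{a'}{b}{\cdot}$ is increasing this forces $\nfin{a'}{b}{\rho}\geq x$; the same argument in $b$ finishes (M4).

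This corollary is essentially bookkeeping once \cref{hypergeometric:representation} is available, so I do not expect a serious obstacle; the one place that demands care is the direction flip, namely that increasing $a$ lowers the argument $\tfrac{1-a}{2}$ of the rising factorials, hence lowers every $c_k$, hence lowers $\nf{a}{b}{\rho}$ for $\rho\geq 0$ — which, after inversion, means $\nfin{a}{b}{\rho}$ \emph{increases} with $a$. The other point not to skip is verifying boundary convergence via the condition $\tfrac32-\tfrac{1-a}{2}-\tfrac{1-b}{2}>0$, since the intermediate value argument and the surjectivity onto $[0,1]$ both rely on continuity of $\nf{a}{b}{\cdot}$ holding on the closed interval.
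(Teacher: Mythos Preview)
Your proposal is correct and follows the paper's intended approach: the corollary is presented as an immediate consequence of the explicit Taylor coefficients in \cref{hypergeometric:representation}, and you have simply filled in the routine details (including the boundary convergence via Gauss's condition $\tfrac32 - \tfrac{1-a}{2} - \tfrac{1-b}{2} > 0$, which the paper leaves implicit). Your caveat that the pointwise decrease in (M3) literally holds only for $\rho\in[0,1]$ is also well-placed --- the paper's phrasing is slightly loose for negative $\rho$, but as you note, only the nonnegative range is used downstream.
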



\section[Bound on Defect]{$\sinh^{-1}(1)/(1+\eps_0)$ ~Bound on $\nhin{a}{b}{1}$}
In this section we show that $p=\infty, q=1$ (the Grothendieck case) is roughly the extremal case for the 
value of $\nhin{a}{b}{1}$, \ie we show that for any $1\leq q\leq 2 \leq p \leq \infty$, $\nhin{a}{b}{1}\geq 
\sinh^{-1}(1)/(1+\eps_0)$ (recall that $\nhin{0}{0}{1} = \sinh^{-1}(1)$). While we were unable to establish as 
much, we conjecture that $\nhin{a}{b}{1} \geq \sinh^{-1}(1)$. 
\cref{behavior:of:coefficients} details some of the challenges involved in establishing that $\sinh^{-1}(1)$ is the worst case, and presents our approach to establish an approximate bound, which will be formally 
proved in \cref{bounding:coefficients}.

\subsection[Behavior of The Coefficients of The Inverse Function]
{Behavior of The Coefficients of $\nfin{a}{b}{z}$.}
\label[subsection]{behavior:of:coefficients}
Krivine's upper bound on the real Grothendieck constant, 
Haagerup's upper bound \cite{Haagerup81} on the complex Grothendieck constant and the work of 
Naor and Regev~\cite{NR14, BFV14} on the optimality of Krivine schemes are all closely related to our work in 
that each of the aforementioned papers needs to lower bound $(\absolute{f^{-1}})^{-1}(1)$ for an appropriate 
odd function $f$ (the work of Briet \etal~\cite{BFV14} on the rank-constrained Grothendieck problem is 
also a generalization of Krivine's and Haagerup's work, however they did not derive a closed form upper 
bound on $(\absolute{f^{-1}})^{-1}(1)$ in their setting). In Krivine's setting $f=\sin^{-1} x$, implying 
$(\absolute{f^{-1}})^{-1} = \sinh^{-1}$ and hence the bound is immediate. In our setting, as well as 
in \cite{Haagerup81} and \cite{NR14, BFV14}, $f$ is  given by its Taylor coefficients and is not known to have 
a closed form. In \cite{NR14}, all coefficients of $f^{-1}$ subsequent to the third are negligible and so one 
doesn't incur much loss by assuming that $\absolute{f^{-1}}(\rho) = c_1\rho + c_3\rho^3$.
In \cite{Haagerup81}, the coefficient of $\rho$ in $f^{-1}(\rho)$ is $1$ and every subsequent coefficient is 
negative, which implies that $\absolute{f^{-1}}(\rho) = 2\rho - f^{-1}(\rho)$. 
Note that if the odd coefficients of $f^{-1}$ are alternating in sign like in Krivine's setting, then 
$\absolute{f^{-1}}(\rho) = -i \cdot f^{-1}(i\rho)$. 
These structural properties of the coefficients help their analyses. 

In our setting there does not appear to be such a strong relation between $(\absolute{f^{-1}})$ and $f^{-1}$. 
Consider $f(\rho) = \nf{a}{a}{\rho}$. For certain $a\in (0,1)$, the sign pattern of the coefficients of 
$f^{-1}$ is unlike that of \cite{Haagerup81} or $\sin \rho$. In fact empirical results suggest that the odd 
coefficients of $f$ alternate in sign up to some term $K=K(a)$, and subsequently the coefficients 
are all non-positive (where $K(a) \to \infty$ as $a\to 0$), \ie the sign pattern appears to be interpolating 
between that of $\sin \rho$ and that of $f^{-1}(\rho)$ in the case of Haagerup~\cite{Haagerup81}.
 
Another source of difficulty is that for a fixed $a$, the coefficients of $f^{-1}$ (with and without magnitude) 
are not necessarily monotone in $k$, and moreover for a fixed $k$, the $k$-th coefficient of $f^{-1}$ is not 
necessarily monotone in $a$. 

A key part of our approach is noting that certain milder assumptions on the coefficients are sufficient to 
show that $\sinh^{-1}(1)$ is the worst case. The proof crucially uses the monotonicity of $\nf{a}{b}{\rho}$ in 
$a$ and $b$. The conditions are as follows: \medskip 

\noindent
Let $\ngc := [\rho^k]\,\nfin{a}{b}{\rho}$. Then 
\begin{enumerate}[(C1)]
    \item $\ngc \leq 1/k!$ ~if~ $k\! \pmod{4} \equiv 1$. 
    
    \item $\ngc \leq 0$ ~if~ $k\! \pmod{4} \equiv 3$.
\end{enumerate}
To be more precise, we were unable to establish that the above conditions hold for all $k$ (however 
we conjecture that it is true for all $k$), and instead use 
Mathematica to verify it for the fist few coefficients. We additionally show that the coefficients of 
$\alnfin{a}{b}$ decay exponentially. Combining this exponential decay with a robust version of the previously 
advertised claim yields that $\nhin{a}{b}{1} \geq \sinh^{-1}(1)/(1 + \eps_0)$. 

We next proceed to prove the claim that the aforementioned conditions are sufficient to show that 
$\sinh^{-1}(1)$ is the worst case. We will need the following definition.
For an odd positive integer $t$, let 
\[
    h_{err}(t,\rho) := \sum_{k\geq t} |\ngc|\cdot \rho^{k}
\]

\begin{lemma}
\label[lemma]{bound:defect}
    If $t$ is an odd integer such that (C1) and (C2) are satisfied for all $k<t$, and 
    $\rho = \sinh^{-1}(1-2 h_{err}(t,\delta))$ ~for some $\delta\geq \rho$,~then $\nh{a}{b}{\rho}\leq 1$. 
\end{lemma}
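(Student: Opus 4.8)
The plan is to split the power series of $\nh{a}{b}{\cdot}$ at degree $t$, compare the low-degree part term by term with the Taylor series of $\sinh$, and absorb the high-degree tail --- which is precisely $h_{err}(t,\rho)$ --- into the bound. Two preliminary facts about the inverse series $\nfin{a}{b}{\cdot}$ are needed. First, by \cref{noise:correlation} the function $\fplain{a}{b}{\cdot}$, hence its normalization $\nf{a}{b}{\cdot}$, is odd, so its compositional inverse $\nfin{a}{b}{\cdot}$ is odd and $\ngc=0$ for every even $k$. Second, $[\rho]\,\nfin{a}{b}{\rho}=1$, which follows from (M1) of \cref{monotonicity:properties} since $[\rho]\,\nf{a}{b}{\rho}=1$. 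Recalling the definition $\alnh{a}{b}=\absolute{\alnfin{a}{b}}$, we have $\nh{a}{b}{\rho}=\sum_{k\ge 1}|\ngc|\,\rho^k$, and separating degrees below and above $t$ gives
\[
    \nh{a}{b}{\rho}~=~\sum_{1\le k<t}|\ngc|\,\rho^k~+~h_{err}(t,\rho),
\]
where only odd $k$ contribute to the first sum.

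For the tail, all coefficients $|\ngc|$ are nonnegative, hence $h_{err}(t,\cdot)$ is nondecreasing on $[0,\infty)$. (We may assume $\rho\ge 0$; otherwise $\nh{a}{b}{\rho}\le 0\le 1$ since $\nh{a}{b}{\cdot}$ is odd with nonnegative coefficients.) As $0\le\rho\le\delta$ this gives $h_{err}(t,\rho)\le h_{err}(t,\delta)$. For the low-degree part the target is $\sum_{1\le k<t}|\ngc|\,\rho^k\le\sinh(\rho)$; since $[\rho^k]\sinh(\rho)=1/k!$ for odd $k$ and a truncation of a nonnegative-coefficient series is at most the whole series, it suffices to prove $|\ngc|\le 1/k!$ for every odd $k<t$. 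The upper estimate $\ngc\le 1/k!$ is immediate from the hypotheses: for $k\equiv 1\pmod 4$ it is (C1), and for $k\equiv 3\pmod 4$ condition (C2) gives $\ngc\le 0\le 1/k!$. For the matching lower estimate I would use the parameter-monotonicity (M4) of \cref{monotonicity:properties}, which gives $\nfin{a}{b}{\rho}\ge\nfin{0}{0}{\rho}=\sin(\rho)$ for all $\rho\in[0,1]$: since the two series agree in degree $1$, the leading behaviour of $\nfin{a}{b}{\rho}-\sin(\rho)$ near the origin already forces $\ngc\ge -1/k!$ at the smallest index $k=3$, and for the remaining indices below $t$ we appeal to the explicit (computer-assisted) verification of the first coefficients.

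Combining the two estimates gives $\nh{a}{b}{\rho}\le\sinh(\rho)+h_{err}(t,\delta)$, and the defining relation $\rho=\sinh^{-1}\!\big(1-2\,h_{err}(t,\delta)\big)$ makes $\sinh(\rho)=1-2\,h_{err}(t,\delta)$, so
\[
    \nh{a}{b}{\rho}~\le~\big(1-2\,h_{err}(t,\delta)\big)+h_{err}(t,\delta)~=~1-h_{err}(t,\delta)~\le~1,
\]
which is the claim (the leftover $h_{err}(t,\delta)$ of slack explains the factor $2$ in the hypothesis). The genuinely hard work is not in this lemma but in what feeds it: verifying that (C1) and (C2) hold for $k<t$ (only finitely many coefficients can be checked by computer, so either the rest must be argued or $t$ chosen accordingly) and establishing the geometric decay of $|\ngc|$ needed to make $h_{err}(t,\delta)$ --- and hence the final defect $\eps_0$ --- small; these are the contour-integral estimates for the family of inverse hypergeometric functions developed in the rest of the paper. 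Within this argument, the one delicate point is the two-sided bound $|\ngc|\le 1/k!$ for $k<t$: its upper half is free from (C1)/(C2), but the lower half genuinely needs the monotonicity input (M4) together with the finite check, since one-sided sign information alone cannot bound the magnitude of a possibly-negative coefficient.
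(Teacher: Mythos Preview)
Your argument has a genuine gap: you claim that $|\ngc|\le 1/k!$ for each odd $k<t$, but the hypotheses of the lemma give only one-sided information. Conditions (C1) and (C2) are upper bounds on $\ngc$; they say nothing about how negative $\ngc$ might be. Your attempt to supply the missing lower bound $\ngc\ge -1/k!$ appeals to ``the explicit (computer-assisted) verification of the first coefficients,'' but that verification is not among the hypotheses of this lemma --- the lemma is stated so that it follows from (C1), (C2) and the monotonicity property (M4) alone, and the computer check is invoked only afterwards to feed (C1), (C2) into it. You are right that (M4) gives $\nfin{a}{b}{\rho}\ge\sin\rho$ on $[0,1]$, but a pointwise inequality between two power series does \emph{not} imply a coefficient-wise inequality beyond the first index where they differ; so your argument recovers $f^{-1}_3\ge -1/3!$ and then stalls.

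The paper sidesteps the need for any coefficient lower bound by the identity $|c|=-c+\max\{2c,0\}$, which yields
\[
\nh{a}{b}{\rho}~=~-\nfin{a}{b}{\rho}~+~\sum_{k\ge 1}\max\{2\ngc,0\}\,\rho^{k}.
\]
Now (C2) kills the $k\equiv 3\pmod 4$ terms with $k<t$, (C1) bounds the $k\equiv 1\pmod 4$ terms by $2\rho^{k}/k!$, and since $\sin\rho+\sinh\rho=2\sum_{k\equiv 1\,(4)}\rho^{k}/k!$ the low-degree part is at most $\sin\rho+\sinh\rho$; the tail contributes at most $2\,h_{err}(t,\rho)\le 2\,h_{err}(t,\delta)$. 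Only at this point is (M4) used, and crucially at the \emph{function} level: $-\nfin{a}{b}{\rho}\le -\sin\rho$, which cancels the $\sin\rho$ and leaves $\sinh\rho+2\,h_{err}(t,\delta)=1$. This also explains why the factor $2$ in the hypothesis is necessary: the tail enters with weight $2$, not $1$, so your slack $h_{err}(t,\delta)$ in the final line is an artifact of an unjustified intermediate step rather than genuine room to spare.
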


\begin{proof}
    We have, 
    \begin{align*}
        &\quad~ \nh{a}{b}{\rho}  \\
        &= 
        \sum_{k\geq 1} |\ngc|\cdot \rho^k  \\
        &= 
        -\nfin{a}{b}{\rho} 
        ~+ 
        \sum_{k\geq 1} \max\{2\ngc,0\}\cdot \rho^k  \\
        &= 
        -\nfin{a}{b}{\rho} 
        ~+ 
        \sum_{\substack{1\leq k < t \\ k \text{ mod }4 \equiv 1}} \max\{2\ngc,0\}\cdot \rho^k  
        ~+~
        \sum_{k\geq t} \max\{2\ngc,0\} \cdot \rho^k
        &&(\text{by (C2)}) \\
        &\leq 
        -\nfin{a}{b}{\rho} 
        ~+ 
        \sum_{\substack{1\leq k < t \\ k \text{ mod } 4 \equiv 1}} \max\{2\ngc,0\}\cdot \rho^k  
        ~+~
        2 \, h_{err}(t,\rho) \\
        &\leq 
        -\nfin{a}{b}{\rho} 
        ~+ 
        \sin(\rho) + \sinh(\rho)
        ~+~
        2 \, h_{err}(t,\rho)
        &&(\text{by (C1)}) \\
        &\leq 
        -\nfin{a}{b}{\rho} 
        ~+ 
        \sin(\rho) + 1 + 2(h_{err}(t,\rho) - h_{err}(t,\delta))
        &&(\,\rho = \sinh^{-1}(1-2 h_{err}(t,\delta))\,) \\
        &\leq 
        -\nfin{a}{b}{\rho} 
        ~+ 
        \sin(\rho) + 1 
        &&(\rho\leq \delta) \\
        &\leq 
        -\nfin{0}{0}{\rho} 
        ~+ 
        \sin(\rho) + 1  
        &&(\text{\cref{monotonicity:properties} : (M4)})\\
        &= 
        1 
        &&(\nfin{0}{0}{\rho} = \sin(\rho)) \qedforce
    \end{align*}
\let\qed\relax
\end{proof}

Thus we obtain, 
\begin{theorem}
    For any $1\leq q \leq 2 \leq p \leq \infty$, let $a:=p^*-1,b=q-1$. Then for any $m,n\in \N$ and 
    $A\in \R^{m\times n}$,~~ 
    $\CP{A}/\norm{p}{q}{A} ~\leq~ 1/(\nhin{a}{b}{1}\cdot \gamma_{q}\,\gamma_{p^*})$ ~
    and moreover 
    \begin{itemize}
        \item $\nhin{1}{b}{1} = \nhin{a}{1}{1} = 1$.
        \item $\nhin{a}{b}{1}\geq \sinh^{-1}(1)/(1 + \eps_0)$ ~where $\eps_0 = 0.00863$. 
    \end{itemize}
\end{theorem}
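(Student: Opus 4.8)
The first inequality is nothing new: with $a = p^{*}-1$ and $b = q-1$ it is precisely \cref{defect:bound:implies:apx}. So the work is entirely in the two displayed facts about $\nhin{a}{b}{1}$, and I would organize everything around the explicit Taylor expansion of $\nf{a}{b}{}$ supplied by \cref{hypergeometric:representation}. For the boundary, if $a=1$ then the $(2k+1)$-st Taylor coefficient $\RisingFactorial{(1-a)/2}\RisingFactorial{(1-b)/2}/(\RisingFactorial{3/2}\,k!)$ of $\nf{a}{b}{}$ vanishes for every $k\ge 1$ because $(0)_{k}=0$; hence $\nf{1}{b}{\rho}=\rho$, its functional inverse is the identity, whose Taylor coefficients are already non-negative, so $\alnh{1}{b}=\alnfin{1}{b}=\mathrm{id}$ and $\nhin{1}{b}{1}=1$. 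The case $b=1$ is symmetric. (These are exactly $p=2$ and $q=2$, where an $\ell_{2}$ side makes $\CP{A}$ exact.)

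For the bound $\nhin{a}{b}{1}\ge \sinh^{-1}(1)/(1+\eps_0)$ for general $a,b\in[0,1]$ (which subsumes the boundary since $1>\sinh^{-1}(1)/(1+\eps_0)$), the plan is to apply \cref{bound:defect}. Recall $\nh{a}{b}{}$ is continuous and strictly increasing on $[-1,1]$ with $\nh{a}{b}{1}\ge1$ (the Well-Definedness discussion, via \cref{inv:coeff:bound} and (M1) of \cref{monotonicity:properties}), so $\nh{a}{b}{\rho}\le1$ implies $\rho\le \nhin{a}{b}{1}$. I would fix an odd integer $t$ and take $\delta:=\sinh^{-1}(1)$; since $h_{err}(t,\delta)\ge0$ we automatically have $\rho:=\sinh^{-1}\!\inparen{1-2h_{err}(t,\delta)}\le\delta$, so the only hypothesis of \cref{bound:defect} left to check is that conditions (C1)--(C2) hold for all $k<t$. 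Granting that, the lemma gives $\nhin{a}{b}{1}\ge \rho=\sinh^{-1}\!\inparen{1-2h_{err}(t,\sinh^{-1}(1))}$, and it remains to make the tail small: we need $h_{err}(t,\sinh^{-1}(1))=\sum_{k\ge t}|\ngc|\,(\sinh^{-1}(1))^{k}\le\eta_{0}$, where $\eta_{0}:=\tfrac12\inparen{1-\sinh\!\inparen{\sinh^{-1}(1)/(1+\eps_0)}}\approx 5.3\times10^{-3}$, which is exactly the inequality equivalent to $\sinh^{-1}(1-2h_{err})\ge \sinh^{-1}(1)/(1+\eps_0)$.

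Two ingredients finish the proof. First, (C1)--(C2) for the low-order coefficients: using the closed form for $[\rho^{2k+1}]\nf{a}{b}{\rho}$ from \cref{hypergeometric:representation} together with the standard $O(k)$-term Taylor-series inversion, each $\ngc=[\rho^{k}]\nfin{a}{b}{\rho}$ is an explicit algebraic function of $(a,b)$ on $[0,1]^{2}$, and the sign/size bounds (C1)--(C2) for all $k<t$ reduce to a finite symbolic/numeric check (as indicated in \cref{behavior:of:coefficients}, verified in Mathematica). Second, the exponential decay of $|\ngc|$ established in \cref{bounding:coefficients} (the contour-integral estimates, summarized in \cref{inv:coeff:bound}): a bound $|\ngc|\le C\,r^{-k}$ with $r>\sinh^{-1}(1)$ uniformly over $a,b\in[0,1]$ turns $\sum_{k\ge t}|\ngc|(\sinh^{-1}(1))^{k}$ into a convergent geometric-type tail that drops below $\eta_{0}$ once $t$ is large enough; combining this tail estimate with the explicit values of the first few $\ngc$ and matching $t$ to the decay constants pins down the numerical value $\eps_{0}=0.00863$.

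The main obstacle is the exponential-decay input. Since the family $\inbraces{\alnfin{a}{b}}$ has no closed form and the signs of its Taylor coefficients are neither alternating nor eventually one-signed in a manner uniform in $(a,b)$ --- they interpolate between the $\sin$ pattern at $a=b=0$ and a Haagerup-type pattern --- one cannot read a decay rate off any generating identity, and must instead estimate a whole family of inverse-hypergeometric contour integrals; essentially all the technical content of \cref{bounding:coefficients} goes into this. A milder, secondary obstacle is that the finite verification of (C1)--(C2) must reach far enough out in $k$ that the tail is genuinely below $\eta_{0}$ (and that the decay rate $r$ indeed exceeds $\sinh^{-1}(1)$, so the tail series even converges at $\delta=\sinh^{-1}(1)$); the conditions (C1)--(C2) themselves are engineered so that in \cref{bound:defect} the positive ($k\equiv1\bmod4$) coefficients of $\nfin{a}{b}{}$ are majorized by those of $\sin$ and the $k\equiv3\bmod4$ ones are harmless, letting $-\nfin{a}{b}{\rho}+\sin\rho+\sinh\rho$ collapse to the Grothendieck value via (M4) of \cref{monotonicity:properties}.
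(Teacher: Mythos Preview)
Your proposal is correct and follows essentially the same route as the paper: invoke \cref{defect:bound:implies:apx} for the first inequality, dispatch the boundary cases $a=1$ or $b=1$ via $(0)_k=0$ in \cref{hypergeometric:representation}, and for the main bound combine \cref{bound:defect} with a finite Mathematica verification of (C1)--(C2) and the tail estimate from \cref{inv:coeff:bound}.

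The one place you differ is the choice of $\delta$ when invoking \cref{bound:defect}. You take the simple choice $\delta=\sinh^{-1}(1)$, which automatically satisfies $\delta\ge\rho$; the paper instead solves for an approximate fixed point, taking $\delta=\sinh^{-1}(0.974203)$ so that $\rho=\sinh^{-1}(1-2h_{err}(31,\delta))\approx\delta$. Both are valid, but the trade-off is numerical: with $\delta=\sinh^{-1}(1)\approx 0.8814$ and the uniform bound $|\ngc|\le 6.1831/k$ from \cref{inv:coeff:bound}, the tail $h_{err}(t,\delta)\le \tfrac{6.1831}{t}\cdot\tfrac{\delta^{t}}{1-\delta^{2}}$ does not drop below your target $\eta_{0}\approx 5.3\times 10^{-3}$ until roughly $t\approx 39$, so you would need to verify (C1)--(C2) for a few more odd $k$ than the paper's $k\le 29$. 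The paper's smaller $\delta$ shrinks the tail faster and reaches $\eps_{0}=0.00863$ already at $t=31$. So your approach buys conceptual simplicity (no fixed-point hunt), while the paper's buys a shorter finite check; neither changes the structure of the argument.
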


\begin{proof}
    The first inequality follows from \cref{defect:bound:implies:apx}. 
    As for the next item, If $p=2$ or $q=2$ ~(\ie $a=1$ or $b=1$) we are trivially done since 
    $\nhin{a}{b}{\rho} = \rho$ in that case (since for $k\geq 1$,~ $\RisingFactorial{0}=0$). 
    So we may assume that $a,b\in [0,1)$. 
    
    We are left with proving the final part of the claim. 
    Now using Mathematica we verify (exactly)\footnote{
        We generated $\ngc$ as a polynomial in $a$ and $b$ and maximized it over $a,b\in [0,1]$ using 
        the Mathematica ``Maximize'' function which is exact for polynomials.
    }
    that $(C1)$ and $(C2)$ are true for $k\leq 29$. Now let $\delta=\sinh^{-1}(0.974203)$. 
    Then by \cref{inv:coeff:bound} (which states that $\ngc$ decays exponentially and will be proven in the 
    subsequent section), 
    \[
        h_{err}(31,\delta) :=  \sum_{k\geq 31} |\ngc|\cdot d^{k} 
        \leq  \frac{6.1831}{31}\cdot \frac{\delta^{31}}{1-\delta^2} 
        \leq  0.0128991\dots \mper
    \] 
    Now by \cref{bound:defect} we know $\nhin{a}{b}{1} \geq \sinh^{-1}(1-2h_{err}(31,\delta))$. 
    Thus, \\$\nhin{a}{b}{1}\geq \sinh^{-1}(0.974202) \geq \sinh^{-1}(1)/(1+\eps_0)$ for 
    $\eps_0 = 0.00863$, which completes the proof. 
\end{proof}

\subsection{Bounding Inverse Coefficients}
\label[subsection]{bounding:coefficients}
In this section we prove that $\ngc$ decays as $1/c^k$ for some $c=c(a,b)>1$, proving 
\cref{inv:coeff:bound}. 
Throughout this section we assume $1\leq p^*,q <2$, and $a=p^*-1,~b=q-1$ (\ie $a,b\in [0,1)$).
Via the power series representation, $\nf{a}{b}{z}$ can be analytically continued to the unit complex disk. 
Let $\nfin{a}{b}{z}$ be the inverse of $\nf{a}{b}{z}$ and recall $\ngc$ denotes its $k$-th Taylor coefficient. 

We begin by stating a standard identity from complex analysis that provides a convenient contour integral 
representation of the Taylor coefficients of the inverse of a function. We include a proof for completeness. 
\begin{lemma}[Inversion Formula]
\label[lemma]{residue} 
    There exists $\delta>0$, such that for any odd $k$, 
    \begin{equation}
    \label[equation]{inversion:formula}
        \ngc = \frac{2}{\pi k} \,\Im\!\inparen{\int_{C^+_\delta} \nf{a}{b}{z}^{-k}\,dz}
    \end{equation}
    where $C^+_\delta$ denotes the first quadrant quarter circle of radius 
    $\delta$ with counter-clockwise orientation. 
\end{lemma}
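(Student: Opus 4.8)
The plan is to derive the stated identity from the classical Lagrange inversion formula, written in its residue (contour-integral) form, and then to collapse the full-circle residue onto the quarter-circle $C^{+}_\delta$ using the two symmetries of $f:=\nf{a}{b}{\cdot}$: it is an \emph{odd} function, and it has \emph{real} Taylor coefficients. First I would fix the analytic setup. By the hypergeometric representation (\cref{hypergeometric:representation}) together with (M1) of \cref{monotonicity:properties}, $f$ is analytic on the open unit disk, has only odd-degree Taylor coefficients, all real, and satisfies $f(z)=z+O(z^{3})$ near the origin; hence there is $\delta>0$ with $f$ univalent on $\{\abs{z}\le\delta\}$ and $f(z)\neq 0$ for $0<\abs{z}\le\delta$. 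Writing $f(z)=z\,u(z)$ with $u$ analytic and non-vanishing near $0$, the function $f(z)^{-k}$ is then meromorphic on a neighborhood of $\{\abs{z}\le\delta\}$ with a single pole, of order $k$, at $z=0$. The inverse $g:=\nfin{a}{b}{\cdot}$ is analytic near $0$ and, being the inverse of an odd function, is itself odd; so $\ngc=0$ for even $k$ and only odd $k$ needs treatment.

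Second, I would obtain the residue form of the inversion formula. From $\ngc=g_{k}=\frac{1}{2\pi i}\oint_{\abs{w}=\rho}w^{-k-1}g(w)\,dw$ for small $\rho$, the substitution $w=f(z)$ --- a genuine change of variables by univalence of $f$, tracing a small simple loop around $0$ once, which may be deformed to $\{\abs{z}=\delta\}$ since $f(z)^{-k}$ is holomorphic on the punctured disk $0<\abs{z}\le\delta$ --- gives $\ngc=\frac{1}{2\pi i}\oint_{\abs{z}=\delta} z\,f'(z)\,f(z)^{-k-1}\,dz$; integrating by parts via $\frac{d}{dz}f(z)^{-k}=-k\,f'(z)\,f(z)^{-k-1}$ and discarding the (vanishing) boundary term of the closed contour yields
\[
    \ngc=\frac{1}{2\pi i k}\oint_{\abs{z}=\delta} f(z)^{-k}\,dz .
\]
(Equivalently, one may just quote Lagrange inversion, $[w^{k}]g=\tfrac1k[z^{k-1}]\bigl(z/f(z)\bigr)^{k}$, and read off the same residue.)

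Finally I would split $\{\abs{z}=\delta\}$, traversed counter-clockwise, into the four quarter-circles $C^{+}_{\delta}=:C_{1}$ (arguments $0\to\pi/2$), $C_{2}$ ($\pi/2\to\pi$), $C_{3}$ ($\pi\to 3\pi/2$), $C_{4}$ ($3\pi/2\to 2\pi$), and collapse three of them onto $C^{+}_{\delta}$. Since $f$ is odd and $k$ is odd, $z\mapsto -z$ carries $C_{1}$ onto $C_{3}$ preserving orientation and $f(-z)^{-k}\,d(-z)=f(z)^{-k}\,dz$, so $\int_{C_{3}}f(z)^{-k}\,dz=\int_{C_{1}}f(z)^{-k}\,dz$. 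Since $f$ has real coefficients, $f(\bar z)^{-k}=\overline{f(z)^{-k}}$; applying $z\mapsto\bar z$, and $z\mapsto-\bar z$ (the latter using odd $f$ and odd $k$ once more), gives $\int_{C_{4}}f(z)^{-k}\,dz=\int_{C_{2}}f(z)^{-k}\,dz=-\overline{\int_{C_{1}}f(z)^{-k}\,dz}$. Writing $I:=\int_{C^{+}_{\delta}}f(z)^{-k}\,dz$ and summing the four arcs, $\oint_{\abs{z}=\delta}f(z)^{-k}\,dz=2I-2\overline{I}=4i\,\Im(I)$; comparing with the residue formula above gives $2\pi i k\,\ngc=4i\,\Im(I)$, i.e.\ $\ngc=\frac{2}{\pi k}\,\Im(I)$, which is the claimed identity.

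The computation is entirely standard; the only part that demands care is the last paragraph --- keeping straight how each reflection and negation acts on the orientation of the four arcs and on the sign $(-1)^{-k}=-1$ --- together with the routine verification that $\delta$ can be chosen so that all the contour manipulations are valid.
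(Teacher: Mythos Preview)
Your argument is correct and follows essentially the same route as the paper's: Cauchy's formula for the coefficients of $g=f^{-1}$, the substitution $w=f(z)$, the integration-by-parts identity reducing $\oint z f' f^{-k-1}\,dz$ to $\tfrac{1}{k}\oint f^{-k}\,dz$, and finally the odd/real symmetries of $f$ to collapse the full circle to the first-quadrant arc. The only cosmetic difference is that the paper first invokes ``$\ngc$ is real'' to pass to $\tfrac{1}{2\pi k}\Im(\oint_{C_\delta})$ and then cites the two symmetries tersely, whereas you carry out the four-arc bookkeeping explicitly to obtain $\oint = 4i\,\Im(I)$; both computations are the same identity read two ways.
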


\begin{proof}
    Via the power series representation, $\nf{a}{b}{z}$ can be analytically continued to the unit complex disk. 
    Thus by inverse function theorem for holomorphic functions, there exists $\delta_0 \in (0,1]$ such that 
    $\nf{a}{b}{z}$ has an analytic inverse in the open disk $|z|<\delta_0$. So for $\delta \in (0,\delta_0)$, 
    $\nf{a}{b}{C_\delta}$ is a simple closed curve with winding number $1$ (where $C_\delta$ is the complex 
    circle of radius $\delta$ with the usual counter-clockwise orientation). Thus by Cauchy's integral formula 
    we have 
    \[
        \ngc 
        ~=~
        \frac{1}{2\pi i}\,\int_{\nf{a}{b}{C_\delta}} \frac{\nfin{a}{b}{w}}{w^{\,k}}\,dw
        ~~=~ 
        \frac{1}{2\pi i}\,\int_{C_\delta} \,\frac{z \cdot \alnf{a}{b}'(z)}{\nf{a}{b}{z}^{k+1}} \,\,dz
    \]
    where the second equality follows from substituting $w = \nf{a}{b}{z}$. 
    
    Now by \cref{no:roots},~\, $z/\nf{a}{b}{z}^k$~ is holomorphic on the open set $|z|\in (0,1)$, which 
    contains $C_{\delta}$. Hence by the fundamental theorem of contour integration we have 
    \[
        \int_{C_{\delta}} \frac{d}{dz}\inparen{\frac{z}{\nf{a}{b}{z}^{k}}}\,dz 
        ~=~ 
        0 
        \quad~~ \Rightarrow \quad
        \int_{C_{\delta}} \,\frac{z \cdot \alnf{a}{b}'(z)}{\nf{a}{b}{z}^{k+1}} \,\,dz
        ~=~
        \frac{1}{k}\,\int_{C_{\delta}} \,\frac{1}{\nf{a}{b}{z}^{k}}\,\,dz 
    \]
    So we get, 
    \[
        \ngc 
        ~=~
        \frac{1}{2\pi i k}\,\int_{C_\delta} \nf{a}{b}{z}^{-k}\,dz
        ~~=~ 
        \frac{1}{2\pi k}\,
        \Im\!\inparen{
        \int_{C_\delta} \nf{a}{b}{z}^{-k}\,dz
        }
    \]
    where the second equality follows since $\ngc$ is purely real. Lastly, we complete the proof of the 
    claim by using the fact that for odd $k$,~ $\nf{a}{b}{z}^{-k}$ is odd and that 
    $\overline{\nf{a}{b}{z}} = \nf{a}{b}{\overline{z}}$.
\end{proof}

We next state a standard bound on the magnitude of a contour integral that we will use in our analysis. 
\begin{fact}[ML-inequality]
\label[fact]{ML:inequality}
    If $f$ is a complex valued continuous function on a contour $\Gamma$ and $|f(z)|$ is bounded by $M$ for 
    every $z\in \Gamma$, then 
    \[
        \abs{\int_{\Gamma}f(z)} \leq M\cdot \ell(\Gamma)
    \]
    where $\ell(\Gamma)$ is the length of $\Gamma$. 
\end{fact}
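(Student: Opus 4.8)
The plan is to reduce this estimate to the elementary triangle inequality for integrals of complex-valued functions of a real variable. First I would fix a piecewise continuously differentiable parametrization $\gamma\colon[\alpha,\beta]\to\mathbb{C}$ of the contour $\Gamma$ (if $\Gamma$ is only piecewise smooth one argues on each smooth arc and adds the resulting bounds, since both sides of the claimed inequality are additive over a subdivision of $\Gamma$). With this parametrization, $\int_{\Gamma}f(z)\,dz=\int_{\alpha}^{\beta}f(\gamma(t))\,\gamma'(t)\,dt$ by definition of the contour integral, and $\ell(\Gamma)=\int_{\alpha}^{\beta}|\gamma'(t)|\,dt$; the integrand $t\mapsto f(\gamma(t))\gamma'(t)$ is bounded and (piecewise) continuous, hence integrable.

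Next I would record the complex triangle inequality $\bigl|\int_{\alpha}^{\beta}g(t)\,dt\bigr|\le\int_{\alpha}^{\beta}|g(t)|\,dt$ for an integrable $g\colon[\alpha,\beta]\to\mathbb{C}$. Writing $\int_{\alpha}^{\beta}g(t)\,dt=\varrho\,e^{i\theta}$ with $\varrho\ge 0$ real, we get $\varrho=e^{-i\theta}\int_{\alpha}^{\beta}g(t)\,dt=\int_{\alpha}^{\beta}e^{-i\theta}g(t)\,dt$, and since the left-hand side is real this equals $\int_{\alpha}^{\beta}\Re\!\bigl(e^{-i\theta}g(t)\bigr)\,dt$, which is at most $\int_{\alpha}^{\beta}|g(t)|\,dt$ because $\Re(w)\le|w|$ pointwise. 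This polar-form trick is the one step that needs slightly more than a verbatim appeal to the real-variable triangle inequality; the rest is bookkeeping.

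Finally I would combine the two ingredients: taking $g(t)=f(\gamma(t))\gamma'(t)$ and using the hypothesis $|f(z)|\le M$ for $z\in\Gamma$, we have $|g(t)|=|f(\gamma(t))|\,|\gamma'(t)|\le M\,|\gamma'(t)|$ pointwise, whence
\[
    \abs{\int_{\Gamma}f(z)\,dz}=\abs{\int_{\alpha}^{\beta}g(t)\,dt}\le\int_{\alpha}^{\beta}|g(t)|\,dt\le M\int_{\alpha}^{\beta}|\gamma'(t)|\,dt=M\cdot\ell(\Gamma).
\]
There is no serious obstacle here; the only point worth remarking on at the end is that the bound is independent of the chosen parametrization, since both the contour integral and the arclength $\ell(\Gamma)$ are parametrization-invariant, so the inequality is a genuine statement about $\Gamma$ itself and will be applied later with $\Gamma=C_\delta$ (or its quarter-circle variant $C^+_\delta$) and $f(z)=\nf{a}{b}{z}^{-k}$.
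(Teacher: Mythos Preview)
Your proof is correct and is the standard textbook argument for the ML-inequality. The paper does not actually give a proof of this statement at all: it is stated as a \emph{Fact} (a standard bound from complex analysis) and simply invoked later, so there is nothing to compare against beyond noting that your argument is exactly the one that would appear in any complex analysis text.
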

Unfortunately the integrand in \cref{inversion:formula} can be very large for small $\delta$, and we cannot 
use the ML-inequality as is. To fix this, we modify the contour of integration (using Cauchy's integral 
theorem) so that the imaginary part of the integral vanishes when restricted to the sections close to 
the origin, and the integrand is small in magnitude on the sections far from the origin (thus allowing us to 
use the ML-inequality). To do this we will need some preliminaries. 

$\nf{a}{b}{z}$ is defined on the closed complex unit disk. The domain is analytically extended to 
the region $\C\setminus ((-\infty,-1) \cup (1,\infty))$, using the Euler-type integral representation of the hypergeometric function. 
\[
    \nfext{a}{b}{z} 
    := 
    \betaterm^{-1}\cdot 
    \intfull{z}
\]
where $\mathrm{B}(\tau_1,\tau_2)$ is the beta function and 
\[
    \intfull{z}
    :=  
    z \int_{0}^{1} \frac{(1-t)^{b/2} \,dt}{t^{(1+b)/2}\cdot (1-z^2 t)^{(1-a)/2}}.
\]

\begin{fact}
\label[fact]{no:roots}
    For any $a_1>0$, $\hypergeometric(a_1,b_1,c_1,z)$ has no non-zero roots in the region 
    $\C\setminus  (1,\infty)$. This implies that if $p^*<2$,~ $\nfext{a}{b}{z}$ has no non-zero roots in 
    the region $\C\setminus ((-\infty,-1) \cup (1,\infty))$. 
\end{fact}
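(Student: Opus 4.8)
The plan is to reduce the statement about $\nfext{a}{b}{\cdot}$ to the statement about $\hypergeometric$ and to prove the latter from the Euler integral representation. Matching the definition of $\intfull{z}$ against the Euler representation of the hypergeometric function, one checks directly that $\nfext{a}{b}{z} = \betaterm^{-1}\,\intfull{z} = z\cdot\hypergeometric\inparen{\tfrac{1-a}{2},\tfrac{1-b}{2}\,;\,\tfrac{3}{2}\,;\,z^{2}}$, with parameters $a_1 := \tfrac{1-a}{2}$ (which is $>0$ precisely because $p^*<2$), $b_1 := \tfrac{1-b}{2}\in(0,\tfrac12]$ and $c_1 := \tfrac32$, so that $c_1>b_1>0$. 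Since $z\mapsto z^2$ maps $\C\setminus((-\infty,-1)\cup(1,\infty))$ into $\C\setminus(1,\infty)$, any nonzero root of $\nfext{a}{b}{\cdot}$ in the former region yields a (nonzero) root of $w\mapsto\hypergeometric\inparen{a_1,b_1\,;\,c_1\,;\,w}$ in the latter; it therefore suffices to show $\hypergeometric\inparen{a_1,b_1\,;\,c_1\,;\,w}\neq 0$ for every $w\in\C\setminus(1,\infty)$. For this I would invoke the Euler integral
\[
  \hypergeometric\inparen{a_1,b_1\,;\,c_1\,;\,w} \;=\; \frac{1}{\mathrm{B}(b_1,c_1-b_1)}\int_{0}^{1}\frac{t^{\,b_1-1}(1-t)^{\,c_1-b_1-1}}{(1-wt)^{\,a_1}}\,dt,
\]
which holds on the unit disk and, with the principal branch of $(1-wt)^{-a_1}$, extends analytically to all of $\C\setminus[1,\infty)$: if $1-wt\in(-\infty,0]$ for some $t\in[0,1]$ then $w\in[1/t,\infty)\subseteq[1,\infty)$, which is excluded. (Integrability at the endpoints causes no trouble: the exponents $b_1-1$, $c_1-b_1-1$, and $c_1-b_1-1-a_1$ at $w=1$ are all $>-1$ for $a,b\in[0,1)$.)

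The crux is to show this integral is nonzero for each fixed $w\in\C\setminus(1,\infty)$. If $w$ is real (hence $w\le 1$) the integrand is a strictly positive function of $t\in(0,1)$, so the integral is a positive real number; equivalently, at $w=1$ one may use Gauss's summation $\hypergeometric\inparen{a_1,b_1\,;\,c_1\,;\,1}=\Gamma(c_1)\Gamma(c_1-a_1-b_1)/\inparen{\Gamma(c_1-a_1)\Gamma(c_1-b_1)}$, which is nonzero since $c_1-a_1-b_1=\tfrac12+\tfrac{a+b}{2}>0$. The interesting case is non-real $w$, and here I would control the argument of the integrand. The path $t\mapsto 1-wt$, $t\in[0,1]$, is a straight segment which (by the computation above) misses $(-\infty,0]$, and the argument of a point moving along a line that misses the origin is monotone; thus $\arg(1-wt)$ runs monotonically from $0$ (at $t=0$) to $\arg(1-w)\in(-\pi,\pi)$. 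Consequently $\arg\!\inparen{(1-wt)^{-a_1}}=-a_1\arg(1-wt)$ sweeps out an interval containing $0$ of length at most $a_1\lvert\arg(1-w)\rvert<a_1\pi\le\tfrac{\pi}{2}$, using $a_1\le\tfrac12$. Letting $\psi$ be the midpoint of this interval, $\Re\!\inparen{e^{-i\psi}(1-wt)^{-a_1}}>0$ for every $t\in(0,1]$, so integrating against the positive weight $t^{\,b_1-1}(1-t)^{\,c_1-b_1-1}$ gives $\Re\!\inparen{e^{-i\psi}\,\hypergeometric\inparen{a_1,b_1\,;\,c_1\,;\,w}}>0$, hence $\hypergeometric\inparen{a_1,b_1\,;\,c_1\,;\,w}\neq 0$. (The same argument works whenever $a_1\le 1$, which comfortably covers the range $a_1=\tfrac{1-a}{2}\in(0,\tfrac12]$ relevant here.)

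Feeding this back through $z\mapsto z^2$ gives the stated nonvanishing of $\nfext{a}{b}{\cdot}$ on $\C\setminus((-\infty,-1)\cup(1,\infty))$. The one step requiring genuine care is the non-real case: one must verify that $\inbraces{(1-wt)^{-a_1}}_{t\in[0,1]}$ lies in a single open half-plane through the origin, which I would justify exactly as above from (i) the segment $\inbraces{1-wt}$ avoiding the branch cut $(-\infty,0]$, (ii) monotonicity of the argument along a line, and (iii) the bound $a_1\le\tfrac12$ keeping the angular spread strictly below $\pi$. The remaining ingredients---the Euler representation and its analytic continuation, endpoint integrability, and the $z\mapsto z^2$ bookkeeping---are routine.
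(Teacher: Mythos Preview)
The paper states this as a \emph{Fact} without proof, so there is no argument of the paper's to compare against. Your proof is correct and self-contained for the parameter range that actually matters here, namely $a_1=\tfrac{1-a}{2}\in(0,\tfrac12]$, $b_1=\tfrac{1-b}{2}\in(0,\tfrac12]$, $c_1=\tfrac32$ (so that $c_1>b_1>0$ and the Euler integral is available). The crux---showing that $\{(1-wt)^{-a_1}:t\in[0,1]\}$ lies in a single open half-plane---is handled cleanly: your monotonicity claim for $\arg(1-wt)$ is correct (one computes $\tfrac{d}{dt}\arg(1-wt)=-\Im(w)/|1-wt|^2$, which has constant sign), and the resulting angular spread $a_1\,|\arg(1-w)|<a_1\pi$ is indeed strictly below $\pi$ whenever $a_1\le 1$. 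The endpoint checks (integrability at $t=0,1$ and at $w=1$, and the Gauss summation giving a nonzero value there) are all fine, as is the $z\mapsto z^2$ bookkeeping.

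One remark: the Fact as literally stated in the paper asserts nonvanishing ``for any $a_1>0$'' with $b_1,c_1$ left unquantified, which is stronger than what your half-plane argument yields (you need $a_1\le 1$, and implicitly $c_1>b_1>0$ for the Euler representation). But the paper only ever invokes this Fact with $a_1\le\tfrac12$, so your proof covers every use of it; you might simply note that you are proving the Fact in the regime relevant to the paper rather than in the generality stated.
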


We are now equipped to expand the contour. Our choice of contour is inspired by that of
Haagerup~\cite{Haagerup81} 
which he used in deriving an upper bound on the complex Grothendieck constant. The contour we choose 
has some differences for technical reasons related to the region to which hypergeometric 
functions can be analytically extended. The analysis is quite different from that of Haagerup 
since the functions in consideration behave differently. In fact the inverse function Haagerup considers 
has polynomially decaying coefficients while the class of inverse functions we consider have coefficients 
that have decay between exponential and factorial. 

\begin{observation}[Expanding Contour]
    For any $\alpha \geq 1$ and $\eps>0$, let $P(\alpha,\eps)$ be the four-part curve (see \cref{contour}) 
    given by 
    \begin{itemize}
        \item the line segment $\delta~ \rightarrow ~(1-\eps)$, 
        
        \item the line segment $(1-\eps)~ \rightarrow ~ (\sqrt{\alpha-\eps} + i\sqrt{\eps})$ 
        (henceforth referred to as $L_{\alpha,\eps}$), 
        
        \item the arc along $C^+_\alpha$ starting at $(\sqrt{\alpha-\eps} + i\sqrt{\eps})$ and ending at $i\alpha$ 
        (henceforth referred to as $C^+_{\alpha,\eps}$), 
        
        \item the line segment $i\alpha~ \rightarrow ~i\delta$. 
    \end{itemize}
    By Cauchy's integral theorem, combining \cref{residue} with \cref{no:roots} yields that for odd $k$, 
    \[
        \ngc = \frac{2}{\pi k}\, \Im\!\inparen{\int_{P(\alpha,\eps)} \nfext{a}{b}{z}^{-k}\,dz}
    \]
\end{observation}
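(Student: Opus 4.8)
The plan is to derive the identity directly from the Inversion Formula \cref{residue} by deforming the contour from the small quarter circle $C^+_\delta$ to the curve $P(\alpha,\eps)$ via Cauchy's integral theorem. Fix $\alpha\ge 1$ and a small $\eps>0$, and take $\delta$ to be the constant supplied by \cref{residue}, shrunk if necessary so that $0<\delta<1-\eps$ (harmless, since $\ngc$ does not depend on the particular $\delta\in(0,\delta_0)$ used in \cref{residue}). The first step is to observe that $C^+_\delta$ lies in the open unit disk, where $\nfext{a}{b}{z}$ is by construction the analytic continuation of $\nf{a}{b}{z}$; so the two agree on $C^+_\delta$ and hence $\int_{C^+_\delta}\nf{a}{b}{z}^{-k}\,dz=\int_{C^+_\delta}\nfext{a}{b}{z}^{-k}\,dz$. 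Plugging this into \cref{residue}, it remains only to show
\[
    \int_{C^+_\delta}\nfext{a}{b}{z}^{-k}\,dz ~=~ \int_{P(\alpha,\eps)}\nfext{a}{b}{z}^{-k}\,dz \mcom
\]
after which taking imaginary parts and multiplying by $2/(\pi k)$ gives the claim.

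For this last equality, the next step is to set up the correct closed contour and region. The curves $C^+_\delta$ and $P(\alpha,\eps)$ are two paths with common endpoints $\delta$ and $i\delta$; inspecting the four pieces of $P(\alpha,\eps)$ shows that all of them have modulus at least $\delta$ and meet $C^+_\delta$ only at those two endpoints, so the concatenation $\Gamma:=C^+_\delta * P(\alpha,\eps)^{-1}$ is a simple closed curve enclosing a Jordan domain $R$ contained in the closed first quadrant. Cauchy's integral theorem then yields $\int_{\Gamma}\nfext{a}{b}{z}^{-k}\,dz=0$, i.e.\ the desired equality, \emph{provided} $\nfext{a}{b}{z}^{-k}$ is holomorphic on a neighbourhood of $\overline{R}$.

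Establishing that holomorphy is the heart of the argument, and it comes down to three checks. (i) Every point of $\overline R$ lies on or outside the arc $C^+_\delta$, so $|z|\ge\delta>0$ on $\overline R$ and the pole of $\nfext{a}{b}{z}^{-k}$ at $z=0$ (where $\nfext{a}{b}{0}=0$) is avoided. (ii) $\overline R$ is contained in $\{|z|\le\alpha\}$ intersected with the closed first quadrant, so the only part of the branch cuts $(-\infty,-1]\cup[1,\infty)$ of $\nfext{a}{b}{z}$ that could touch $\overline R$ is $[1,\alpha]$ on the real axis; but the real-axis piece of $P(\alpha,\eps)$ stops at $1-\eps$ and the chord $L_{\alpha,\eps}$ immediately leaves into the open upper half-plane, so $\overline R\cap\R=[\delta,1-\eps]$ is disjoint from $[1,\alpha]$, and $\overline R$ is therefore a compact subset of $\C\setminus\inparen{(-\infty,-1]\cup[1,\infty)\cup\{0\}}$, on which $\nfext{a}{b}{z}$ is holomorphic via its Euler-type integral representation. (iii) Since we are in the regime $p^*<2$, \cref{no:roots} tells us $\nfext{a}{b}{z}$ has no non-zero roots in $\C\setminus((-\infty,-1)\cup(1,\infty))$, hence none on $\overline R$. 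Combining (ii) and (iii) gives holomorphy of $\nfext{a}{b}{z}^{-k}$ near $\overline R$, and the argument is complete.

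I expect the main obstacle to be exactly this geometric verification: making sure the region $R$ swept out between $C^+_\delta$ and $P(\alpha,\eps)$ does not engulf the branch point $z=1$ or any of the cut $[1,\infty)$. This is precisely why $P(\alpha,\eps)$ is routed off the real axis already at $1-\eps$ and rejoins the big arc $C^+_\alpha$ along the chord $L_{\alpha,\eps}$ rather than running straight out to $\alpha$; only the existence of a positive distance from $\overline R$ to the cut is needed at this stage, whereas the quantitative estimates on $|\nfext{a}{b}{z}|$ along $L_{\alpha,\eps}$ and $C^+_{\alpha,\eps}$ (which make this reshaped contour useful through the ML-inequality) are deferred to the subsequent analysis.
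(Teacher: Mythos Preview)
Your proposal is correct and is exactly the argument the paper has in mind: the Observation is stated in the paper with only the one-line justification ``By Cauchy's integral theorem, combining \cref{residue} with \cref{no:roots},'' and you have simply unpacked that sentence, checking that the region swept between $C^+_\delta$ and $P(\alpha,\eps)$ avoids $0$, avoids the cut $[1,\infty)$ (this is the reason the contour leaves the real axis at $1-\eps$), and contains no zero of $\alnfext{a}{b}$ by \cref{no:roots}. There is nothing to add.
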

\begin{figure}
\caption{The Contour $P(\alpha,\eps)$}
\label[figure]{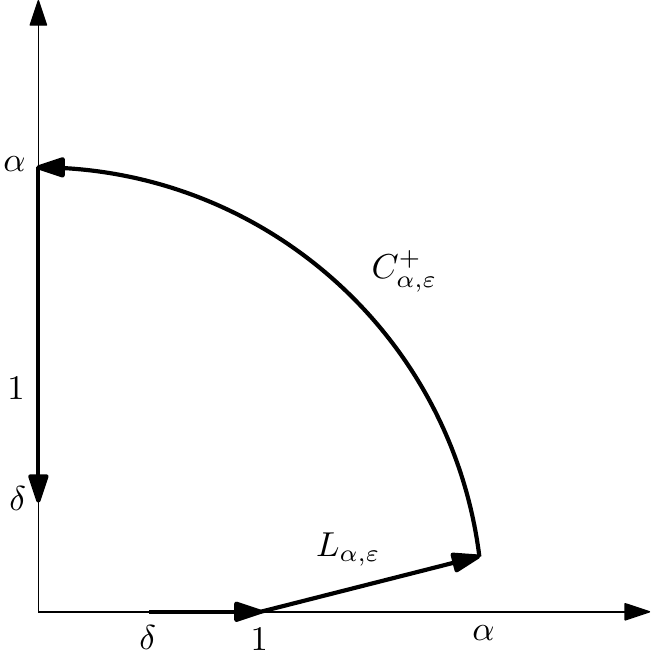}
\includegraphics{contour}
\centering
\end{figure}

We will next see that the imaginary part of our contour integral vanishes on section of $P(\alpha,\eps)$. 
Applying ML-inequality to the remainder of the contour, combined with lower bounds on 
$|\nfext{a}{b}{z}|$ ~(proved below the fold in \cref{largeness:of:f}), allows us to derive an exponentially 
decaying upper bound on $|\ngc|$. 
\begin{lemma}
\label[lemma]{inv:coeff:bound}
    For any $1\leq p^*,q<2$,  there exists $\eps>0$ such that 
    \[
        |\ngc| \leq \frac{6.1831}{k(1+\eps)^{k}}.
    \]
\end{lemma}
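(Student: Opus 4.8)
The plan is to start from the contour–integral representation
\[
    \ngc ~=~ \frac{2}{\pi k}\,\Im\!\inparen{\int_{P(\alpha,\eps)} \nfext{a}{b}{z}^{-k}\,dz}
\]
established in the Expanding Contour observation, and to show that (i) the imaginary part receives no contribution from the two segments of $P(\alpha,\eps)$ lying on the coordinate axes, and (ii) on the remaining two pieces $L_{\alpha,\eps}$ and $C^+_{\alpha,\eps}$ the integrand is exponentially small, via the ML–inequality (\cref{ML:inequality}) together with the lower bounds on $|\nfext{a}{b}{z}|$ supplied by \cref{largeness:of:f}. Fix $a,b\in[0,1)$ throughout.

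For step (i): on the segment from $\delta$ to $1-\eps$ the variable $z=x$ is real with $x\in(0,1)$, so the Euler-type integrand defining $\nfext{a}{b}{x}$ is real (indeed $\nfext{a}{b}{x}=\nf{a}{b}{x}$, a convergent sum of real terms), hence $\nfext{a}{b}{x}^{-k}$ is real and finite (nonvanishing by \cref{no:roots}, as $x\neq0$), and since $dz$ is real there the contribution to the imaginary part is $0$. On the segment from $i\alpha$ to $i\delta$ write $z=it$. Since $\nf{a}{b}{z}$ has only odd Taylor coefficients, its analytic continuation $\nfext{a}{b}{z}$ is odd on the connected, reflection-symmetric domain $\C\setminus((-\infty,-1]\cup[1,\infty))$ and obeys $\overline{\nfext{a}{b}{z}}=\nfext{a}{b}{\bar z}$; consequently $\overline{\nfext{a}{b}{it}}=\nfext{a}{b}{-it}=-\nfext{a}{b}{it}$, so $\nfext{a}{b}{it}$ is purely imaginary, and nonzero by \cref{no:roots} (using $p^*<2$ and $it\neq0$ since $\delta>0$). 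For odd $k$ an odd power of a purely imaginary number is purely imaginary, and $dz=i\,dt$, so the integrand $\nfext{a}{b}{it}^{-k}\,i\,dt$ is purely real and again contributes $0$. Hence
\[
    |\ngc| ~\leq~ \frac{2}{\pi k}\inparen{\int_{L_{\alpha,\eps}} |\nfext{a}{b}{z}|^{-k}\,|dz| ~+~ \int_{C^+_{\alpha,\eps}} |\nfext{a}{b}{z}|^{-k}\,|dz|}.
\]

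For step (ii): I would invoke \cref{largeness:of:f} to obtain, for an appropriate radius $\alpha$ and all sufficiently small $\eps>0$, a uniform lower bound $|\nfext{a}{b}{z}|\geq r$ on $L_{\alpha,\eps}\cup C^+_{\alpha,\eps}$ with $r=r(a,b)>1$; the ML–inequality then yields $|\ngc|\leq \frac{2}{\pi k}\,\bigl(\ell(L_{\alpha,\eps})+\ell(C^+_{\alpha,\eps})\bigr)\,r^{-k}$. The lengths are elementary: $\ell(L_{\alpha,\eps})\leq \bigl|(\sqrt{\alpha-\eps}+i\sqrt\eps)-(1-\eps)\bigr|\leq \sqrt\alpha+1$ and $\ell(C^+_{\alpha,\eps})\leq \tfrac{\pi}{2}\alpha$, so for the $\alpha$ dictated by \cref{largeness:of:f} one checks the arithmetic inequality $\tfrac{2}{\pi}\bigl(\sqrt\alpha+1+\tfrac{\pi}{2}\alpha\bigr)\leq 6.1831$; taking the conclusion's $\eps$ to be $r-1>0$ (shrinking the contour parameter below $r-1$ if one wants to reuse the symbol) gives exactly $|\ngc|\leq \frac{6.1831}{k(1+\eps)^k}$. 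The one delicate point internal to this lemma is the vanishing of the imaginary part on the axial segments — which rests on oddness, Schwarz reflection, and nonvanishing away from the origin — whereas the substantive analytic work, namely producing a lower bound $|\nfext{a}{b}{z}|>1$ on the far portion of the contour that is uniform over all $a,b\in[0,1)$, is precisely the content of \cref{largeness:of:f} and is the real obstacle; the explicit constant $6.1831$ is simply $\tfrac2\pi$ times the resulting bound on $\ell(L_{\alpha,\eps})+\ell(C^+_{\alpha,\eps})$ at the chosen $\alpha$.
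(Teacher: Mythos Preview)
Your proposal is correct and follows essentially the same route as the paper: take $\alpha=6$, observe that the two axial segments of $P(6,\eps)$ contribute nothing to the imaginary part (the paper argues this directly from the integral representation and the power series, your oddness-plus-Schwarz-reflection argument is an equally valid alternative), and then apply the ML-inequality on $L_{6,\eps}$ and $C^+_{6,\eps}$ via the two lemmas of \cref{largeness:of:f}. One small arithmetic slip: with $\alpha=6$ your displayed check $\tfrac{2}{\pi}(\sqrt\alpha+1+\tfrac{\pi}{2}\alpha)\leq 6.1831$ is false (the left side is about $8.2$); the paper instead uses $\ell(L_{6,\eps})=5+O(\sqrt\eps)$ and records $3\pi/2$ for the arc contribution to arrive at $6.1831$, though the specific constant is immaterial to the downstream application.
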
 

\begin{proof}
    For a contour $P$, we define $V(P)$ as  
    \[
        V(P) := \frac{2}{\pi k} \,\Im\!\inparen{\int_{P} \nfext{a}{b}{z}^{-k}\,dz}
    \]
    As is evident from the integral representation, $\nfext{a}{b}{z}$ is purely imaginary if $z$ is 
    purely imaginary, and as is evident from the power series, $\nf{a}{b}{z}$ is purely real if $z$ lies on 
    the real interval $[-1,1]$. This implies that $V(\delta \rightarrow (1-\eps)) = 
    V(i\alpha\rightarrow i\delta) = 0$. 
    
    Now combining \cref{ML:inequality} (ML-inequality) with \cref{|f(z)|:incr:in:|z|} and 
    \cref{close:to:real:contour} (which state that the integrand is small in magnitude over $C^+_{6,\eps}$ 
    and $L_{6,\eps}$ respectively), we get that for sufficiently small $\eps>0$, 
    \begin{align*}
        |V(P(6,\eps))| 
        &\leq 
        |V(C^+_{6,\eps})| + |V(L_{6,\eps})|  \\
        &\leq 
        \frac{2}{\pi k}\cdot \frac{3\pi/2}{(1+\eps)^{k}} + 
        \frac{2}{\pi k}\cdot \frac{6-1+O(\sqrt{\eps})}{(1+\eps)^k} \\
        &\leq 
        \frac{6.1831}{k (1+\eps)^{k}}. 
        &&(\text{taking } \eps \text{ sufficiently small}) \qedforce
    \end{align*}
\let\qed\relax
\end{proof}

\subsubsection{Lower bounds on $|\nfext{a}{b}{z}|$ Over $C^+_{\alpha,\eps}$ and $L_{\alpha,\eps}$}
\label[section]{largeness:of:f}
In this section we show that for sufficiently small $\eps$, $|\nfext{a}{b}{z}|>1$ over $L_{\alpha,\eps}$ (regardless of the value of $\alpha$, Lemma~\ref{close:to:real:contour}), and over $C^+_{\alpha,\eps}$ when $\alpha$ is a sufficiently large constant (Lemma~\ref{|f(z)|:incr:in:|z|}). 

We will first show the claim for $C^+_{\alpha,\eps}$ by relating $|\nfext{a}{b}{z}|$ to $|z|$. While the 
asymptotic behavior of hypergeometric functions for $|z|\to \infty$ has been extensively studied (see 
for instance \cite{Lozier03}), it appears that our desired estimates aren't immediate consequences of 
prior work for two reasons. Firstly, we require relatively precise estimates for moderately large but 
constant $|z|$. Secondly, due to the expressive power of hypergeometric functions, the estimates we 
derive can only be true for hypergeometric functions parameterized in a specific range. Indeed, our proof 
crucially uses the fact that $a,b\in [0,1)$. Our approach is to use the Euler-type integral representation 
of $\nfext{a}{b}{z}$ which as a reminder to the reader is as follows: 
\[
    \nfext{a}{b}{z} 
    := 
    \betaterm^{-1}\cdot 
    \intfull{z}
\]
where $\mathrm{B}(x,y)$ is the beta function and 
\[
    \intfull{z}
    :=  
    z \int_{0}^{1} \frac{(1-t)^{b/2} \,dt}{t^{(1+b)/2}\cdot (1-z^2 t)^{(1-a)/2}}.
\]
We start by making the simple observation that the integrand of $\intfull{z}$ is always in the positive 
complex quadrant --- an observation that will come in handy multiple times in this section, in dismissing 
the possibility of cancellations. This is the part of our proof that makes the most crucial use of 
the assumption that $0\leq a<1$ (equivalently $1\leq p^* <2$). 
\begin{observation}
\label[observation]{reim:monotonicity}
    Let $z = r e^{i\theta}$ be such that either one of the following two cases is satisfied:
    \begin{enumerate}[(A)]
        \item $r<1$ and $\theta = 0$.
        
        \item $\theta\in (0,\pi/2]$.
    \end{enumerate}     
    Then for any $0 \leq a\leq 1$ and any $t\in \R^{+}$, 
    \[
        \arg\inparen{\frac{z}{(1 - tz^2)^{(1-a)/2}}} \in [0,\pi/2]
    \]
\end{observation}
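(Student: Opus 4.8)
The plan is to reduce the whole statement to controlling $\arg(1-tz^2)$, using the additivity of the argument,
\[
\arg\!\left(\frac{z}{(1-tz^2)^{(1-a)/2}}\right) \;=\; \arg(z)\;-\;\frac{1-a}{2}\,\arg\!\left(1-tz^2\right),
\]
which is valid as long as the right-hand side stays in $(-\pi,\pi]$ — something I will verify as part of the argument. First I would dispose of case (A): there $z=r\in(0,1)$ is a positive real and $z^2\in(0,1)$, so for every $t$ that actually arises (in particular the integration variable $t\in[0,1]$ of $\intfull{\cdot}$, where $tz^2<1$) the number $1-tz^2$ is a positive real; hence $\arg(1-tz^2)=0$ and the expression has argument $0\in[0,\pi/2]$.

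The substance is case (B), where $\theta\in(0,\pi/2]$. Here $-z^2 = r^2 e^{\,i(2\theta-\pi)}$ has argument $2\theta-\pi\in(-\pi,0]$. The key step is the observation that the closed set $\Gamma_\theta := \{\rho e^{i\psi}:\rho\ge 0,\ \psi\in[2\theta-\pi,0]\}$ is a convex cone: its angular width $\pi-2\theta$ is strictly less than $\pi$ (because $\theta>0$), so $\Gamma_\theta$ is an intersection of two closed half-planes through the origin and is closed under nonnegative linear combinations. Since $1$ (argument $0$) and $-z^2$ (argument $2\theta-\pi$) both lie in $\Gamma_\theta$, so does $1-tz^2 = 1\cdot 1 + t\cdot(-z^2)$ for every $t\ge 0$; therefore $\arg(1-tz^2)\in[2\theta-\pi,0]$. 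Moreover in case (B) the quantity $1-tz^2$ is never $0$ (that would force $z^2$, hence $z$, to be real) and never a negative real (its argument lies in $[2\theta-\pi,0]$ with $2\theta-\pi>-\pi$), so the principal branch of $(\cdot)^{(1-a)/2}$ is continuous along the path in question and $\arg\!\big((1-tz^2)^{(1-a)/2}\big)=\tfrac{1-a}{2}\arg(1-tz^2)$, which also confirms the additivity used above.

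Feeding this back in, $\arg\!\big(z/(1-tz^2)^{(1-a)/2}\big)\in\big[\theta,\ \theta+\tfrac{1-a}{2}(\pi-2\theta)\big]$. The lower endpoint is $\theta\ge 0$; for the upper endpoint I would invoke $\tfrac{1-a}{2}\le\tfrac12$ (since $a\ge 0$) and $\pi-2\theta\ge 0$ (since $\theta\le\pi/2$) to get $\theta+\tfrac{1-a}{2}(\pi-2\theta)\le\theta+\tfrac12(\pi-2\theta)=\pi/2$. I would also record the boundary case $\theta=\pi/2$ explicitly: there $z^2=-r^2$, so $1-tz^2=1+tr^2>0$, $\arg(1-tz^2)=0$, and the expression has argument exactly $\pi/2$, consistent with the bound.

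I expect the only delicate point — worth spelling out rather than waving past — is the pair of claims $\arg(1-tz^2)\in[2\theta-\pi,0]$ and the matching branch bookkeeping for $(\cdot)^{(1-a)/2}$; the convex-cone formulation dispatches both simultaneously, and it is exactly where the hypothesis $0\le a\le 1$ is used (via $\tfrac{1-a}{2}\le\tfrac12$). An alternative, if one prefers a purely analytic route to $\arg(1-tz^2)\in[2\theta-\pi,0]$, is to note that $\tfrac{d}{dt}\arg(1-tz^2)=-\sin(2\theta)/|1-tz^2|^2\le 0$, with value $0$ at $t=0$ and limit $2\theta-\pi$ as $t\to\infty$; but the cone argument is shorter and self-contained.
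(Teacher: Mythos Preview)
Your proof is correct and follows essentially the same approach as the paper's: both reduce to pinning down $\arg(1-tz^2)\in[2\theta-\pi,0]$ and then use $\tfrac{1-a}{2}\le\tfrac12$ together with $\arg(z)=\theta$ to land in $[0,\pi/2]$. Your convex-cone packaging of that key step is a bit cleaner than the paper's route (which checks $\Im(1-tz^2)<0$ directly and then asserts, without further comment, that adding $1$ to $-tz^2$ pushes the argument above $2\theta-\pi$), and your caveat that case~(A) only goes through for $tz^2<1$ is on point --- the claim as stated for arbitrary $t\in\R^+$ actually fails there, but the paper elides this too and only ever invokes the observation with $t\in[0,1]$.
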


\begin{proof}
    The claim is clearly true when $\theta = 0$ and $r<1$. It is also clearly true when $\theta=\pi/2$. 
    Thus we may assume $\theta\in (0,\pi/2)$.  
    \begin{align*}
        & \arg(z) \in (0,\pi/2) 
        \quad\Rightarrow 
        \arg(-tz^2) \in (-\pi,0) 
        \quad\Rightarrow 
        \Im(-tz^2) < 0 \\
        &\Rightarrow 
        \Im(1-tz^2) < 0
        \quad\Rightarrow 
        \arg(1 - tz^2) \in (-\pi,0) 
    \end{align*}
    Moreover since $\arg(-tz^2)=2\theta-\pi \in (-\pi,0)$, we have $\arg(1-tz^2)> 2\theta-\pi$. 
    Thus we have, 
    \begin{align*}
        & \arg(1-tz^2) \in (2\theta - \pi,0) 
        \quad\Rightarrow 
        \arg\inparen{(1-tz^2)^{(1-a)/2}} \in ((1-a)(\theta-\pi/2),0) \\
        &\Rightarrow 
        \arg\inparen{1/(1-tz^2)^{(1-a)/2}} \in (0,(1-a)(\pi/2-\theta))  \\
        &\Rightarrow 
        \arg\inparen{z/(1-tz^2)^{(1-a)/2}} \in (0,(1-a)(\pi/2-\theta) +\theta) \subseteq (0,\pi/2)
\qedforce
    \end{align*}
\let\qed\relax
\end{proof}

We now show $|\nfext{a}{b}{z}|$ is large over $C^+_{\alpha,\eps}$. The main idea is to move from a 
complex integral to a real integral with little loss, and then estimate the real integral. To do this, we use 
\cref{reim:monotonicity} to argue that the magnitude of $\intfull{z}$ is within $\sqrt{2}$ of the 
integral of the magnitude of the integrand. 

\begin{lemma}[ \bf{$|\nfext{a}{b}{z}|$ is large over $C^+_{\alpha,\eps}$} ] 
\label[lemma]{|f(z)|:incr:in:|z|}
    Assume $a,b\in[0,1)$ and consider any $z\in \C$ with $|z| \geq 6$. Then $|\nfext{a}{b}{z}| > 1$. 
\end{lemma}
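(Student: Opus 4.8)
The plan is to lower-bound $|\nfext{a}{b}{z}|$ directly from its Euler-type integral representation, after first using \cref{reim:monotonicity} to eliminate all cancellation inside that integral. Since $\nfext{a}{b}{\cdot}$ is odd with real Taylor coefficients, $|\nfext{a}{b}{z}|$ is invariant under $z\mapsto -z$ and under $z\mapsto\bar z$, so I may assume $z$ lies in the closed first quadrant (which in any case is the only region in which this lemma gets applied, namely on $C^+_{\alpha,\eps}$). Write $r:=|z|\ge 6$. Recalling $\nfext{a}{b}{z} = \betaterm^{-1}\cdot\intfull{z}$ and that $\betaterm = \int_0^1 t^{-(1+b)/2}(1-t)^{b/2}\,dt$ is precisely the $z$-independent normalizing constant, the claim is equivalent to $|\intfull{z}| > \betaterm$.

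For the first step, the integrand of $\intfull{z}$ equals the positive real number $t^{-(1+b)/2}(1-t)^{b/2}$ times $z/(1-z^2t)^{(1-a)/2}$, whose argument lies in $[0,\pi/2]$ by \cref{reim:monotonicity}; hence $\intfull{z}$ itself lies in the closed first quadrant, so, its real and imaginary parts being nonnegative,
\[
    |\intfull{z}| ~\ge~ \frac{1}{\sqrt2}\int_0^1 \frac{r\,(1-t)^{b/2}\,dt}{t^{(1+b)/2}\,|1-z^2t|^{(1-a)/2}}\,.
\]
Using $|1-z^2t|\le 1+r^2t$ and, since $0\le a<1$, the bound $(1+r^2t)^{(1-a)/2}\le(1+r^2t)^{1/2}=r\,(r^{-2}+t)^{1/2}$, it follows that $|\nfext{a}{b}{z}| \ge \tfrac1{\sqrt2}\,F_b(r)$, where
\[
    F_b(r) ~:=~ \frac{1}{\betaterm}\int_0^1 \frac{t^{-(1+b)/2}(1-t)^{b/2}}{\sqrt{\,r^{-2}+t\,}}\,dt\,.
\]
For each fixed $t>0$ the factor $(r^{-2}+t)^{-1/2}$ is increasing in $r$, so $F_b(r)$ is increasing in $r$; it therefore suffices to prove $F_b(6)>\sqrt2$ for every $b\in[0,1)$.

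Now $F_b(6)$ is the expectation of the decreasing function $\phi(t):=(\tfrac1{36}+t)^{-1/2}$ under the $\mathrm{Beta}(\tfrac{1-b}{2},\tfrac{2+b}{2})$ distribution on $[0,1]$. This distribution is stochastically decreasing in $b$, since its first parameter decreases and its second increases as $b$ grows, and $\mathrm{Beta}(\alpha,\beta)$ is stochastically increasing in $\alpha$ and decreasing in $\beta$ (a monotone-likelihood-ratio comparison of the densities proportional to $t^{\alpha-1}(1-t)^{\beta-1}$). Since $\phi$ is decreasing, $F_b(6)$ is therefore increasing in $b$, so for every $b\in[0,1)$
\[
    F_b(6) ~\ge~ F_0(6) ~=~ \tfrac12\int_0^1 \frac{dt}{\sqrt t\,\sqrt{\tfrac1{36}+t}} ~=~ \sinh^{-1}(6) ~=~ \ln\!\big(6+\sqrt{37}\big) ~>~ \sqrt2\,,
\]
the middle equality via the substitution $t=u^2$. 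Chaining the inequalities gives $|\nfext{a}{b}{z}| \ge \sinh^{-1}(6)/\sqrt2 > 1$, as desired. (If one prefers to avoid stochastic dominance, the same bound $F_b(6)>\sqrt2$ also follows by splitting the integral at $t=\tfrac1{36}$ and $t=\tfrac12$, estimating the three pieces by elementary integrals of $t^{-(1+b)/2}$ and $t^{-1-b/2}$, and comparing with $\betaterm=\Gamma(\tfrac{1-b}{2})\Gamma(\tfrac{2+b}{2})/\Gamma(\tfrac32)$; or by a finite symbolic check as done elsewhere in the paper.)

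The main obstacle is not any single estimate but getting the combined losses to close with room to spare: discarding cancellation costs a factor $\sqrt2$, and replacing $|1-z^2t|$ by the crude bound $1+r^2t$ throws away the favorable fact that $|1-z^2t|$ dips far below $1$ near $t\approx r^{-2}$, so it is not a priori clear the inequality survives. The hypothesis $|z|\ge 6$ is exactly what supplies the needed slack, forcing the final bound above $\sinh^{-1}(6)/\sqrt2\approx 1.76$, uniformly in $a,b$. The assumption $a<1$ is used only to legitimize \cref{reim:monotonicity} and the Euler representation, whereas $a\ge 0$ (equivalently $p^*\ge 1$) is what is genuinely needed, to push the exponent $(1-a)/2$ down to $1/2$ in the correct direction. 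One last point to check is that the $b\to1^-$ regime — where $\mathrm{Beta}(\tfrac{1-b}{2},\tfrac{2+b}{2})$ degenerates at the origin and $\betaterm\to\infty$ — causes no difficulty; it does not, since there the expectation of the bounded decreasing function $\phi$ only grows.
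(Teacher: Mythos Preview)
Your proof is correct and takes a genuinely different, more streamlined route than the paper's. Both proofs begin identically: use \cref{reim:monotonicity} to confine the integrand to the first quadrant and pass to the integral of magnitudes at the cost of a $\sqrt{2}$. From there the paper substitutes $s=r^2t$, splits the integral at $s=1$, bounds the two pieces separately (one by $\log(r/\sqrt{2})/\sqrt{2}$, the other by $r^b\sqrt{1-1/r^2}/(1-b)$), and then invokes a Mathematica verification (Fact~4.4) to check the combined bound exceeds $\betaterm$ at $r=6$. You instead dispose of the $a$-dependence in one stroke via $|1-z^2t|^{(1-a)/2}\le(1+r^2t)^{1/2}$, recognize the remaining integral as a Beta$\big(\tfrac{1-b}{2},\tfrac{2+b}{2}\big)$ expectation of the decreasing function $(r^{-2}+t)^{-1/2}$, and use the likelihood-ratio ordering of Beta distributions to reduce to $b=0$, where the integral evaluates in closed form to $\sinh^{-1}(r)$. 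This yields the clean uniform bound $|\nfext{a}{b}{z}|\ge \sinh^{-1}(|z|)/\sqrt{2}$, giving $\approx 1.76$ at $|z|=6$ with no computer assistance. What the paper's approach buys in return is finer asymptotic information: its estimates retain the genuine $r^{\max\{a,b\}}$ (or $r^a\log r$) growth noted in the remark following the lemma, whereas your bound, being uniform in $a,b$, is only logarithmic in $r$. For the lemma as stated, however, your argument is both shorter and self-contained.
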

\begin{proof}
    We start with a useful substitution. 
    \begin{align*}
        \intfull{z}
        &=  z \int_{0}^{1} \frac{(1-t)^{b/2} \,dt}{t^{(1+b)/2}\cdot (1-z^2 t)^{(1-a)/2}} \\
        &= 
         r^{b}e^{i\theta} 
         \int_{0}^{r^2} \frac{(1-s/r^2)^{b/2} \,ds}{s^{(1+b)/2}\cdot (1-e^{2i\theta} s)^{(1-a)/2}} 
         && (\text{Subst. } s = r^2 t, \text{ where } z=re^{i\theta}) \\
         &= 
         r^{b}
         \int_{0}^{r^2} \frac{w_a(s,\theta)\cdot (1-s/r^2)^{b/2} \,ds}{s^{1+(b-a)/2}} 
    \end{align*}
    where
    \[
        w_a(s,\theta) 
        := 
        \frac{e^{i\theta}}{(1/s-e^{2i\theta})^{(1-a)/2}} \mper
    \]
    
    \medskip 
    
    We next exploit the observation that the integrand is always in the positive complex quadrant by 
    showing that $|\intfull{z}|$ is at most a factor of $\sqrt{2}$ away from the integral obtained 
    by replacing the integrand with its magnitude. 
    \begin{align*}
        &\quad~~ |\intfull{z}|  \\
        &=~ 
        \sqrt{\Re(\intfull{z})^2 + \Im(\intfull{z})^2} \\
        &\geq~ 
        (|\Re(\intfull{z})| + |\Im(\intfull{z})|)/\sqrt{2} &&(\text{Cauchy-Schwarz}) \\
        &=~ 
        (\Re(\intfull{z}) + \Im(\intfull{z}))/\sqrt{2} &&(\text{by \cref{reim:monotonicity}}) \\
        &=~
        \frac{r^{b}}{\sqrt{2}}
        \int_{0}^{r^2} 
        \inparen{ \Re(w_a(s,\theta)) + \Im(w_a(s,\theta)) } \cdot 
        \frac{(1-s/r^2)^{b/2} \,ds}{s^{1+(b-a)/2}} \\
        &=~
        \frac{r^{b}}{\sqrt{2}}
        \int_{0}^{r^2} 
        \inparen{ |\Re(w_a(s,\theta))| + |\Im(w_a(s,\theta))| } \cdot 
        \frac{(1-s/r^2)^{b/2} \,ds}{s^{1+(b-a)/2}} 
        &&(\text{by \cref{reim:monotonicity}})\\
        &\geq~
        \frac{r^{b}}{\sqrt{2}}
        \int_{0}^{r^2} 
        |w_a(s,\theta)| \cdot 
        \frac{(1-s/r^2)^{b/2} \,ds}{s^{1+(b-a)/2}} 
        &&(\norm{1}{v}\geq \norm{2}{v}) \\
        &\geq~
        \frac{r^{b}}{\sqrt{2}}
        \int_{0}^{r^2} 
        \frac{1}{(1+1/s)^{(1-a)/2}} \cdot 
        \frac{(1-s/r^2)^{b/2} \,ds}{s^{1+(b-a)/2}} 
    \end{align*}
    We now break the integral into two parts and analyze them separately. We start by analyzing the part that's  
    large when $b\rightarrow 0$.  
    \begin{align*}
        &\quad~~
        \frac{r^{b}}{\sqrt{2}}
        \int_{1}^{r^2} 
        \frac{1}{(1+1/s)^{(1-a)/2}} \cdot 
        \frac{(1-s/r^2)^{b/2} \,ds}{s^{1+(b-a)/2}} \\
        &\geq~
        \frac{r^{b}}{2}
        \int_{1}^{r^2} 
        \frac{(1-s/r^2)^{b/2} \,ds}{s^{1+(b-a)/2}} \\
         &\geq~ 
         \frac{r^b}{2} \int_{1}^{r^2/2} 
         \frac{(1-s/r^2)^{b/2} \,ds}{s^{1+(b-a)/2}} \\
         &\geq~ 
         \frac{r^b}{2\sqrt{2}} \int_{1}^{r^2/2} 
         \frac{ds}{s^{1+(b-a)/2}} 
         &&(\text{since } s\leq r^2/2) \\
         &\geq~ 
         \frac{r^b\cdot \min\{1,r^{a-b}\}}{2\sqrt{2}} \int_{1}^{r^2/2} 
         \frac{ds}{s} \\
         &=~ 
         \frac{\min\{r^a,r^b\}\cdot \log (r^2/2)}{2\sqrt{2}} \\
         &\geq~ 
         \frac{\log (r/\sqrt{2})}{\sqrt{2}} 
    \end{align*}
    We now analyze the part that's large when $b\rightarrow 1$.
    \begin{align*}
        &\quad~~
        \frac{r^{b}}{\sqrt{2}}
        \int_{0}^{1} 
        \frac{1}{(1+1/s)^{(1-a)/2}} \cdot 
        \frac{(1-s/r^2)^{b/2} \,ds}{s^{1+(b-a)/2}} \\
        &=~
        \frac{r^{b}}{\sqrt{2}}
        \int_{0}^{1} 
        \frac{(1-s/r^2)^{b/2} }{(1+s)^{(1-a)/2}} \cdot 
        \frac{ds}{s^{(1+b)/2}} \\
        &\geq~
        \frac{r^{b}\cdot \sqrt{1-1/r^2}}{2}
        \int_{0}^{1} 
        \frac{ds}{s^{(1+b)/2}} 
        &&(\text{since } s\leq 1) \\
        &=~
        \frac{r^{b}\cdot \sqrt{1-1/r^2}}{1-b} 
    \end{align*}
    Combining the two estimates above yields that if $r>\sqrt{2}$, 
    \[
        |\nfext{a}{b}{z}|
        \geq~
        \betaterm^{-1} \cdot 
        \inparen{
        \frac{\log (r/\sqrt{2})}{\sqrt{2}} 
        + 
        \frac{r^{b}\cdot \sqrt{1-1/r^2}}{1-b} 
        }
    \]
    Lastly, the proof follows by using the following estimate:
    \begin{fact}
    \label[fact]{mathematica:fact:beta}
        Via Mathematica, for $0\leq b< 1$ we have 
        \[
            \betaterm^{-1} \cdot 
            \inparen{
            \frac{\log (6/\sqrt{2})}{\sqrt{2}} 
            + 
            \frac{6^{b}\cdot \sqrt{1-1/6^2}}{1-b} 
            }
            ~~\geq~~
            1.003
        \]
    \end{fact}
\end{proof}

\begin{remark}
    The preceding proof can be used to derive the precise asymptotic behavior of 
    $|\nfext{a}{b}{z}|$ in $r$. Specifically, it grows as ~$r^{a} \log r$~ if ~$a=b$~ and as ~$r^{\,\max\{a,b\}}$ 
    ~if ~$a\neq b$. 
\end{remark}

\bigskip

We now show that $|\nfext{a}{b}{z}|>1$ over $L_{\alpha, \eps}$. To do this, it is insufficient to assume 
that $|z|\geq 1$ since there exist points $z$ (for instance $z=i$) of unit length such that 
$|\nfext{a}{b}{z}|<1$. To show the claim, we observe that $|\nfext{a}{b}{z}|$ is large when $z$ is close to 
the real line and use the fact that $L_{\alpha,\eps}$ is close to the real line. Formally, we show that if 
$z$ is of length at least $1$ and is sufficiently close to the real line, $|\nfext{a}{b}{z}|$ is close to 
$\nfext{a}{b}{1}$. Lastly, we use the power series representation of the hypergeometric function to obtain 
a sufficiently accurate lower bound on $\nfext{a}{b}{1}$. 

\begin{lemma}[ \bf{$|\nfext{a}{b}{z}|$ is large over $L_{\alpha,\eps}$} ] 
\label[lemma]{close:to:real:contour}
    Assume $a,b\in [0,1)$ and consider any $\gamma\geq 1-\eps_1$. 
    Let $\eps_2 := \sqrt{\eps_1}$ and $z := \gamma (1+i\eps_1)$. 
    Then for $\eps_1>0$ sufficiently small, $|\nfext{a}{b}{z}|>1$. 
\end{lemma}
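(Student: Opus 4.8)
The plan is to treat $z=\gamma(1+i\eps_1)$ as a small perturbation of a base point on the real axis, and to use that $|\nfext{a}{b}{\cdot}|$ is bounded away from $1$ on the relevant real segment. First I would dispose of large $\gamma$: if $\gamma\ge 6$ then $|z|=\gamma\sqrt{1+\eps_1^2}\ge 6$, so $|\nfext{a}{b}{z}|>1$ already by \cref{|f(z)|:incr:in:|z|}, and henceforth I may assume $\gamma\in[1-\eps_1,6]$. In this range $z$ lies within distance $\gamma\eps_1\le 6\eps_1$ of the real point $\gamma$, and since $\gamma\ge 1-\eps_1$ this base point sits on (or just below the tip of) the branch cut $[1,\infty)$. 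So the three ingredients I need are: (i) a lower bound $|\nfext{a}{b}{\cdot}|\ge \nf{a}{b}{1-\eps_1}$ on the real segment $[1-\eps_1,6]$, where on $[1,6]$ the value is read as the boundary value from the upper half-plane; (ii) $\nf{a}{b}{1-\eps_1}\to\nf{a}{b}{1}>1$ as $\eps_1\to 0$; and (iii) uniform continuity of $\nfext{a}{b}$ on a fixed compact region containing the segment and all relevant $z$'s, which transfers the bound from the segment to $z$ at a cost of $o_{\eps_1}(1)$.

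For (i), if $\gamma\in[1-\eps_1,1)$ then $\gamma$ is interior to the domain and $\nfext{a}{b}{\gamma}=\nf{a}{b}{\gamma}\ge\nf{a}{b}{1-\eps_1}$ by (M2) of \cref{monotonicity:properties} (nonnegative, hence increasing, Taylor coefficients). For $\gamma\ge 1$ I would use the Euler representation $\nfext{a}{b}{z}=\betaterm^{-1}\intfull{z}$ together with \cref{reim:monotonicity}: for $z=\gamma+i\eta$ with $\eta>0$ the integrand of $\intfull{z}$ lies in the closed first quadrant, so $|\nfext{a}{b}{\gamma+i\eta}|\ge\betaterm^{-1}\Re\,\intfull{\gamma+i\eta}$; on letting $\eta\to 0^+$, the contribution of $t\in(1/\gamma^2,1)$ to $\Re\,\intfull{\gamma+i0^+}$ equals $\gamma\cos\!\inparen{\tfrac{(1-a)\pi}{2}}\int_{1/\gamma^2}^1\tfrac{(1-t)^{b/2}\,dt}{t^{(1+b)/2}(\gamma^2t-1)^{(1-a)/2}}\ge 0$ (here $a\in[0,1)$), whence
\[
    \Re\,\intfull{\gamma+i0^+}\ \ge\ \gamma\int_0^{1/\gamma^2}\frac{(1-t)^{b/2}\,dt}{t^{(1+b)/2}(1-\gamma^2t)^{(1-a)/2}}\ =\ \gamma^{b}\int_0^1\frac{(1-s/\gamma^2)^{b/2}\,ds}{s^{(1+b)/2}(1-s)^{(1-a)/2}}\ \ge\ \intfull{1},
\]
where the middle step substitutes $s=\gamma^2t$ and the last uses $\gamma^b\ge 1$ and $(1-s/\gamma^2)^{b/2}\ge(1-s)^{b/2}$ (both valid as $\gamma\ge 1$, $b\ge 0$). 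Thus $|\nfext{a}{b}{\gamma+i0^+}|\ge\betaterm^{-1}\intfull{1}=\nfext{a}{b}{1}=\nf{a}{b}{1}\ge\nf{a}{b}{1-\eps_1}$, proving (i).

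Ingredient (ii) is immediate from \cref{hypergeometric:representation} and (M1)--(M2) of \cref{monotonicity:properties}: $\nf{a}{b}{1}=\hypergeometric\!\inparen{\tfrac{1-a}{2},\tfrac{1-b}{2};\tfrac32;1}=\sum_{k\ge 0}\tfrac{(\frac{1-a}{2})_k(\frac{1-b}{2})_k}{(\frac32)_k\,k!}\ge 1+\tfrac{(1-a)(1-b)}{6}>1$ (strict since $a,b<1$), and $\nf{a}{b}$ is continuous on $[0,1]$ because its Taylor coefficients are absolutely summable, so $\nf{a}{b}{1-\eps_1}\to\nf{a}{b}{1}$. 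For (iii), set $K:=\{w:\Im w\ge 0,\ \tfrac12\le|w|\le 7,\ \Re w\ge\tfrac12\}$; it contains the segment $[1-\eps_1,6]\times\{0\}$ and every $z=\gamma(1+i\eps_1)$ once $\eps_1$ is small, and on $K$ the extended function $\nfext{a}{b}$ is continuous (see below), hence uniformly continuous with some modulus $\omega$. Since $|z-\gamma|\le 6\eps_1$, (i) gives $|\nfext{a}{b}{z}|\ge\nf{a}{b}{1-\eps_1}-\omega(6\eps_1)$, and by (ii) the right side exceeds $1$ for all sufficiently small $\eps_1$ (depending on $a,b$) — which is the claim.

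The one genuinely technical point — and the step I expect to be the main obstacle — is the continuity of $\nfext{a}{b}$ on $K$ up to the portion of $K$ lying on the cut $[1,\infty)$, together with the legitimacy of passing the limit $\eta\to 0^+$ inside the integral in the display above. I would establish both from the Euler representation $\intfull{z}=z\int_0^1(1-t)^{b/2}t^{-(1+b)/2}(1-z^2t)^{-(1-a)/2}\,dt$ by a uniform-integrability (generalized dominated convergence) argument: the integrand has singularities in $t$ only at $t=0$, with exponent $(1+b)/2<1$, and at $t=1/z^2$, with exponent $(1-a)/2<1$; the point $1/z^2$ stays in a fixed compact set as $z$ ranges over $K$, and splitting off small neighborhoods of $t=0$ and of $\Re(1/z^2)$ yields a bound $\int_A|\text{integrand}|\le C\,|A|^{\beta}$ with $\beta=\min\{(1-b)/2,(1+a)/2\}>0$, uniformly over $z\in K$, giving uniform integrability and hence $\intfull{z_n}\to\intfull{z_\infty}$ for $z_n\to z_\infty$ in $K$. (An alternative that avoids boundary values entirely is to bound $\Re\,\intfull{\gamma(1+i\eps_1)}$ directly for $\eps_1>0$: outside small neighborhoods of $t=1$ and $t=1/\gamma^2$ each integrand value is within $1+O(\eps_1)$ of the corresponding term of $\intfull{1}$, while the two neighborhoods contribute $O(\eps_1^{(1+a)/2})$ — concrete but longer.) Every other step above is a one-line estimate, so this uniform-continuity bookkeeping, rather than any new idea, carries the weight of the proof.
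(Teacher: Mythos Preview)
Your proposal is correct, and the route is genuinely different from the paper's. The paper never passes to boundary values on the cut or invokes compactness/uniform continuity; instead it works directly at the complex point $z=\gamma(1+i\eps_1)$. Concretely, after the substitution $s=\gamma^2 t$ (which you also use), the paper bounds $\Re\,\intfull{z}$ from below by truncating the $s$-integral at $1-\eps_2$ and proving the pointwise estimate
\[
\Re\!\inparen{\frac{1+i\eps_1}{(1-s(1+i\eps_1)^2)^{(1-a)/2}}}=\frac{1-O(\eps_2)}{(1-s)^{(1-a)/2}}\qquad(0\le s\le 1-\eps_2),
\]
together with $(1-s/\gamma^2)^{b/2}\ge(1-O(\eps_2))(1-s)^{b/2}$; this yields $|\intfull{z}|\ge(1-O(\sqrt{\eps_1}))\int_0^{1-\eps_2}\!(1-s)^{(a+b-1)/2}s^{-(1+b)/2}\,ds$, whose normalization tends to $\nf{a}{b}{1}>1$. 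In other words, the paper carries out exactly the ``alternative that avoids boundary values'' you sketch at the end of your proposal.

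Your approach trades those explicit $O(\sqrt{\eps_1})$ integrand estimates for two structural facts: monotonicity of $|\nfext{a}{b}{\cdot}|$ along the upper edge of the real line (your computation $\Re\,\intfull{\gamma+i0^+}\ge \intfull{1}$ is clean and is not in the paper), plus continuity of the Euler integral up to the cut. The cost is that this last point --- uniform integrability of the integrand as $z$ approaches $(1,6]$ --- becomes the entire technical burden; the paper sidesteps it by never letting $\eta\to 0$. Your sketch of that step (bounding $|1-z^2t|\ge |z|^2\,|t-\Re(1/z^2)|$ and integrating the resulting $|t-\alpha|^{-(1-a)/2}$ singularity) is correct and suffices, so the proposal goes through; but be aware that the resulting modulus of continuity $\omega$ and hence the threshold for $\eps_1$ depend on $a,b$ (through the margin $\nf{a}{b}{1}-1=(1-a)(1-b)/6+\cdots$), which is also the case in the paper's proof.
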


\begin{proof}
    Below the fold we will show 
    \begin{align}
        |\intfull{z}| 
        ~\geq~ 
        (1-O(\sqrt{\eps_1}))\int_{0}^{1-\eps_2} \frac{(1-s)^{b/2} \,ds}{s^{(1+b)/2}\cdot (1 - s)^{(1-a)/2}} & 
        \label{ineq:real:closeness} 
    \end{align}
    But we know ($\mathrm{LHS},~\mathrm{RHS}$ refer to \cref{ineq:real:closeness})
    \begin{align*}
        \betaterm^{-1}\cdot \mathrm{LHS} &~=~ \nfext{a}{b}{z} \text{~~~and ~~} \\
        \betaterm^{-1}\cdot \mathrm{RHS} ~&\rightarrow~ \nf{a}{b}{1} \text{  ~as~  } \eps_1\rightarrow 0 
    \end{align*}
    Also by \cref{monotonicity:properties} : (M1), (M2),~ $\nf{a}{b}{1} \geq 1+(1-a)(1-b)/6>1$. 
    Thus for $\eps_1$ sufficiently small, we must have $|\nfext{a}{b}{z}|>1$. \medskip

    We now show \cref{ineq:real:closeness}, by comparing integrands point-wise. To do this, we will 
    assume the following closeness estimate that we will prove below the fold: 
    \begin{equation}
    \label[equation]{eq:real:close:final}
        \Re\inparen{\frac{1+i\eps_1}{(1-s(1+i\eps_1)^2)^{(1-a)/2}}}
        = 
        \frac{1-O(\eps_2)}{(1-s)^{(1-a)/2}} \mper
    \end{equation}
    We will also need the following inequality. Since $\gamma\geq 1-\eps_1 = 1-\eps_2^2$, 
    for any $0\leq s \leq 1-\eps_2$, we have 
    \begin{equation}
    \label[equation]{eq:numerator}
        (1-s/\gamma^2)^{b/2} \geq (1-O(\eps_2))\cdot (1-s)^{b/2}.
    \end{equation}   
    Given, these estimates, we can complete the proof of \cref{ineq:real:closeness} as follows: 
    \begin{align*}
        &\quad~\Re(\intfull{z}) \\
        &= 
        \Re\inparen{
         z\int_{0}^{1} \frac{(1-t)^{b/2} \,dt}{t^{(1+b)/2}\cdot (1 - tz^2)^{(1-a)/2}} 
         } \\
         &= 
         \Re\inparen{
         \gamma^{b}(1+i\eps_1)
         \int_{0}^{\gamma^2} \frac{(1-s/\gamma^2)^{b/2} \,ds}{s^{(1+b)/2}\cdot (1-s(1+i\eps_1)^2)^{(1-a)/2}} 
         }
         &&(\text{subst. } s\leftarrow \gamma^2 t)  \\
         &\geq 
         \Re\inparen{
         \gamma^{b}(1+i\eps_1)
         \int_{0}^{1-\eps_2} \frac{(1-s/\gamma^2)^{b/2} \,ds}{s^{(1+b)/2}\cdot (1-s(1+i\eps_1)^2)^{(1-a)/2}} 
         }
         &&(\text{by \cref{reim:monotonicity}}) \\
         &= 
         \gamma^{b}
         \int_{0}^{1-\eps_2} 
         \Re\inparen{ \frac{1+i\eps_1}{(1-s(1+i\eps_1)^2)^{(1-a)/2}} }
         \frac{(1-s/\gamma^2)^{b/2} \,ds}{s^{(1+b)/2}} \\ 
         &\geq 
         (1-O(\eps_2))\cdot \gamma^b
         \int_{0}^{1-\eps_2} \frac{(1-s/\gamma^2)^{b/2} \,ds}{s^{(1+b)/2}\cdot (1 - s)^{(1-a)/2}} 
         &&(\text{by \cref{eq:real:close:final}}) \\
         &\geq 
         (1-O(\eps_2))
         \int_{0}^{1-\eps_2} \frac{(1-s)^{b/2} \,ds}{s^{(1+b)/2}\cdot (1 - s)^{(1-a)/2}} 
         &&(\text{by \cref{eq:numerator}, } \gamma\geq 1-\eps_1) 
    \end{align*}
    
    It remains to establish \cref{eq:real:close:final}, which we will do by considering the numerator 
    and reciprocal of the denominator separately and subsequently using the fact that
    $\Re(z_1 z_2) = \Re(z_1)\Re(z_2) - \Im(z_1)\Im(z_2)$. In doing this, we need to show that the 
    respective real parts are large and respective imaginary parts are small for which the following 
    simple facts will come in handy. 
    \begin{fact}
        \label[fact]{real:part:under:power}
        Let $z = re^{i\theta}$ be such that $\Re{z}\geq 0$ ~(i.e. $-\pi/2\leq \theta \leq \pi/2$). 
        Then for any $0\leq \alpha\leq 1$, 
        \[
            \Re(1/z^{\alpha}) = \cos(-\alpha\theta)/r^{\alpha} = \cos(\alpha\theta)/r^{\alpha} 
            \geq \cos(\theta)/r^{\alpha} = \Re(z)/r^{1+\alpha}
        \]
    \end{fact}

    \begin{fact}
    \label[fact]{imaginary:part:under:power}
        Let $z = re^{-i\theta}$ be such that $\Re{z}\geq 0, \Im{z}\leq 0$ ~(i.e. $0\leq \theta \leq \pi/2$). 
        Then for any $0\leq \alpha\leq 1$, 
        \[
            \Im(1/z^{\alpha}) = \sin(\alpha\theta)/r^{\alpha} 
            \leq \sin(\theta)/r^{\alpha} = -\Im(z)/r^{1+\alpha}
        \]
    \end{fact}
    We are now ready to prove the claimed properties of the reciprocal of the denominator from 
    \cref{eq:real:close:final}. For any $0\leq s\leq 1-\eps_2$ we have, 
    \begin{align}
    \label[equation]{eq:real:close}
        &\quad~\Re\inparen{\frac{1}{(1-s(1+i\eps_1)^2)^{(1-a)/2}}} \nonumber \\
        &=
        \Re\inparen{\frac{1}{(1-s+s\eps_1^2-2is\eps_1)^{(1-a)/2}}} \nonumber \\
        &=
        \frac{1}{(1-s)^{(1-a)/2}}\cdot 
        \Re\inparen{\frac{1}{(1+s\eps_1^2/(1-s)-2i\eps_1/(1-s))^{(1-a)/2}}} \nonumber \\
        &\geq 
        \frac{1}{(1-s)^{(1-a)/2} \cdot (1+O(\eps_1^2/\eps_2^2))^{(3-a)/4}} 
        &&\hspace{-20 pt}(\text{by \cref{real:part:under:power}, and } 1-s\geq \eps_2) \nonumber \\
        &=
        \frac{1-O(\eps_1^2/\eps_2^2)}{(1-s)^{(1-a)/2}} \nonumber \\
        &= 
        \frac{1-O(\eps_2)}{(1-s)^{(1-a)/2}} 
    \end{align}
    Similarly, 
    \begin{align}
    \label[equation]{eq:im:small}
        &\Im\inparen{\frac{1}{(1-s+s\eps_1^2-2is\eps_1)^{(1-a)/2}}} \nonumber \\
        &=
        \frac{1}{(1-s)^{(1-a)/2}}\cdot 
        \Im\inparen{\frac{1}{(1+s\eps_1^2/(1-s)-2i\eps_1/(1-s))^{(1-a)/2}}} \nonumber \\
        &\leq 
        \frac{2\eps_1}{(1-s)^{(1-a)/2}}
        &&(\text{by \cref{imaginary:part:under:power}})
    \end{align}
    Combining \cref{eq:real:close} and \cref{eq:im:small} with the fact that 
    $\Re(z_1 z_2) = \Re(z_1)\Re(z_2) - \Im(z_1)\Im(z_2)$ yields, 
    \[
        \Re\inparen{\frac{1+i\eps_1}{(1-s(1+i\eps_1)^2)^{(1-a)/2}}}
        = 
        \frac{1-O(\eps_2)}{(1-s)^{(1-a)/2}} \mper
    \]
    This completes the proof. 
\end{proof}

\subsubsection[Challenges of Proving (C1) and (C2) for all k]{Challenges of Proving (C1) and (C2) for all $k$}
\label[subsubsection]{coefficient:challenges}
For certain values of $a$ and $b$, the inequalities in (C1) and (C2) leave very little room for error. 
In particular, when $a=b=0$, (C1) holds at equality and (C2) has $1/k!$ additive slack. 
In this special case, it would mean that one cannot analyze the contour integral (for the 
$k$-th coefficient of $\nfin{a}{b}{\rho}$) by using ML-inequality on any section of the contour that is within 
a distance of $\mathrm{exp}(k)$ from the origin. Analytic approaches would require extremely precise 
estimates on the value of the contour integral on parts close to the origin. Other challenges to naive 
approaches come from the lack of monotonicity properties for $\ngc$ (both in $k$ and in $a,b$ - see 
\cref{behavior:of:coefficients})

\appendix
\section{Factorization of Linear Operators}
\label[section]{factorization}
Let $X,Y,E$ be Banach spaces and let $A:X\to Y$ be a continuous linear operator. We say that $A$ 
\emph{factorizes} through $E$ if there exist continuous operators $C:X\to E$ and $B:E\to Y$ such that 
$A=BC$. 
Factorization theory has been a major topic of study in functional analysis, going as far back as 
Grothendieck's famous ``Resume'' \cite{Grothendieck56}. It has many striking applications, like 
the isomorphic characterization of Hilbert spaces and $L_p$ spaces due to 
\Kwapien~\cite{Kwapien72a, Kwapien72b}, connections to type and cotype through the work of 
\Kwapien~\cite{Kwapien72a}, Rosenthal~\cite{Rosenthal73}, Maurey~\cite{Maurey74} and 
Pisier~\cite{Pisier80}, connections to Sidon sets through the work of Pisier~\cite{Pisier86}, 
characterization of weakly compact operators due to Davis \etal~\cite{DFJP74}, connections to the 
theory of $p$-summing operators through the work of Grothendieck~\cite{Grothendieck56}, 
Pietsch~\cite{Pietsch67} and Lindenstrauss and Pelczynski~\cite{LP68}. 

Let $\factorConst{A}$ denote 
\[
    \factorConst{A}:= \inf_{H} \inf_{BC=A} \frac{\norm{X}{H}{C}\cdot \norm{H}{Y}{B}}{\norm{X}{Y}{A}}
\]
where the infimum runs over all Hilbert spaces $H$. We say $A$ factorizes through a Hilbert space if 
$\factorConst{A}< \infty$.
Further, let 
\[
    \factorSpConst{X}{Y} := \sup_{A} ~\factorConst{A}
\]
where the supremum runs over continuous operators $A:X\to Y$.  
As a quick example of the power of factorization theorems, observe 
that if $\id:X\to X$ is the identity operator on a Banach space $X$ and $\factorConst{\id}<\infty$, 
then $X$ is isomorphic to a Hilbert space and moreover the distortion (Banach-Mazur distance) is at most 
$\factorConst{\id}$ (\ie there exists an invertible operator $T:X \to H$ for some Hilbert space $H$ such 
that $\norm{X}{H}{T}\cdot \norm{H}{X}{T^{-1}}\leq \factorConst{\id}$). In fact (as observed by Maurey), 
\Kwapien gave an isomorphic characterization of Hilbert spaces by proving a factorization theorem. 

In this section we will show that our approximation results imply improved bounds on 
$\factorSpConst{\ell_{p}^{n}}{\ell_{q}^{m}}$ for certain values of $p$ and $q$. 
Before doing so, we first summarize prior work which will require the definitions of type and cotype: 
\begin{definition}
    The Type-2 constant of a Banach space $X$, denoted by $T_2(X)$, is the smallest constant $C$ such that 
    for every finite sequence of vectors $\{x^i\}$ in $X$, 
    \[
        \inparen{\Ex{\norm{\sum_{i} \eps_i\cdot x^i}^2}}^{1/2}  \leq C\cdot \sqrt{\sum_{i} \norm{x^i}^{2}}
    \]
    where $\eps_i$ is an independent Rademacher random variable. We say $X$ is of Type-2 if 
    $T_2(X)<\infty$. 
\end{definition} 
\begin{definition}
    The Cotype-2 constant of a Banach space $X$, denoted by $C_2(X)$, is the smallest constant $C$ 
    such that for every finite sequence of vectors $\{x^i\}$ in $X$, 
    \[
        \inparen{\Ex{\norm{\sum_{i} \eps_i\cdot x^i}^2}}^{1/2}  \geq \frac{1}{C}\cdot \sqrt{\sum_{i} \norm{x^i}^{2}}
    \]
    where $\eps_i$ is an independent Rademacher random variable. We say $X$ is of Cotype-2 if 
    $C_2(X)<\infty$. 
\end{definition} 
\begin{remark}~
    \begin{itemize}
        \item It is known that $C_2(X^*)\leq T_2(X)$. 
        
        \item It is known that for $p \geq 2$,  we have $T_2(\ell_p^n) = \gamma_p$ (while $C_2(\ell_p^n) \to
            \infty$ ~as~ $n \to \infty$) and for $q \leq 2$,~ $C_2(\ell_q^n) = \max\{2^{1/q-1/2},1/\gamma_q\}$ 
            (while~ $T_2(\ell_q^n) \to \infty$ ~as~ $n \to \infty$). 
    \end{itemize}
\end{remark}
\noindent
We say $X$ is Type-2 (resp. Cotype-2) if $T_2(X)<\infty$ (resp. $C_2(X)<\infty$). 
$T_2(X)$ and $C_2(X)$ can be regarded as measures of the ``closeness'' of $X$ to a 
Hilbert space. Some notable manifestations of this correspondence are: 
\begin{itemize}
    \item $T_2(X)=C_2(X)=1$ if and only if $X$ is isometric to a Hilbert space. 
    
    \item {\Kwapien~\cite{Kwapien72a}}:~~$X$ is of Type-2 and Cotype-2 if and only if it is isomorphic to 
    a Hilbert space. 
    
    \item {Figiel, Lindenstrauss and Milman~\cite{FLM77}}:~~If $X$ is a Banach space of Cotype-2, then any 
    $n$-dimensional subspace of $X$ has an $m=\Omega(n)$-dimensional subspace with Banach-Mazur 
    distance at most $2$ from $\ell_{2}^{m}$.  
\end{itemize}
Maurey observed that a more general factorization result underlies \Kwapien's 
work: 
\begin{theorem}[\Kwapien-Maurey]
    Let $X$ be a Banach space of Type-2 and $Y$ be a Banach space of Cotype-2. Then any 
    operator $T:X\to Y$ factorizes through a Hilbert space. Moreover $\factorSpConst{X}{Y}\leq 
    T_2(X)C_2(Y)$. 
\end{theorem}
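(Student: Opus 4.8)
The plan is the classical one, essentially Maurey's: reduce the factorization bound to producing a single Hilbertian seminorm on $X$ squeezed between $\norm{X}{Y}{T}$ applied to vectors and $\kappa\,\norm{X}{\cdot}$, construct that seminorm by Hahn--Banach separation, and verify the resulting finite inequality using the type-$2$ and cotype-$2$ hypotheses. First, normalize $\norm{X}{Y}{T}=1$ and put $\kappa := T_2(X)\,C_2(Y)\ge 1$. It suffices to find a Hilbertian seminorm $N$ on $X$ (\ie $N(x)=\norm{2}{Jx}$ for some linear $J$ into a Hilbert space) with $\norm{Y}{Tx}\le N(x)\le\kappa\,\norm{X}{x}$ for all $x$: taking $H$ to be the completion of $X/\ker N$ under the inner product inducing $N$, the canonical map $C\colon X\to H$ has $\norm{X}{H}{C}\le\kappa$, and $B(Cx):=Tx$ is well defined (since $N(x)=0$ forces $\norm{Y}{Tx}=0$), bounded by $1$, and extends to $\overline{C(X)}\subseteq H$; then $T=BC$ and $\factorConst{T}\le\kappa$, so $\factorSpConst{X}{Y}\le\kappa$.

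To build $N$ I would invoke the standard sandwich lemma for Hilbertian seminorms, proved by separating, in the space of quadratic forms on $X$, the convex set of Hilbertian forms bounded by $\kappa^2\norm{X}{\cdot}^2$ from the convex set of forms dominating $x\mapsto\norm{Y}{Tx}^2$; a separating functional is a finitely-supported signed combination of point evaluations, which unwinds to a violation of the target inequality. This reduces the existence of $N$ to the following family of finite inequalities: whenever $x_1,\dots,x_n$ and $y_1,\dots,y_m$ in $X$ satisfy $\sum_i \mysmalldot{\xi}{x_i}^2\le\sum_j\mysmalldot{\xi}{y_j}^2$ for every $\xi\in X^*$, then $\sum_i\norm{Y}{Tx_i}^2\le\kappa^2\sum_j\norm{X}{y_j}^2$. (One may first pass to finite-dimensional subspaces of $X$ and $Y$ to make the compactness and separation clean; this is harmless for a constant bound.)

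To prove the finite inequality I would pass through Gaussian sums. Let $g,h$ be independent standard Gaussian vectors; the $X$-valued Gaussians $G_x:=\sum_i g_ix_i$ and $G_y:=\sum_j h_jy_j$ have the quadratic forms $\sum_i x_i\otimes x_i$ and $\sum_j y_j\otimes y_j$ as covariances, so the hypothesis says $\operatorname{Cov}(G_x)\preceq\operatorname{Cov}(G_y)$; realizing $\operatorname{Cov}(G_y)-\operatorname{Cov}(G_x)$ as the covariance of an independent Gaussian gives $G_y\stackrel{d}{=}G_x+G''$, hence $\E\,\norm{X}{G_x}^2\le\E\,\norm{X}{G_y}^2$ (and likewise after applying $T$). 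Then chain: by cotype $2$ of $Y$, $\sum_i\norm{Y}{Tx_i}^2\lesssim C_2(Y)^2\,\E\,\norm{Y}{T G_x}^2$; by $\norm{X}{Y}{T}\le1$ this is $\le C_2(Y)^2\,\E\,\norm{X}{G_x}^2$; by covariance domination this is $\le C_2(Y)^2\,\E\,\norm{X}{G_y}^2$; and by type $2$ of $X$ this is $\lesssim C_2(Y)^2 T_2(X)^2\sum_j\norm{X}{y_j}^2$. Composing yields $\sum_i\norm{Y}{Tx_i}^2\lesssim\kappa^2\sum_j\norm{X}{y_j}^2$, as required.

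The main obstacle is extracting the \emph{exact} constant $T_2(X)C_2(Y)$ from the last step: the type-$2$ and cotype-$2$ constants are defined via Rademacher averages, whereas the covariance-domination argument is inherently Gaussian, and these do not convert for free (\eg the $\ell_2$-valued norm is not orthogonally invariant in a general Banach space, so one cannot simply rotate the Gaussian coordinates). To close it cleanly one works with the Gaussian type/cotype constants throughout and argues that in the relevant product they match the Rademacher ones for the purpose at hand, or replaces the Gaussian bridge by the sharper estimate that keeps the constant tight. This constant bookkeeping, together with the correct setup of the separation argument, is where the real care is needed.
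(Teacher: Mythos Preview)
The paper does not actually prove this theorem; it is quoted as a classical background result due to \Kwapien and Maurey (with references to \cite{Kwapien72a,Maurey74}), so there is no ``paper's own proof'' to compare against.

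Your outline is the standard one from the literature: Hahn--Banach separation in the cone of Hilbertian quadratic forms, reduction to a finite covariance-domination inequality, and then a type/cotype chain. The reduction you describe is correct, and you are right that the delicate point is extracting the \emph{exact} constant $T_2(X)\,C_2(Y)$. The covariance-domination step $\sum_i\ip{\xi,x_i}^2\le\sum_j\ip{\xi,y_j}^2$ is equivalent to writing $x_i=\sum_j a_{ij}y_j$ with $\|A\|_{2\to2}\le1$, and your Gaussian bridge works precisely because an orthogonal change of coordinates preserves the law of $(g_i)$; this fails for $(\eps_i)$, so the chain as written yields the \emph{Gaussian} type/cotype product rather than the Rademacher one. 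Your flagging of this as the main obstacle is accurate: the naive Rademacher chain
\[
\sum_i\|Tx_i\|_Y^2 \le C_2(Y)^2\,\E_\eps\Bigl\|\sum_j (A^T\eps)_j\,y_j\Bigr\|_X^2
\]
does not obviously continue to $T_2(X)^2\sum_j\|y_j\|_X^2$, since $\E_\eps\|\sum_j (A^T\eps)_j y_j\|^2$ can exceed $\E_\eps\|\sum_j \eps_j y_j\|^2$ even when $\|A\|\le1$. Closing this gap requires an additional argument beyond your sketch; since the paper only cites the result, that bookkeeping lies outside its scope.
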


Surprisingly Grothendieck's work which predates the work of \Kwapien and Maurey, established that 
$\factorSpConst{\ell_\infty^n}{\ell_1^m}\leq K_G$ for all $m,n\in \N$, which is not implied by the above 
theorem since $T_2(\ell_\infty^n) \to \infty$ as $n\to \infty$. Pisier~\cite{Pisier80} unified the above 
results for the case of approximable operators by proving the following: 
\begin{theorem}[Pisier]
    Let $X,Y$ be Banach spaces such that $X^*, Y$ are of Cotype-2. Then any approximable 
    operator $T:X\to Y$ factorizes through a Hilbert space. Moreover \\$\factorConst{T}\leq 
    (2\,C_2(X^*)C_2(Y))^{3/2}$. 
\end{theorem}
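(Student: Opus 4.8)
The plan is to reduce to finite-dimensional $X,Y$ and then to combine two one-sided consequences of cotype $2$---both instances of Maurey's theorem \cite{Maurey74}---through an interpolation argument that costs only the exponent $3/2$. For an operator $u$ write $\gamma_2(u)$ for the infimum of $\Norm{C}\,\Norm{B}$ over all Hilbertian factorizations $u=BC$, so that $\factorConst{u}=\gamma_2(u)/\Norm{u}$. Since $T$ is approximable it is an operator-norm limit of finite-rank maps $T_k:X\to Y$, and each $T_k$ factors as $X\twoheadrightarrow E_k\xrightarrow{T_k^0}F_k\hookrightarrow Y$ with $E_k$ a finite-dimensional quotient of $X$ and $F_k$ a finite-dimensional subspace of $Y$; hence $C_2(E_k^*)\le C_2(X^*)$, $C_2(F_k)\le C_2(Y)$, $\Norm{T_k^0}\le\Norm{T_k}$, and $\gamma_2(T_k)\le\gamma_2(T_k^0)$. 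As $\gamma_2$ is lower semicontinuous along this operator-norm limit, it suffices to prove $\gamma_2(S)\le\bigl(2\,C_2(W^*)\,C_2(Z)\bigr)^{3/2}\Norm{S}$ for $S:W\to Z$ between finite-dimensional spaces; normalize $\Norm{S}=1$.

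\textbf{The two cotype-$2$ inputs.} Since $Z$ has cotype $2$, Maurey's theorem \cite{Maurey74} gives that every operator $u$ from an $\ell_\infty^N$-space into $Z$ is $2$-summing with $\pi_2(u)\le\sqrt 2\,C_2(Z)\,\Norm{u}$, and by the Pietsch factorization theorem \cite{Pietsch67} such a $u$ then factors through a Hilbert space with $\gamma_2(u)\le\pi_2(u)$. Dually, since $W^*$ has cotype $2$, applying the same statement to adjoints shows that every $v:W\to\ell_1^M$ has $v^*$ $2$-summing, so $v$ itself factors through a Hilbert space with $\gamma_2(v)=\gamma_2(v^*)\le\sqrt 2\,C_2(W^*)\,\Norm{v}$. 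Fixing an isometric embedding $W\hookrightarrow\ell_\infty^N$ and a metric surjection $\ell_1^M\twoheadrightarrow Z$ places $S$ between an $\ell_\infty$-space and an $\ell_1$-space, so each side can be Hilbertianized by one of these two inputs; note that the product of the two per-side constants is $\sqrt 2\,C_2(W^*)\cdot\sqrt 2\,C_2(Z)=2\,C_2(W^*)\,C_2(Z)$.

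\textbf{Merging the two factorizations.} The remaining---and decisive---step is to glue the two one-sided Hilbertian factorizations into a single one while losing only a power $3/2$ of $2\,C_2(W^*)\,C_2(Z)$ rather than the square that a naive composition would produce. The mechanism is a self-improvement (interpolation) argument: feeding the Maurey--Pietsch factorizations of suitable rank-restricted compressions of $S$ back into the definition of $\gamma_2$ and averaging yields an inequality of the shape
\[
    \gamma_2(S)~\le~2\,C_2(W^*)\,C_2(Z)\cdot\gamma_2(S)^{1/3}\,\Norm{S}^{2/3}\mcom
\]
which rearranges to $\gamma_2(S)\le\bigl(2\,C_2(W^*)\,C_2(Z)\bigr)^{3/2}$. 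I expect this to be the main obstacle: it is exactly where finite-dimensionality (hence the approximation property) is essential, where the two a priori different Hilbert spaces produced on the two sides must be reconciled through a common one, and where the constants in Maurey's theorem must be tracked carefully to land precisely the factor $2$ inside the $3/2$-power. This scheme is carried out in Pisier's original argument \cite{Pisier80}.
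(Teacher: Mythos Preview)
The paper does not prove this theorem; it is quoted as background from Pisier's work \cite{Pisier80}, so there is no ``paper's own proof'' to compare against.

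That said, your sketch is not a proof. The first two paragraphs correctly assemble the standard ingredients (reduction to finite dimensions via approximability; Maurey's theorem that cotype~$2$ on $Z$ forces every $\ell_\infty^N\to Z$ to be $2$-summing, and the dual statement for $W^*$). But the third paragraph, which you yourself flag as ``the remaining---and decisive---step,'' is not carried out: you simply assert an inequality of the shape $\gamma_2(S)\le 2\,C_2(W^*)C_2(Z)\cdot\gamma_2(S)^{1/3}\Norm{S}^{2/3}$ and then cite \cite{Pisier80} for it. Since \cite{Pisier80} is precisely the paper whose theorem you are asked to prove, this is circular. Moreover, the specific self-improvement inequality you wrote down is not the one that actually appears in Pisier's argument; the genuine mechanism passes through a duality between $\gamma_2$ and its trace-dual norm $\gamma_2^*$, together with the fact that $\gamma_2^*$-bounded operators on a cotype-$2$ space are $2$-summing, and iterating this yields the exponent $3/2$. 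Without making that step explicit, what you have is an outline of where the difficulty lies, not a proof.
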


In the next section we show that for any $p^*,q\in [1,2]$, any $m,n\in \N$~  
\[
    \factorSpConst{\ell_{p}^{n}}{\ell_{q}^{m}} 
    \leq 
    \frac{1+\eps_0}{\sinh^{-1}(1)}\cdot C_2(\ell_{p^*}^{n})\cdot C_2(\ell_{q}^{m})
\] 
which improves upon Pisier's bound and for certain ranges of $(p,q)$, improves upon $K_G$ as well as the 
bound of \Kwapien-Maurey.

\subsection{Improved Factorization Bounds}
\label[section]{factorization:improvement}
In this section we will show that our approximation results imply improved bounds on 
$\factorSpConst{\ell_{p}^{n}}{\ell_{q}^{m}}$ for certain values of $p$ and $q$. 
To do so, we first require some preliminaries. We will give an exposition here of how upper bounds 
on integrality gaps yield factorization bounds for a very general class of normed spaces. 

\subsubsection{$p$-convexity and $q$-concavity}
\label{pconvexity}
The notions of $p$-convexity and $q$-concavity are well defined for a wide class of normed spaces known as Banach lattices. 
In this document we only define these notions for finite dimensional norms that are $1$-unconditional in the elementary basis 
(\ie those norms $X$ for which flipping the sign of an entry of $x$ does not change the norm. We shall refer to such norms 
as \emph{sign-invariant norms}). Most of the statements we make 
in this context can be readily extended to the case of norms admitting some $1$-unconditional basis, but we choose to fix the 
elementary basis in the interest of clarity. With respect to the goals of this document, we believe most of the key insights are 
already manifest in the elementary basis case. 
\medskip

\begin{definition}[$p$-convexity/$q$-concavity]
    Let $X$ be a sign-invariant norm over $\R^{n}$. Then for $1\leq p \leq \infty$ the $p$-convexity constant of $X$, 
    denoted by $M^{(p)}(X)$, is the smallest constant $C$ such that for 
    every finite sequence of vectors $\{x^i\}$ in $X$, 
    \[
        \left\| \left[\sum_{i} |[x^i]|^{p}\right]^{1/p}\right\| \leq~C\cdot \inparen{\sum_{i} \norm{x^i}^{p}}^{1/p}
    \]
    $X$ is said to be $p$-convex if $M^{(p)}(X)<\infty$. 
    We will say $X$ is exactly $p$-convex if $M^{(p)}(X) = 1$. 
    
    \medskip 
    
    \noindent
    For $1\leq q \leq \infty$, the $q$-concavity constant of $X$, 
    denoted by $M_{(q)}(X)$, is the smallest constant $C$ such that for every finite sequence of vectors $\{x^i\}$ in $X$, 
    \[
        \left\| \left[\sum_{i} |[x^i]|^{q}\right]^{1/q}\right\| \geq~\frac{1}{C}\cdot \inparen{\sum_{i} \norm{x^i}^{q}}^{1/q}.
    \]
    $X$ is said to be $q$-concave if $M_{(q)}(X)<\infty$. \\
    We will say $X$ is exactly $q$-concave if $M_{(q)}(X) = 1$. 
\end{definition}
\noindent
Every sign-invariant norm is exactly $1$-convex and $\infty$-concave. 
\medskip

\noindent
For a sign-invariant norm $X$ over $\R^n$, and any $0<p<\infty$ let $X^{(p)}$ denote the function $\norm{|[x]|^{p}}_{X}^{1/p}$. 
$X^{(p)}$ is referred to as the $p$-convexification of $X$. 
It is easily verified that $M^{(p)}(X^{(p)}) = M^{(1)}(X)$ and further that  
$X^{(p)}$ is an exactly $p$-convex sign-invariant norm if and only if $X$ is a sign-invariant norm (and therefore exactly $1$-convex).

\subsubsection{Convex Relaxation for Operator Norm} 
In this section we will see that there is a natural convex relaxation for a wide class of operator norms. 
It is instructive to first consider the pertinent relaxation for Grothendieck's inequality. 
Recall the bilinear formulation of the problem wherein given an $m\times n$ matrix $A$, the goal is to 
maximize $y^TA\,x$ over $\norm{\infty}{y}, \norm{\infty}{x} \leq 1$. One then considers
the following semidefinite programming relaxation: 
\begin{align*}
    \mbox{maximize} \quad&~~\sum_{i,j} A_{i,j}\cdot \mysmalldot{u^i}{v^j} \quad  \text{s.t.} \\
\mbox{subject to}    \quad&~~\norm{2}{u^i} \leq 1, \norm{2}{v^j} \leq 1 & \forall i\in [m], j\in [n]\\
    &~~ u^i, v^j\in \R^{m+n} & \forall i\in [m], j\in [n]
\end{align*}
which is equivalent to 
\begin{align*}
    \textbf{maximize} \quad
    &\frac{1}{2}\cdot 
    \mydot{
    \left[
    \begin{array}{cc}
    0 & A \\
    A^T & 0
    \end{array}
    \right]   
    }
    {
    \left[
    \begin{array}{cc}
    \bbY & \bbW \\
    \bbW^T & \bbX
    \end{array}
    \right]   
    } 
    \quad 
    \text{s.t.} \\[0.5em]
    &
    \bbX_{i,i}\leq 1 ,\quad
    \bbY_{j,j}\leq 1 \\[0.5em]
    &
    \left[
    \begin{array}{cc}
    \bbY & \bbW \\
    \bbW^T & \bbX
    \end{array}
    \right]   
    \succeq 0 ,\quad 
    \bbY\in \Sym^{m\times m},~\bbX\in \Sym^{n\times n},~\bbW\in \R^{m\times n}
\end{align*}
where $\Sym^{m\times m}$ is the set of $m\times m$ symmetric positive semidefinite matrices in $\R^{m\times m}$. 
\bigskip 

\noindent
Nesterov~\cite{Nesterov98, NWY00}\footnote{
Nesterov uses the language of quadratic programming and appears not to have noticed the connections 
to Banach space theory. In fact, it appears that Nesterov even gave yet another proof of an $O(1)$ upper bound 
on Grothendieck's constant. 
} and independently Naor and Schechtman\footnote{personal communication} observed that 
if $X$ and $Y^*$ are exactly $2$-convex, then there is a natural computable convex relaxation for the bilinear 
formulation of $X\to Y$ operator norm. Recall the goal is to maximize $y^TA\,x$ over 
$\norm{Y^*}{y}, \norm{X}{x} \leq 1$.
The relaxation which we will call $\CP{A}$  is as follows: 
\begin{align*}
\label{2convex:relaxation}
    \textbf{maximize} \quad
    &\frac{1}{2}\cdot 
    \mydot{
    \left[
    \begin{array}{cc}
    0 & A \\
    A^T & 0
    \end{array}
    \right]   
    }
    {
    \left[
    \begin{array}{cc}
    \bbY & \bbW \\
    \bbW^T & \bbX
    \end{array}
    \right]   
    } 
    \quad 
    \text{s.t.} \\[0.5em]
    &
    \mathrm{diag}(\bbX)\in \Ball{X^{(1/2)}} ,\quad
    \mathrm{diag}(\bbY)\in \Ball{{Y^*}^{(1/2)}} \\[0.5em]
    &
    \left[
    \begin{array}{cc}
    \bbY & \bbW \\
    \bbW^T & \bbX
    \end{array}
    \right]   
    \succeq 0 ,\quad 
    \bbY\in \Sym^{m\times m},~\bbX\in \Sym^{n\times n},~\bbW\in \R^{m\times n}
\end{align*}
For a vector $s$, let $D_s$ denote the diagonal matrix with $s$ as diagonal entries. 
Let $\overline{X} \defeq (X^{(1/2)})^*,~\overline{Y} \defeq ({Y^*}^{(1/2)})^*$. 
We can then define the dual program $\CPD{A}$ as follows: 
\begin{align*}
    & \textbf{minimize} \quad 
    (\norm{\overline{Y}}{s}+\norm{\overline{X}}{t})/2 \quad \text{s.t.}\\
    &\left[
    \begin{array}{cc}
    D_s & -A \\
    -A^T & D_t
    \end{array}
    \right]   
    \succeq 0 ,\quad 
    s\in \R^{m},~t\in \R^{n}\mper
\end{align*}
Strong duality is satisfied, i.e. $\CPD{A}=\CP{A}$, and a proof can be found in 
\cite{NWY00} (see Lemma 13.2.2 and Theorem 13.2.3).

\subsubsection{Integrality Gap Implies Factorization Upper Bound}
Known upper bounds on $\factorSpConst{X}{Y}$ involve Hahn-Banach separation arguments. 
In this section we see that for a special class of Banach spaces admitting a convex programming relaxation, 
$\factorSpConst{X}{Y}$ is bounded by the integrality gap of the relaxation as an immediate consequence 
of Convex programming duality (which of course uses a separation argument under the hood). 
A very similar observation has already been made by Tropp~\cite{Tropp09} in the special case of 
$X=\ell_{\infty}^{n}, Y=\ell_{1}^{m}$ with a slightly different convex program. 

For norms $X$ over $\R^n$, $Y$ over $\R^m$ and an operator $A:X\to Y$, we define 
\[
    \threeFactorConst{A} := \inf_{D_1BD_2 = A} 
    \frac{\norm{X}{2}{D_2}\cdot\norm{2}{2}{B}\cdot\norm{2}{Y}{D_1}}{\norm{X}{Y}{A}} \qquad 
    \threeFactorSpConst{X}{Y} := \sup_{A:X\to Y} \threeFactorConst{A}
\]
where the infimum runs over diagonal matrices $D_1,D_2$ and $B\in\R^{m\times n}$. 
Clearly, $\factorConst{A}\leq \threeFactorConst{A}$ and therefore $\factorSpConst{X}{Y} \leq 
\threeFactorSpConst{X}{Y}$. 

Henceforth let $X$ be exactly an exactly $2$-convex norm over $\R^n$ and $Y^*$ be an exactly 
$2$-convex norm over $\R^{m}$ (\ie $Y$ is exactly $2$-concave).
As was the approach of Grothendieck, we give an upper bound on 
$\factorSpConst{X}{Y}$ by giving an upper bound on 
$\threeFactorSpConst{X}{Y}$, which we do by showing 
\begin{lemma}
    Let $X$ be an exactly $2$-convex (sign-invariant) norm over $\R^n$ and $Y^*$ be an exactly 
    $2$-convex (sign-invariant) norm over $\R^{m}$. Then for any $A:X\to Y$, ~~
    $\threeFactorConst{A}~\leq ~\CPD{A}/\norm{X}{Y}{A}$.
\end{lemma}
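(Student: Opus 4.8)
The plan is to read a Hilbert-space factorization of $A$ directly off a (near-)optimal solution of the dual program $\CPD A$, and then to recognize the two diagonal factors' operator norms as precisely the two terms in the dual objective.

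Fix any feasible $(s,t)$ for $\CPD A$, so that $\left[\begin{smallmatrix}D_s & -A\\ -A^T & D_t\end{smallmatrix}\right]\succeq 0$; in particular, since diagonal entries of a positive semidefinite matrix are nonnegative, $s\geq 0$ and $t\geq 0$ entrywise. By the standard factorization of a positive semidefinite block matrix (a matrix $\left[\begin{smallmatrix}P & Q\\ Q^T & R\end{smallmatrix}\right]\succeq 0$ with $P,R\succeq 0$ satisfies $Q=P^{1/2}KR^{1/2}$ for some $K$ with $\norm{2}{2}{K}\le 1$), we may write $-A = D_{\sqrt s}\,K\,D_{\sqrt t}$ with $\norm{2}{2}{K}\le 1$, where $\sqrt s,\sqrt t$ denote entrywise square roots. (If some coordinate of $s$ or $t$ vanishes, positive semidefiniteness forces the corresponding row or column of $A$ to vanish, so one may restrict to the support, or perturb $D_s\mapsto D_s+\varepsilon I$, $D_t\mapsto D_t+\varepsilon I$ and let $\varepsilon\to 0$.) Setting $D_1:=D_{\sqrt s}$, $D_2:=D_{\sqrt t}$, and $B:=-K$ gives a valid factorization $A=D_1BD_2$ with $\norm{2}{2}{B}\le 1$, whence $\threeFactorConst A\le \norm{X}{2}{D_2}\cdot\norm{2}{Y}{D_1}/\norm{X}{Y}{A}$ by definition of the infimum.

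The heart of the argument is to identify the two remaining factor norms: I claim $\norm{X}{2}{D_2}^2=\norm{\overline X}{t}$ and $\norm{2}{Y}{D_1}^2=\norm{\overline Y}{s}$. For the first, $\norm{X}{2}{D_{\sqrt t}}^2=\sup_{\norm{X}{x}\le 1}\sum_i t_i x_i^2$; since $t\ge 0$ and $X$ is sign-invariant, the optimum is attained at $x\ge 0$, and substituting $z_i=x_i^2$ (which gives $\norm{X^{(1/2)}}{z}=\norm{X}{x}^2$, because $X^{(1/2)}(z)=\norm{|z|^{1/2}}{X}^2$ and $X^{(1/2)}$ is sign-invariant) turns this into $\sup\{\ip{t,z}:\norm{X^{(1/2)}}{z}\le1\}=\norm{(X^{(1/2)})^*}{t}=\norm{\overline X}{t}$, the last sup again attained at $z\ge0$. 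For the second, dualizing in $Y$ (using $Y^{**}=Y$ in finite dimensions): $\norm{2}{Y}{D_{\sqrt s}}=\sup_{\norm{2}{v}\le1}\norm{Y}{(\sqrt{s_i}v_i)_i}=\sup_{\norm{2}{v}\le1}\sup_{\norm{Y^*}{w}\le1}\ip{(\sqrt{s_i}w_i)_i,\,v}=\sup_{\norm{Y^*}{w}\le1}\bigl(\sum_i s_i w_i^2\bigr)^{1/2}$, and the identical convexification manipulation applied to $Y^*$ identifies $\sup_{\norm{Y^*}{w}\le 1}\sum_i s_i w_i^2$ with $\norm{({Y^*}^{(1/2)})^*}{s}=\norm{\overline Y}{s}$.

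Combining these with the AM--GM inequality,
\[
    \threeFactorConst A \;\le\; \frac{\norm{X}{2}{D_2}\,\norm{2}{2}{B}\,\norm{2}{Y}{D_1}}{\norm{X}{Y}{A}}
    \;\le\; \frac{\sqrt{\norm{\overline X}{t}\,\norm{\overline Y}{s}}}{\norm{X}{Y}{A}}
    \;\le\; \frac{\tfrac12\bigl(\norm{\overline X}{t}+\norm{\overline Y}{s}\bigr)}{\norm{X}{Y}{A}}\,,
\]
and taking $(s,t)$ to be optimal (or, if the infimum is not attained, $\varepsilon$-optimal) for $\CPD A$ yields $\threeFactorConst A\le \CPD A/\norm{X}{Y}{A}$. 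In fact no slack is lost in the AM--GM step: the geometric mean is invariant under the rescaling $(s,t)\mapsto(\lambda s,\lambda^{-1}t)$, which preserves dual feasibility (it conjugates the block matrix by $\diag(\sqrt\lambda\,I,\ \lambda^{-1/2}I)$), and $\tfrac12(\lambda\norm{\overline Y}{s}+\lambda^{-1}\norm{\overline X}{t})$ is minimized at the balancing value of $\lambda$. The only genuinely delicate points are the rank-deficient case in the block-PSD factorization and verifying that the ``square the variable and pass to the dual norm'' step really lands on $\overline X=(X^{(1/2)})^*$ and $\overline Y=({Y^*}^{(1/2)})^*$; both become routine once sign-invariance is used to reduce every supremum to the nonnegative orthant.
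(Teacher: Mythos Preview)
Your argument is correct and follows essentially the same route as the paper: extract the diagonal factors $D_{\sqrt s},D_{\sqrt t}$ from a dual-feasible $(s,t)$, use the block-PSD constraint to force the middle factor to have $\norm{2}{2}{B}\le 1$, identify $\norm{X}{2}{D_{\sqrt t}}^2=\norm{\overline X}{t}$ and $\norm{2}{Y}{D_{\sqrt s}}^2=\norm{\overline Y}{s}$ via the $p$-convexification substitution, and finish with AM--GM. The only cosmetic difference is that the paper writes $B=(D_s^{1/2})^{\dagger}A(D_t^{1/2})^{\dagger}$ explicitly and conjugates the block matrix by the pseudo-inverses to read off $\norm{2}{2}{B}\le 1$, whereas you invoke the standard factorization of a PSD block matrix; these are the same computation.
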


\begin{proof}
    Consider an optimal solution to $\CPD{A}$. We will show 
    \[
        \inf_{D_1BD_2 = A} \norm{X}{2}{D_2}\cdot\norm{2}{2}{B}\cdot\norm{2}{Y}{D_1}
        \leq ~
        \CPD{A}
    \]
    by taking $D_1 := D_{s}^{1/2}$, ~$D_2 := D_{t}^{1/2}$ and 
    $B := \inparen{D_{s}^{1/2}}^{\dagger} A \inparen{D_{t}^{1/2}}^{\dagger}$ 
    (where for a diagonal matrix $D$, $D^{\dagger}$ only inverts the non-zero diagonal entries and 
    zero-entries remain the same). 
    Note that $s_i = 0$ (resp. $t_i=0$) implies the $i$-th row (resp. $i$-th column) of $A$ is all zeroes, since 
    otherwise one can find a $2\times 2$ principal submatrix (of the block matrix in the relaxation) 
    that is not PSD. This implies that $D_1 B D_2 = A$.  
    
    It remains to show that 
    $\norm{X}{2}{D_2}\cdot\norm{2}{2}{B}\cdot\norm{2}{Y}{D_1}\leq ~\CPD{A}$. 
    Now we have, 
    \[
        \norm{X}{2}{D_{t}^{1/2}}
        =
        \sup_{x\in \Ball{X}} \norm{2}{D_{t}^{1/2} x} 
        = 
        \sup_{x\in \Ball{X}} \sqrt{\mysmalldot{t}{[x]^{2}}}  
        =  
        \sup_{\widetilde{x}\in \Ball{X^{(1/2)}}} \sqrt{|\mysmalldot{t}{\widetilde{x}}|}  
        =
        \sqrt{\norm{X^{(1/2)}}{t}} \mper
    \]
    Similarly, since~ $\norm{2}{Y}{D_1} = \norm{Y^*}{2}{D_1}$~ we have 
    \[
        \norm{Y^*}{2}{D_{1}} 
        \leq  
        \sqrt{\norm{{Y^*}^{(1/2)}}{s}} \mper
    \]
    Thus it suffices to show $\norm{2}{2}{B}\leq 1$ since 
    \[
        \norm{X}{2}{D_2}\cdot\norm{2}{Y}{D_1}
        \leq 
        \sqrt{\norm{{X}^{(1/2)}}{t}\cdot \norm{{Y^*}^{(1/2)}}{s}} 
        \leq 
        (\norm{{Y^*}^{(1/2)}}{s}+\norm{{X}^{(1/2)}}{t})/2
        =
        \CPD{A} \mper
    \]
    We have, 
    \begin{align*}
        &\left[
        \begin{array}{cc}
        D_s & -A \\
        -A^T & D_t
        \end{array}
        \right]   
        \succeq 
        0 \\
        \Rightarrow
        &\left[
        \begin{array}{cc}
        \inparen{D_{s}^{1/2}}^{\dagger} & 0 \\
        0 & \inparen{D_{t}^{1/2}}^{\dagger}
        \end{array}
        \right]
        \left[
        \begin{array}{cc}
        D_s & -A \\
        -A^T & D_t
        \end{array}
        \right]
        \left[
        \begin{array}{cc}
        \inparen{D_{s}^{1/2}}^{\dagger} & 0 \\
        0 & \inparen{D_{t}^{1/2}}^{\dagger}
        \end{array}
        \right]   
        \succeq 
        0 \\
        \Rightarrow 
        &\left[
        \begin{array}{cc}
        D_{\bars} & -B \\
        -B^T & D_{\bart}
        \end{array}
        \right]
        \succeq 
        0 \qquad
        \text{for some ~} \bars\in \{0,1\}^{m},~ \bart \in \{0,1\}^{n} \\
        \Rightarrow 
        &\left[
        \begin{array}{cc}
        \id & -B \\
        -B^T & \id
        \end{array}
        \right]
        \succeq 
        0 \\
        \Rightarrow 
        &~\,\norm{2}{2}{B}
        \leq 
        1 \qedforce
    \end{align*}
\let\qed\relax
\end{proof}

\subsubsection{Improved Factorization Bounds for Certain $\ell_{p}^{n},\ell_{q}^{m}$}
Let $1\leq q \leq 2 \leq p\leq \infty$. Then taking $\sqnormX$ to be the $\ell_{p/2}^{n}$ unit ball and 
$\sqnormY$ to be the $\ell_{q^*/2}^{m}$ unit ball, we have $\sqrtnormX$ and $\sqrtnormY$ are 
respectively the unit balls in $\ell_{p}^{n}$ and $\ell_{q^*}^{m}$. Therefore $X$ and $Y$ as defined above 
are the spaces $\ell_{p}^{n}$ and $\ell_{q}^{m}$ respectively. Hence we obtain 
\begin{theorem}[$\ell_{p}^{n}\to \ell_{q}^{m}$ factorization]
\label[theorem]{p:q:factorization}
    If $1\leq q \leq 2\leq p\leq \infty$, then for any $m,n\in \N$ and $\eps_0= 0.00863$, 
    \[
        \factorSpConst{\ell_{p}^{n}}{\ell_{q}^{m}}
        ~\leq ~
        \frac{1+\eps_0}{\sinh^{-1}(1)\cdot \gamma_{p^*}\,\gamma_{q}}
        ~\leq ~
        \frac{1+\eps_0}{\sinh^{-1}(1)}\cdot C_2(\ell_{p^*}^{n})\cdot C_2(\ell_{q}^{m}).
    \]
\end{theorem}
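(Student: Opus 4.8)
The plan is to derive this from the two ingredients already assembled in the excerpt. On the one hand, the lemma of \cref{factorization:improvement} shows that whenever $X$ is a norm over $\R^n$ that is exactly $2$-convex and $Y$ is a norm over $\R^m$ with $Y^*$ exactly $2$-convex, every operator $A\colon X\to Y$ satisfies $\threeFactorConst{A}\le \CPD{A}/\norm{X}{Y}{A}$, and hence $\factorSpConst{X}{Y}\le\threeFactorSpConst{X}{Y}\le\sup_A \CPD{A}/\norm{X}{Y}{A}$. On the other hand, strong duality gives $\CPD{A}=\CP{A}$, so the quantity $\CPD{A}/\norm{X}{Y}{A}$ is precisely the approximation ratio of Nesterov's relaxation, which has already been bounded in the body of the paper. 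Thus the entire proof is an assembly step.

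Concretely, I would first check the hypotheses in the $\ell_p\to\ell_q$ case. For $1\le q\le 2\le p\le\infty$, take $\sqnormX$ to be the $\ell_{p/2}^n$ unit ball and $\sqnormY$ the $\ell_{q^*/2}^m$ unit ball; then $\sqrtnormX,\sqrtnormY$ are the unit balls of $\ell_p^n,\ell_{q^*}^m$, so the spaces $X,Y$ produced by the convexification construction are exactly $\ell_p^n$ and $\ell_q^m$. The needed exact $2$-convexity holds because $\ell_r^k$ is exactly $2$-convex for every $r\ge 2$ (an immediate consequence of the triangle inequality in $\ell_{r/2}^k$), applied to $X=\ell_p^n$ (as $p\ge2$) and to $Y^*=\ell_{q^*}^m$ (as $q^*\ge2$, including the endpoints $q^*=\infty$). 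One also checks that, after the standard Gram-matrix reformulation ($\bbX=UU^T$, $\bbY=VV^T$, $\bbW=UV^T$), the convex program $\CP{A}$ attached to these convexifications is exactly the relaxation of \cref{fig:convex} whose approximation ratio was analyzed, so the bound of \cref{defect:bound:implies:apx} together with $\nhin{a}{b}{1}\ge\sinh^{-1}(1)/(1+\eps_0)$ applies verbatim. Chaining these facts gives $\factorSpConst{\ell_p^n}{\ell_q^m}\le\threeFactorSpConst{\ell_p^n}{\ell_q^m}\le\sup_A\CPD{A}/\norm{p}{q}{A}=\sup_A\CP{A}/\norm{p}{q}{A}\le 1/(\gamma_{p^*}\gamma_q\cdot\nhin{a}{b}{1})\le(1+\eps_0)/(\sinh^{-1}(1)\,\gamma_{p^*}\gamma_q)$, which is the first inequality.

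For the second inequality I would simply invoke the stated Cotype-$2$ values: $C_2(\ell_r^k)=\max\{2^{1/r-1/2},1/\gamma_r\}$ for $r\le 2$, which in particular gives $C_2(\ell_{p^*}^n)\ge 1/\gamma_{p^*}$ and $C_2(\ell_q^m)\ge 1/\gamma_q$ (valid also at the endpoints $p=\infty$, $q=1$). Hence $1/(\gamma_{p^*}\gamma_q)\le C_2(\ell_{p^*}^n)\,C_2(\ell_q^m)$, and multiplying through by $(1+\eps_0)/\sinh^{-1}(1)$ yields the claimed bound. I do not expect a genuine obstacle: all the analytic content (the hypergeometric representation of \cref{hypergeometric:representation} and the defect bound $\nhin{a}{b}{1}\ge\sinh^{-1}(1)/(1+\eps_0)$) has already been established, and the only mildly delicate point is verifying that the relaxation appearing in the factorization section is literally the one whose integrality gap was bounded in the main theorem — which it is, since both reduce to Nesterov's program after the Gram-matrix substitution.
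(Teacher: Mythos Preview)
Your proposal is correct and follows essentially the same approach as the paper: specialize the exactly $2$-convex framework to $X=\ell_p^n$, $Y=\ell_q^m$ (via the $\ell_{p/2}$ and $\ell_{q^*/2}$ convexifications), invoke the lemma $\threeFactorConst{A}\le\CPD{A}/\norm{X}{Y}{A}$ together with strong duality and the main approximation bound, and then read off the Cotype-$2$ inequality from $C_2(\ell_r^k)\ge 1/\gamma_r$. The paper's ``proof'' is in fact just the two-sentence specialization preceding the theorem statement, so your write-up is a more explicit version of the same argument.
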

\noindent
This improves upon Pisier's bound and for a certain range of $(p,q)$, improves upon $K_G$ as well as the 
bound of \Kwapien-Maurey. 
\bigskip

Krivine and independently Nesterov\cite{Nesterov98} observed that the integrality gap of $\CP{A}$ for any 
pair of convex sets $\sqnormX,\sqnormY$ is bounded by $K_G$ (Grothendieck's constant). This provides 
a class of Banach space pairs for which $K_G$ is an upper bound on the factorization constant. 
We include a proof for completeness. 

\subsubsection[Grothendieck Bound on Approximation Ratio]{$K_G$ Bound on Integrality Gap}
\label[subsection]{KG:worst:case}
In this subsection, we prove the observation that for exactly $2$-convex $X,Y^*$, 
the integrality gap for $X \to Y$ operator norm is always bounded by $K_G$. 

\begin{lemma}
    Let $X$ be an exactly $2$-convex (sign-invariant) norm over $\R^n$ and $Y^*$ be an exactly 
    $2$-convex (sign-invariant) norm over $\R^{m}$. Then for any $A:X\to Y$, ~~
    $\CP{A}/\norm{X}{Y}{A} ~\leq ~ K_G$.
\end{lemma}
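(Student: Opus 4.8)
The plan is to round an optimal solution of $\CP{A}$ to a rank-one feasible point of the bilinear formulation of $\norm{X}{Y}{A}$ via the classical Grothendieck inequality, incurring exactly the factor $K_G$. Throughout, $X$ and $Y^{*}$ are the standing exactly $2$-convex sign-invariant norms, so $\CP{A}$ is a well-defined convex program.

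First I would put the optimal solution in Gram form: the PSD block matrix appearing in $\CP{A}$ admits a Gram decomposition, so there are matrices $U\in\R^{m\times(m+n)}$ and $V\in\R^{n\times(m+n)}$ with $\bbY=UU^{T}$, $\bbW=UV^{T}$, $\bbX=VV^{T}$; write $u^{i}$ (resp.\ $v^{j}$) for the $i$-th row of $U$ (resp.\ $j$-th row of $V$). Expanding the objective gives $\CP{A}=\mysmalldot{A}{UV^{T}}=\sum_{i,j}A_{ij}\mysmalldot{u^{i}}{v^{j}}$. By the definition of the $1/2$-convexification, the constraint $\mathrm{diag}(\bbX)=(\norm{2}{v^{j}}^{2})_{j}\in\Ball{X^{(1/2)}}$ is exactly $\norm{X}{(\norm{2}{v^{j}})_{j}}\le 1$, and likewise $\mathrm{diag}(\bbY)\in\Ball{{Y^{*}}^{(1/2)}}$ gives $\norm{Y^{*}}{(\norm{2}{u^{i}})_{i}}\le 1$. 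Rows with $u^{i}=0$ or columns with $v^{j}=0$ contribute nothing to the objective and may be discarded, so assume all $u^{i},v^{j}$ are nonzero and set $\widehat{u}^{i}:=u^{i}/\norm{2}{u^{i}}$, $\widehat{v}^{j}:=v^{j}/\norm{2}{v^{j}}$.

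Next I would apply Grothendieck's inequality \cite{Grothendieck56} to the matrix $B_{ij}:=\norm{2}{u^{i}}\,\norm{2}{v^{j}}\,A_{ij}$ together with the Hilbert-space unit vectors $\widehat{u}^{i},\widehat{v}^{j}$. Since $\sum_{i,j}B_{ij}\mysmalldot{\widehat{u}^{i}}{\widehat{v}^{j}}=\CP{A}$, there are signs $\epsilon\in\{-1,1\}^{m}$ and $\delta\in\{-1,1\}^{n}$ with
\[
    \CP{A}~\le~K_{G}\sum_{i,j}B_{ij}\,\epsilon_{i}\,\delta_{j}
    ~=~K_{G}\sum_{i,j}A_{ij}\,(\norm{2}{u^{i}}\epsilon_{i})\,(\norm{2}{v^{j}}\delta_{j})
    ~=~K_{G}\cdot y^{T}Ax,
\]
where $y:=(\norm{2}{u^{i}}\epsilon_{i})_{i}\in\R^{m}$ and $x:=(\norm{2}{v^{j}}\delta_{j})_{j}\in\R^{n}$. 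Because $X$ and $Y^{*}$ are sign-invariant, $\norm{X}{x}=\norm{X}{(\norm{2}{v^{j}})_{j}}\le 1$ and $\norm{Y^{*}}{y}=\norm{Y^{*}}{(\norm{2}{u^{i}})_{i}}\le 1$, so $(x,y)$ is feasible for the bilinear formulation of the operator norm and hence $y^{T}Ax\le\norm{X}{Y}{A}$. Combining the two bounds yields $\CP{A}\le K_{G}\,\norm{X}{Y}{A}$, which is the claim.

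No step is genuinely hard here. The two points needing care are: reading off the $X$-ball (resp.\ $Y^{*}$-ball) constraint on the vector of row $\ell_{2}$-norms from the $X^{(1/2)}$-ball (resp.\ ${Y^{*}}^{(1/2)}$-ball) constraint on the vector of their squares, and the routine handling of zero rows and columns. The single conceptual observation---already present in Grothendieck's original argument---is that rescaling the $\{\pm1\}$-valued optimum furnished by Grothendieck's inequality by the corresponding row and column $\ell_{2}$-norms lands one exactly on a feasible point of the bilinear program for $\norm{X}{Y}{A}$; this is the only place the sign-invariance of $X$ and $Y^{*}$ is used.
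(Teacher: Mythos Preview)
Your proof is correct and uses essentially the same idea as the paper: factor the solution into magnitudes times unit vectors (or signs), apply Grothendieck's inequality to the rescaled matrix $B_{ij}=\norm{2}{u^i}\norm{2}{v^j}A_{ij}$, and invoke sign-invariance so that the resulting $\pm$-weighted length vectors land in the correct unit balls. The paper runs the same argument in the opposite direction---starting from $\norm{X}{Y}{A}$, decomposing $x=|[x]|\circ\sgn[x]$, and relaxing the sign part to a PSD matrix with unit diagonal via Grothendieck---but the core observation (Equation~\eqref{eq:feasible_sets} in the paper, which you encode by the $X^{(1/2)}$-to-$X$ constraint translation) is identical.
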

\begin{proof}
Let 
$
B :=  \frac{1}{2}  \left[
    \begin{array}{cc}
    0 & A \\
    A^T & 0
    \end{array}
    \right]$.
The main intuition of the proof is to decompose $x \in X$ as $x = |[x]| \circ \sgn{[x]}$ 
(where $\circ$ denotes Hadamard/entry-wise multiplication), 
and then use Grothendieck's inequality on $\sgn{[x]}$ and $\sgn{[y]}$. Another simple observation is that 
for any convex set $\calF$, the feasible set we optimize over is invariant under {\em factoring out the 
magnitudes of the diagonal entries}. In other words, 
\begin{align}
    &\{ \Diag{d} \ \Sigma \ \Diag{d} : d \in \sqrt{|[\calF]|} \cap \R^n_{\geq 0},~ 
    \Sigma \succeq 0,~ \diag(\Sigma) = \1 \} \nonumber \\
    = 
    &\{ \bbX : \diag(\bbX) \in \calF,~ \bbX \succeq 0 \} \label[equation]{eq:feasible_sets}
\end{align}
We will apply the above fact for $\calF = \Ball{X^{(1/2)}} \oplus \Ball{{Y^*}^{(1/2)}}$. 
Let $X^+$ denote $X\cap \R_{\geq 0}^n$ (analogous for $(Y^*)^+$). 
Now simple algebraic manipulations yield 
\begin{align*}
    &\quad \norm{X}{Y}{A} \\
    =&\quad 
    \sup_{x \in X,~ y \in Y^*} (y \oplus x)^T B (y \oplus x) \\
    =&\quad 
    \sup_{\substack{d_x \in X^+,\  \sigma_x \in \{ \pm 1 \}^{n}, \\
    d_y \in (Y^*)^+, \ \sigma_y \in \{ \pm 1 \}^{m}}} 
    ((d_y \circ \sigma_y) \oplus (d_x \circ \sigma_x))^T ~B~ ((d_y \circ \sigma_y) \oplus (d_x \circ \sigma_x)) \\
    =&\quad 
    \sup_{\substack{d_x \in X^+, \ \sigma_x \in \{ \pm 1 \}^{n}, \\
    d_y \in (Y^*)^+, \ \sigma_y \in \{ \pm 1 \}^{m}}} 
    (\sigma_y \oplus \sigma_x)^T  (\Diag{d_y \oplus d_x} ~B~ \Diag{d_y \oplus d_x}) 
    (\sigma_y \oplus \sigma_x) \\
    \geq &\quad 
    (1/K_G) \cdot 
    \sup_{\substack{d_x \in X^+, \ d_y \in (Y^*)^+, \\
    \Sigma :~ \diag(\Sigma) = \1, ~\Sigma \succeq 0}}  
    \mydot{ \Sigma~}{~\Diag{d_y \oplus d_x} ~B~  \Diag{d_y \oplus d_x}} 
    &&(\text{Grothendieck})\\
    = &\quad 
    (1/K_G) \cdot 
    \sup_{\substack{d_x \in X^+, \ d_y \in (Y^*)^+, \\
    \Sigma :~ \diag(\Sigma) = \1, ~\Sigma \succeq 0}}  
    \mydot{\Diag{d_y \oplus d_x} ~\Sigma ~  \Diag{d_y \oplus d_x}~}{ ~B} \\
    =&\quad 
    (1/K_G) \cdot \CP{A} &&(\text{by \cref{eq:feasible_sets}}) 
    \qedhere
\end{align*}
\end{proof}

\bibliographystyle{alpha}
\bibliography{p-to-q-refs}

\end{document}